\documentclass[a4paper,12pt]{article}

\usepackage[a4paper,hmargin={2.5cm,1.5cm},vmargin={4cm,3cm}]{geometry}
\usepackage{makecell}
\usepackage[english]{babel}
\usepackage{eurosym}
\usepackage[utf8]{inputenc}
\usepackage{color}
\usepackage{indentfirst}
\usepackage{amsmath}
\usepackage{graphicx}
\usepackage{float}
\usepackage{rotfloat}
\usepackage{rotating}
\usepackage{adjustbox}
\usepackage{multirow}
\usepackage{amssymb}
\usepackage{indentfirst}
 \usepackage{dsfont}
 \usepackage{textcomp}
 \usepackage[T1]{fontenc}
\usepackage{lmodern}
\usepackage{pifont}
\usepackage{graphicx}

\usepackage[ruled,vlined,linesnumbered]{algorithm2e}

\newcommand{\I}{{\mathcal{I}}}

\newcommand{\mar}{{\text{Mar}}}
\newcommand{\ffrac}{{\text{frac}}}
\newcommand{\rank}{{\text{rank}}}

\newcommand{\R}{\mathbb{R}}

\newcommand{\N}{\mathcal{N}}
\newcommand{\eps}{\varepsilon}

\newcommand{\OPT}{\mathrm{OPT}}
\newcommand{\supp}{\mathrm{supp}}

\newcommand{\NN}{\mathcal{N}}

\newcommand{\Mar}{\operatorname{Mar}}

\newcommand{\symy}{{\mathbf{y}}}
\newcommand{\symx}{{\mathbf{x}}}
\newcommand{\symz}{{\mathbf{Z}}}
\newcommand{\reali}{{\text{R}}}

\newcommand{\Rank}{\mathrm{rank}}
\usepackage{amsmath}
\usepackage{amsthm}
\usepackage{amssymb}
\usepackage{textcase} 
\usepackage{dsfont}
\newtheorem{theorem}{Theorem} 
\newtheorem{lemma}[theorem]{Lemma}
\newtheorem{observation}[theorem]{Observation}
\newtheorem{definition}[theorem]{Definition}
\newtheorem{corollary}[theorem]{Corollary}
\usepackage{mathtools}

\usepackage{amsmath}
\usepackage{amsthm}
\usepackage{multirow}
\usepackage{xcolor}
\usepackage{amssymb}
\usepackage{hyperref}
\usepackage{float} 
\newcommand{\lovasz}{{Lov\'{a}sz} }
\usepackage{tikz}

\usepackage{authblk}

\title{Deterministic Algorithm for  Non-monotone Submodular Maximization under Matroid and Knapsack Constraints\thanks{Alphabetical Order. Corresponding Author: Shengminjie Chen, csmj@ict.ac.cn}}
\author[1,2]{Shengminjie Chen}
\author[1,2]{Yiwei Gao}
\author[1,2]{Kaifeng Lin}
\author[1,2]{Xiaoming Sun}
\author[1,2]{Jialin Zhang}
\affil[1]{State Key Lab of Processors, Institute of Computing Technology, Chinese Academy of Sciences, Beijing 100190, China}
\affil[2]{School of Computer Science and Technology, University of Chinese Academy of Sciences, Beijing 100049, China}

 \long\def\symbolfootnote[#1]#2{\begingroup%
\def\thefootnote{\fnsymbol{footnote}}\footnote[#1]{#2}\endgroup}

\begin{document}



\maketitle

\begin{abstract}
    Submodular maximization constitutes a prominent research topic in combinatorial optimization and theoretical computer science, with extensive applications across diverse domains. While substantial advancements have been achieved in approximation algorithms for submodular maximization, the majority of algorithms yielding high approximation  guarantees are randomized. In this work, we investigate deterministic approximation algorithms for maximizing non-monotone submodular functions subject to matroid and knapsack constraints. For the two distinct constraint settings, we propose novel deterministic algorithms grounded in an extended multilinear extension framework. For matroid constraints, our algorithm achieves an approximation ratio of $(0.385 - \eps)$, while for knapsack constraints, the proposed algorithm attains an approximation ratio of $(0.367 -\eps)$. Both algorithms run in $\mathrm{poly}(n)$ value queries, where $n$ is the size of the ground set, and improve upon the state-of-the-art deterministic approximation ratios of $(0.367 - \eps)$ for matroid constraints and $0.25$ for knapsack constraints.
\end{abstract}

\section{Introduction}
Submodularity, capturing the principle of diminishing returns, naturally arises in many functions of theoretical interest and provides a powerful abstraction for many subset selection problems such as game theory  \cite{DBLP:conf/stoc/Dobzinski11,DBLP:conf/soda/DobzinskiS06,DBLP:journals/siamco/DianettiF20}, maximal social welfare \cite{DBLP:conf/stoc/FengH0025, DBLP:conf/focs/LiV21,DBLP:journals/siamcomp/KorulaMZ18}, influence maximization \cite{DBLP:conf/stoc/MosselR07, DBLP:conf/icml/ChenSZZ21, DBLP:conf/kdd/KempeKT03}, facility location \cite{DBLP:journals/dam/AgeevS99, DBLP:journals/orl/Ageev02}, data summarization \cite{DBLP:conf/iccv/SimonSS07, DBLP:conf/cikm/SiposSSJ12, DBLP:conf/nips/TschiatschekIWB14}, etc. Maximizing submodular functions subject to various constraints, due to the $\mathcal{NP}$-hard property, is a fundamental problem that has been extensively studied in both operations research and theoretical computer science, playing a central role in combinatorial optimization and approximation algorithms. Formally, for a ground set $\N$, the submodular function $f: 2^{\N} \rightarrow \R$ defined on $\N$ satisfies the following property: for any subset $A \subseteq B \subseteq \N$ and any element $u \in \N \setminus B$,
\[
f(A \cup \{u\}) - f(A) \ge f(B \cup \{u\}) - f(B).
\]
Equivalently, a function $f$ is submodular if and only if for any sets $S, T \subseteq \N$,
\[ 
f(S) + f(T) \ge f(S \cup T) + f(S \cap T).
 \]

The systematic study of submodular maximization dates back to 1978, when Nemhauser et al. \cite{DBLP:journals/mp/NemhauserWF78} abstracted the concept of submodularity from various combinatorial problems and established the $1 - 1/e$ approximation ratio for monotone functions via a greedy algorithm. Since then, numerous combinatorial algorithms have been developed, particularly for
cardinality constraints~\cite{DBLP:journals/iandc/Skowron17,DBLP:journals/mor/NemhauserW78,DBLP:conf/soda/BuchbinderFNS14},
knapsack constraints~\cite{DBLP:journals/orl/Sviridenko04,DBLP:journals/mor/KulikST13,DBLP:conf/nips/AmanatidisFLLR20,DBLP:journals/algorithmica/FeldmanNS23,DBLP:journals/orl/KulikSS21},
matroid constraints~\cite{Fisher1978,DBLP:journals/siamcomp/BuchbinderFG23,DBLP:conf/icalp/HenzingerLVZ23},
and the unconstrained case~\cite{DBLP:journals/siamcomp/BuchbinderFNS15,DBLP:conf/nips/PanJGBJ14}.
 A significant advancement came with the introduction of the multilinear extension framework by Vondrák \cite{DBLP:conf/stoc/Vondrak08}. This framework extends a discrete submodular function to a continuous multilinear function $F_{ME} : [0,1]^{\N} \to \mathbb{R}$,
\[
F_{ME}(\mathbf{x}) = \mathbb{E}[f(R(\mathbf{x}))] = \sum_{S\subseteq \N} f(S) \prod_{i \in S} x_i \prod_{j \notin S} (1-x_j),
\]
where $R(\mathbf{x})$ is a random subset of $\N$ in which each element $i \in \N$ is included independently with probability $x_i$. By applying a Frank-Wolfe-like continuous optimization algorithm, a fractional solution is obtained, which is then converted into a discrete solution via a rounding procedure such as Pipage Rounding \cite{DBLP:journals/siamcomp/CalinescuCPV11} and Swap Rounding \cite{DBLP:conf/focs/ChekuriVZ10}. This approach achieves an approximation ratio $1 - 1/e$ for monotone functions under matroid constraints. Crucially, the optimization component of this framework generally requires only that the constraint be a down-closed convex polytope. Given that lossless rounding techniques exist for cardinality and matroid constraints,
and that knapsack constraints admit nearly lossless rounding via
contention resolution schemes~\cite{DBLP:journals/siamcomp/ChekuriVZ14},
the multilinear extension framework supports a versatile algorithmic design
that is not tied to specific constraint types. Subsequent research has continuously advanced the approximation ratio for non-monotone submodular maximization, from the long-standing $1/e( \approx 0.367) $ \cite{DBLP:conf/stoc/Vondrak08} to $(0.372 - \eps)$ \cite{DBLP:conf/focs/EneN16}, $(0.385 - \eps)$ \cite{DBLP:journals/mor/BuchbinderF19}, and $(0.401 - \eps)$ \cite{DBLP:conf/stoc/BuchbinderF24}. Closing the remaining gap between these results and the theoretical upper bound of $0.478$ \cite{DBLP:conf/soda/GharanV11} remains a significant open challenge in the field.
Computing the value of the multilinear extension exactly typically requires exponentially many function evaluations, and random sampling is widely regarded as the only polynomial-query approach for estimating its value. However, such randomness can be undesirable in applications where deterministic outcomes are required, such as safety-critical or reproducibility-sensitive environments. Randomized algorithms provide  guarantees in expectation but not necessarily in the worst case, making them less predictable in certain scenarios. 

Designing deterministic algorithms for submodular maximization, which returns the same solution on every run and guarantees an approximation ratio in the worst case rather than only in expectation, has emerged as an important and active research direction in recent years. Broadly speaking, deterministic algorithms for submodular maximization follow two main paradigms:
\textbf{(i)} designing algorithms that are deterministic by construction, and
\textbf{(ii)} derandomizing existing randomized algorithms while preserving their approximation  guarantees.

Under the first paradigm, many classical algorithms for \emph{monotone} submodular maximization are inherently deterministic.
Representative examples include greedy-like algorithms, threshold-based methods, and discrete local search algorithms.
The standard greedy algorithm achieves a $(1-1/e)$-approximation under a cardinality constraint~\cite{DBLP:journals/mp/NemhauserWF78}
and a $1/2$-approximation under matroid constraints~\cite{Fisher1978},
while Sviridenko proposed a $(1-1/e)$-approximation under a knapsack constraint~\cite{DBLP:journals/orl/Sviridenko04}.
In addition, threshold decreasing algorithms, which iteratively lower a marginal-gain threshold and select elements exceeding the current threshold,
provide a deterministic and oracle-efficient alternative to the standard greedy approach,
and achieve near-optimal approximation guarantees with significantly improved running time~\cite{DBLP:conf/soda/BadanidiyuruV14}. Buchbinder et al.~\cite{DBLP:journals/siamcomp/BuchbinderFG23} showed that
for monotone submodular maximization under a matroid constraint, the
\emph{split-and-grow} algorithm enables deterministic algorithms to
surpass the $1/2$ barrier, achieving a $(1/2+\varepsilon)$-approximation.
For \emph{non-monotone} objectives, deterministic local search algorithms were systematically studied by Feige et al.~\cite{DBLP:conf/focs/FeigeMV07},
establishing constant-factor guarantees for unconstrained submodular maximization.
Beyond these general frameworks, several algorithms have been explicitly designed to achieve strong deterministic guarantees.
Notably, the twin greedy algorithm proposed by Han et al.~\cite{DBLP:conf/nips/HanCCW20}
achieves a $1/4$-approximation for non-monotone submodular maximization under matroid constraints,
which was later extended to knapsack constraints by Sun et al.~\cite{DBLP:journals/tcs/SunZZZ24}
while preserving the same approximation ratio.
To the best of our knowledge, this remains the strongest known deterministic guarantee
for non-monotone submodular maximization under knapsack constraints.

The second paradigm focuses on \emph{derandomizing previously randomized algorithms},
and has witnessed substantial progress in recent years.
Early examples include the random greedy algorithm for cardinality constraints~\cite{DBLP:conf/soda/BuchbinderFNS14}
and the double greedy algorithm for unconstrained non-monotone maximization~\cite{DBLP:journals/siamcomp/BuchbinderFNS15},
both of which were later systematically derandomized by Buchbinder and Feldman~\cite{DBLP:conf/soda/BuchbinderF16}.
More recently, Buchbinder and Feldman~\cite{DBLP:conf/focs/BuchbinderF24}
derandomized the non-oblivious local search algorithm,
obtaining a deterministic $(1-1/e-\eps)$-approximation for monotone submodular maximization under matroid constraints.
In addition, Chen et al.~\cite{NEURIPS2024_c4e40d31}
derandomized the measured continuous greedy algorithm~\cite{DBLP:journals/mor/BuchbinderF19},
achieving approximation ratios of $0.385-\eps$ under cardinality constraints
and $0.305-\eps$ under matroid constraints.
Very recently, Buchbinder et al.~\cite{DBLP:conf/stoc/BuchbinderF25}
introduced the \emph{Extended Multilinear Extension} (EME),
a novel derandomization framework that maintains a distribution with constant-size support
and enables exact evaluation without random sampling.
Using this framework, they further obtained a deterministic $(1/e-\eps)$-approximation
for non-monotone submodular maximization under matroid constraints.



While the Extended Multilinear Extension framework offers a promising avenue
for derandomizing submodular maximization algorithms, exploiting this
framework in broader settings presents substantial technical challenges.
At a high level, the power of EME critically relies on maintaining a
distribution whose fractional support value remains constant throughout the
algorithm. This structural requirement fundamentally alters the nature
of the optimization process: unlike classical multilinear extension
methods, the optimization under EME can no longer be carried out via
purely continuous procedures.
Instead, each update step must carefully combine continuous adjustments
with embedded combinatorial subroutines that preserve the bounded support
property. Designing such hybrid steps is  nontrivial, as standard
continuous optimization techniques are no longer applicable. Moreover,
the existing derandomization approach for EME is tightly coupled with the
exchange properties of matroids, and the resulting algorithmic framework
relies heavily on matroid-specific combinatorial structures. As a
consequence, these techniques do not readily extend to other algorithms and constraint
families.
For more general combinatorial constraints, additional difficulties
arise: one must design new rounding procedures that are compatible with
the EME framework while simultaneously guaranteeing that the support size
remains constant. Balancing these requirements poses a central obstacle
to extending EME-based derandomization beyond the matroid setting.  These
challenges naturally raise the following question:

\emph{Can we leverage the EME framework to derandomize continuous optimization algorithms and achieve improved approximation ratios under different combinatorial constraints?}

\paragraph{Our Contribution.}
In our work, we consider the problem of maximizing a non-monotone submodular function subject to constraints of the form
\[
\max \{ f(S) : S \in \mathcal{C} \}
\]
where $\mathcal{C}$ denotes the constraint of the submodular maximization problem, that is, the collection of feasible sets:
\begin{itemize}
\item \textbf{Matroid constraints.} 
A \emph{matroid} $\mathcal{M} = (\mathcal{N}, \mathcal{I})$ is a set system, where $\mathcal{I} \subseteq 2^{\mathcal{N}}$ is a family of subsets of the ground set $\mathcal{N}$ satisfying the following properties: (i) $\emptyset \in \mathcal{I}$; (ii) If $A \subseteq B$ and $B \in \mathcal{I}$, then $A \in \mathcal{I}$; (iii) If $A, B \in \mathcal{I}$ with $|A| < |B|$, there exists an element $u \in B \setminus A$ such that $A + u \in \mathcal{I}$. The members of $\mathcal{I}$ are called \emph{independent sets}, and the maximal independent sets are called \emph{bases}. All bases of a matroid have the same cardinality, which is referred to as the \emph{rank} of $\mathcal{M}$. For a subset $S \subseteq \mathcal{N}$ (not necessarily independent), its \emph{rank} is defined as the size of its maximum independent subset, i.e.,
\(
\Rank(S) = \max_{T \subseteq S,\; T \in \mathcal{I}} |T|.
\)
The \emph{span} of a set $S \subseteq \mathcal{N}$ is defined as the set of elements $u \in \mathcal{N}$ such that $r=\Rank(S \cup \{u\}) = \Rank(S)$.

\item \textbf{Knapsack constraints.} Each element $u \in \mathcal{N}$ is associated with a non-negative cost $w(u)$, and a set $S \subseteq \mathcal{N}$ is feasible if
    $
    \sum_{u \in S} w(u) \le B,
    $
    where $B > 0$ denotes the knapsack budget. For simplicity, We denote the weight of a set $S$ by $w(S) =  \sum_{u \in S} w(u)$.
\end{itemize}

We firstly address these challenges by derandomizing the aided continuous greedy algorithm under matroid constraints. Our approach exploits a discrete stationary point to aid the optimization trajectory and performs a piecewise optimization over the EME, thereby maintaining a constant-size support throughout the process.
However, extending the convergence analysis to the EME setting is nontrivial. Existing convergence analyses of the aid measured continuous greedy algorithm rely on the Lovász extension as a lower bound of the multilinear extension ~\cite{DBLP:journals/mor/BuchbinderF19} or on the directional concavity of the multilinear extension of a submodular function~\cite{DBLP:journals/corr/abs-2307-09616}. These properties do not carry over to the EME, which implies us adopt a new analysis process. When combined with the rounding scheme of Buchbinder et al.~\cite{DBLP:conf/stoc/BuchbinderF25}, our framework yields a deterministic $(0.385-\varepsilon)$-approximation. This strictly improves upon the previously best-known deterministic guarantee of $(1/e-\varepsilon)$ under matroid constraints, which was also achieved in \cite{DBLP:conf/stoc/BuchbinderF25}. The formal statement is as follows:

\begin{theorem}
    \label{matroid_main}
    \label{mainthm}
Given a non-negative submodular function $f:2^\NN\rightarrow \R_{\ge 0}$ and a matroid $\mathcal{M} = (\mathcal{N}, \mathcal{I})$, Algorithm~\ref{alg:matroid_main} is a deterministic algorithm that uses {$O_{\eps}(n^5)$} queries,
returns a set $S$ satisfying the matroid constraint $\mathcal{M} = (\mathcal{N}, \mathcal{I})$, and achieves $f(S)\ge (0.385-O(\eps))f(\OPT)$.
\end{theorem}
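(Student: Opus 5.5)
The plan is to derandomize the aided measured continuous greedy of Buchbinder and Feldman inside the Extended Multilinear Extension (EME) framework, and then round with the EME-compatible matroid rounding of Buchbinder et al.~\cite{DBLP:conf/stoc/BuchbinderF25}. The argument has three stages.

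\textbf{Stage 1: a deterministic aid set.} First I would compute a set $Z\in\mathcal{I}$ by deterministic discrete local search, using approximate single swaps with $(1+\eps/n^2)$ improvement thresholds, so that polynomially many queries suffice. I would then record the standard local-search inequalities at $Z$, in the form that every $T\in\mathcal{I}$ satisfies
\[
2f(Z)\ \ge\ f(Z\cup T)+f(Z\cap T)-O(\eps)f(\OPT).
\]
This replaces the fractional local optimum used in the randomized analysis. If $f(Z)$ is already large, we output $Z$.

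\textbf{Stage 2: the EME-based continuous phase.} Next I would run a piecewise continuous greedy on the EME. The state is a distribution over sets, with support of size $O_\eps(1)$. Time runs from $0$ to $1$ in steps of size $\delta=\eps^{c}$. During $[0,t_s]$ with $t_s\approx 0.372$, directions are restricted to avoid $Z$; during $[t_s,1]$ they are unrestricted. At each step I would find a base $B$ maximizing a linear proxy of the marginal gains with respect to the current EME point, using matroid greedy with exact marginals, which is possible because the support is constant. I would then apply the measured update: each support set $S$ moves toward $S\cup B$ at rate $\delta$, with discounting $(1-\delta)$. To keep the support constant, I would merge the new sets using the exchange/aggregation subroutine of~\cite{DBLP:conf/stoc/BuchbinderF25}, which preserves the EME value up to $O(\delta^2)f(\OPT)$.

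\textbf{Stage 3: the potential analysis.} The analysis cannot use the Lovász extension lower bound or directional concavity, so I would work directly with set-level submodularity inside each support set. For a support set $S$ and a comparison set $\OPT$ or $\OPT\setminus Z$, submodularity gives
\[
\sum_{u\in B} f(u\mid S)\ \ge\ f(S\cup \OPT')-f(S)
\]
by choice of $B$ against the base containing $\OPT'$. Averaging over the support then yields a one-step recurrence for $F(t)$. I would combine this with two facts. The first is the measured bound: each element's marginal probability is at most $1-e^{-t}$, or $1-e^{-(t-t_s)}$ for elements of $Z$. Since the EME distribution is not a product distribution, I would use the sampling lemma for arbitrary distributions with bounded marginals, which gives $\mathbb{E}f(S\cup\OPT')\ge(1-\max_u\Pr[u\in S])f(\OPT')$. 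The second is the Stage~1 inequalities, applied to $f(\OPT\cap Z)$ and $f(\OPT\cup Z)$. Solving the resulting differential inequalities, with discretization errors $O(\delta)$ summed over $1/\delta$ steps, reproduces the Buchbinder--Feldman bound
\[
\max\{f(Z),F(1)\}\ \ge\ (0.385-O(\eps))f(\OPT).
\]

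\textbf{Rounding and query count.} The final EME point lies in the matroid polytope with constant support. The deterministic rounding of~\cite{DBLP:conf/stoc/BuchbinderF25} then produces an independent $S$ with $f(S)\ge F(1)-O(\eps)f(\OPT)$. For the query count, local search costs $O(n^3)$ iterations of $O(n^2)$ swaps. The continuous phase costs $O_\eps(1)$ steps, each evaluating $O(n)$ marginals per support set, plus merging. Both are within $O_\eps(n^5)$.

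\textbf{Main obstacle.} I expect the hardest step to be Stage~3: proving that the merging step, which keeps the support constant, does not destroy the marginal-probability bounds. In particular, the bound on elements of $Z$ must survive even though merging correlates elements. My first attempt would be to show that the merge only forms unions and intersections within pairs of sets, which preserves each element's marginal probability exactly. The recurrence would then go through verbatim.
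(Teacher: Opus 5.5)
\textbf{There is a genuine gap in Stages~2--3, at the one-step progress bound.}

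Your update moves each support set $S$ to $S\cup B$ with probability $\delta$. That is, it adds the whole base as one correlated block, $\symy\gets\symy\oplus\delta\,\mathbf{e}_B$. The exact gain of such a step is $\delta\,\mathbb{E}[f(S\cup B)-f(S)]$. Submodularity bounds this from \emph{above} by $\delta\sum_{u\in B}\mathbb{E}[f(u\mid S)]$. So the inequality $\sum_{u\in B}f(u\mid S)\ge f(S\cup\OPT')-f(S)$, which you derive from the choice of $B$, says nothing about the progress actually made.

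A concrete failure: take a uniform matroid of rank $r$ on $\{b_1,\dots,b_r,o_1,\dots,o_r\}$ and a coverage function. Let all $b_i$ cover a single item of weight $1+\eta$, and let the $o_i$ cover $r$ distinct items of weight $1$. At $\symy=\mathbf{0}$ the linear proxy selects $B=\{b_1,\dots,b_r\}$ and the step gains only $\delta(1+\eta)$, while your recurrence needs roughly $\delta r$.

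Adding the elements of $B$ as separate coordinates would restore the linearization. But then $r$ new coordinates enter per step, and you rely on a support-compressing merge that loses only $O(\delta^2)f(\OPT)$. No such subroutine is available in \cite{DBLP:conf/stoc/BuchbinderF25}. \textsc{Relax} consolidates the coordinates containing one fixed element, never decreases $F$, and may increase $\ffrac$ by one; it does not shrink the support. The uncrossing you suggest (replacing $S,T$ by $S\cup T,S\cap T$) preserves marginals, but by submodularity it can only \emph{decrease} $\mathbb{E}f$. So the step you flag as the main obstacle is genuinely unresolved, and the whole argument rests on it.

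\textbf{The missing idea is the \textsc{Split} procedure (Algorithm~\ref{alg2}).} It makes correlated block additions nearly lossless and makes merging unnecessary. At each step the paper builds a base greedily for $g(S)=F(\mathbf{e}_S\vee\symy)$, penalized as $g_{-\symz}$ before $t_s$ so that $\symz$ is never picked. Each new element is assigned to whichever of $\ell=1/\eps$ parts $T_j$ maximizes $g(u\mid T_j)$, and then $\symy\gets\symy\oplus\delta\sum_j\mathbf{e}_{T_j}$.

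Lemma~\ref{lemmaspliteq} is proved via a matroid exchange bijection and Lemma~\ref{randomset}, applied to the random set $T_J\cup O$ with $J$ uniform, whose marginals outside $O$ are at most $1/\ell$. It gives $\sum_j g(T_j\mid\emptyset)\ge(1-2\eps)g(O)-(1-\eps)g(\emptyset)$, which yields Lemma~\ref{lemmaofSPLIT}. Since $\delta=\eps^3$ and only $1/\eps$ coordinates are added per step, $\supp(\symy)\le\eps^{-4}$ throughout, with no merging at all.

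With this in place, the rest of your plan matches the paper: the local-search inequalities, the Buchbinder--Feldman recurrence and parameters, and \textsc{Deterministic-Pipage}. One further correction is needed. The coarse bound $(1-\max_u\Pr[u\in S])f(\OPT')$ is too weak. You need to split the Lov\'asz-extension integral at two marginal caps, one for all elements and a smaller one for elements of $\symz$ (Lemma~\ref{lemmaofincreament}). This produces the $\max\{0,f(A)-f(A\cup\symz)\}$ term that the recurrence uses.
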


For knapsack constraints, the lack of matroid exchange properties
necessitates a different and more delicate approach. We begin by
enumerating a set of $O(1/\varepsilon^{2})$ elements, which allows us to
assume without loss of generality that all remaining elements are
\emph{small} enough. Under this small-element assumption, we develop an
optimization algorithm over the EME that is guided by element densities.
This density-based optimization admits a provable approximation guarantee
on the EME solution and can be implemented while maintaining a
constant-size support.

We then apply a rounding procedure inspired by Pipage Rounding, which
repeatedly selects two non-integral elements in the support of the EME
and moves along a suitable convex direction to integralize one of them. Unlike the matroid setting,
this process may leave at most one element in a fractional state.
However, since this element is guaranteed to be small, feasibility can be
ensured by executing the optimization step with a reduced knapsack
capacity of $(1-\varepsilon)$, thereby reserving sufficient slack to
accommodate the final rounding. Overall, this framework yields a
deterministic $(1/e-\varepsilon)$-approximation for submodular
maximization under a knapsack constraint, substantially improving upon the $1/4$ approximation ratio of twin greedy  methods.




\begin{theorem}
    \label{knapsack_main}
Given a non-negative submodular function $f:2^\NN\rightarrow \R_{\ge 0}$ and a weight function $w:\NN \rightarrow \R_{\ge 0}$, Algorithm ~\ref{Knapsack:whole} is a deterministic algorithm that uses {$O_{\eps}(n^{\eps^{-2}})$} queries, returns a set $S$ satisfying the knapsack constraint $\sum_{u\in S}w(u)\le B$, and achieves $f(S)\ge (1/e-O(\eps))f(\OPT)$.
\end{theorem}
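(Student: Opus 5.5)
The plan is to assemble the guarantee from three components: guessing the large elements, a density-guided continuous optimization over the EME, and a pipage-style rounding that leaves at most one fractional small element.

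\textbf{Enumeration.} Enumerate every set $E\subseteq\NN$ with $|E|\le \eps^{-2}$; this accounts for the $O_\eps(n^{\eps^{-2}})$ query bound. For the correct guess, $E$ is the set of the $\eps^{-2}$ elements of $\OPT$ of largest weight. Every remaining element of $\OPT\setminus E$ then has weight at most $\eps^2 B$, since otherwise $\OPT$ would weigh more than $B$. Restrict attention to the contracted function $g(S)=f(S\cup E)$, which is still non-negative submodular. Discard every element $u\notin E$ with $w(u)>\eps^2 B$, and work with the residual budget $B'=B-w(E)$. Now every surviving element is small, and $\OPT'=\OPT\setminus E$ is feasible for $B'$ with $g(\OPT')=f(\OPT)$.

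\textbf{Continuous phase on the EME.} Run a measured continuous greedy on the EME of $g$ over the knapsack polytope scaled to $(1-\eps)B'$. The time step is $\delta=\eps^{c}$ and there are $1/\delta$ rounds.
\begin{itemize}
\item \emph{Direction.} In each round, the direction is a fractional solution of the knapsack LP $\max\sum_u y_u\,\partial_u$ subject to $\sum_u w(u)y_u\le(1-\eps)B'$, where $\partial_u$ is the marginal of $u$ weighted by $(1-x_u)$ and computed exactly from the constant-size support. Greedy by density solves this LP, with at most one fractional element.
\item \emph{Update.} Apply the update in the EME fashion: each support set $S$ either is kept, or is replaced by $S$ plus the greedy prefix. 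This at most doubles the support, so the support is re-compressed to constant size by a merging step whose loss I would control as in~\cite{DBLP:conf/stoc/BuchbinderF25}.
\item \emph{Analysis.} I expect the standard argument $\frac{d}{dt}G\ge (1-\eps)e^{-t}g(\OPT')-G$ to carry over, which yields $G(1)\ge(1/e-O(\eps))g(\OPT')$. The key ingredients are two facts: each support set $S$ gains at least $g(S\cup\OPT')-g(S)$, up to the density LP being a relaxation of $\OPT'$; and under measured updates every element appears with probability at most $1-(1-\delta)^{t/\delta}$, which gives $\mathbb{E}[g(S\cup\OPT')]\ge e^{-t}g(\OPT')-O(\eps)$.
\end{itemize}

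\textbf{Rounding.} Round via pipage on the EME. While two support elements $u,v$ are fractional, move along the direction $(w(v),-w(u))$, which keeps the total weight fixed. In the EME, the value along this line is convex in the step, since it is a multilinear form in the two coordinates with non-positive cross term by submodularity. Hence one of the two endpoints does not decrease the value, and each step integralizes one element.
\begin{itemize}
\item \emph{Termination.} At the end there is a distribution over sets, each containing at most one fractional element of weight at most $\eps^2 B$. Choose the best set in the support, which is computed deterministically. Its weight is at most $(1-\eps)B'+\eps^2 B'$ per set, adjusted by the averaging, so it is feasible, and we output $E\cup S$.
\item \emph{Fractional element.} If per-set weights vary, I would instead enforce the knapsack bound per support set during the continuous phase. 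This is possible since each update adds a prefix of weight at most $\delta(1-\eps)B'$ plus one small element.
\end{itemize}

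\textbf{Main obstacle.} I expect the main difficulty to be the support compression during the continuous phase. Keeping the support constant while preserving both the value bound and the per-set knapsack feasibility needs a merging rule that, unlike the matroid exchange arguments, only uses convexity along weight-preserving directions. I would try the same pipage-type convexity argument there, applied inside each step.
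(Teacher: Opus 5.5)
Your proposal has a genuine gap in the continuous phase and a second one in the enumeration.

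\textbf{Continuous phase.} You choose the direction by solving the linear knapsack LP on the singleton marginals $\partial_u$, then add the whole greedy prefix $T$ as one block, i.e.\ $\mathbf{y}\gets\mathbf{y}\oplus\delta\mathbf{e}_T$. That step gains $\delta\,\mathbb{E}[g(T\mid R(\mathbf{y}))]$. The LP only certifies that $\sum_{u\in T}\mathbb{E}[g(u\mid R(\mathbf{y}))]$ is large, and submodularity bounds the block gain by this sum from above, not from below. For example, if the elements of $T$ are mutually redundant, the block gain is about a $1/|T|$ fraction of the LP value. In the ordinary multilinear extension this is harmless, because elements enter independently with probability $\delta$. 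In the EME, elements sharing a coordinate enter together, so your claim that each support set gains about $g(S\cup\OPT')-g(S)$ does not follow.

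The missing ingredient is \textsc{KnapsackSplit}. It spreads the greedy-by-density selection over $\ell=1/\eps$ disjoint parts, always adding the pair $(u,j)$ maximizing $g(u\mid T_j)/w(u)$, so interactions inside each part are accounted for. Lemma~\ref{lemma:knapsack_split} replaces the matroid bijection by an interval pairing of $Q$ against the insertion order of $T$. It gives $\sum_j g(T_j\mid\emptyset)\ge\frac1\ell\sum_j\bigl(g(Q\cup T_j)-g(T_j)\bigr)-\eps^2 f(\OPT)$. Lov\'asz-extension averaging over the disjoint $T_j$ then yields $(1-O(\eps))g(\OPT')-\eps\sum_j g(T_j)$. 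Each $T_j$ becomes its own coordinate with value $\delta$, so $\mathrm{frac}(\mathbf{y})$ grows by at most $1/\eps$ per round and stays at most $\eps^{-4}$. No merging is ever needed. Even in your own scheme, the support after $1/\delta$ doublings is $2^{1/\delta}=O_\eps(1)$, so the compression obstacle you single out is not the real one.

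\textbf{Enumeration.} Guessing the $\eps^{-2}$ heaviest elements of $\OPT$ makes the rest of $\OPT$ small relative to $B$, namely $w(o)\le\eps^2B$. But the slack you reserve is $\eps B'$ with $B'=B-w(E)$, and $B'$ can be much smaller than $\eps B$ exactly when $E$ is heavy. Then the single fractional element left by the rounding, of weight up to $\eps^2B$, need not fit. Your bound $(1-\eps)B'+\eps^2B'$ silently replaces $B$ by $B'$. Discarding everything heavier than $\eps B'$ would restore feasibility, but it may throw away up to $1/\eps$ elements of $\OPT\setminus E$ whose values a weight-based guess does not control.

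The paper guesses $E$ by value instead (Lemma~\ref{lemma:enumerate_basic}: add $o\in\OPT$ while $f(o\mid E)\ge\eps^2 f(\OPT)$). Then every element of $\OPT\setminus E$ has marginal at most $\eps^2 f(\OPT)$. This pays both for dropping the at most $1/\eps$ elements of weight at least $\eps(B-w(E))$ and for the partially covered element in the Split lemma.

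\textbf{Rounding.} Your pipage step presupposes \textsc{Relax}. In the EME an element may lie in many set coordinates. Only after folding these into the singleton coordinate $\{u\}$, which preserves $\Mar$ and does not decrease $F$, does the move along $(w(v),-w(u))$ make sense. After all elements are relaxed, what remains is a single set plus at most one fractional element, not a distribution over sets.
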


\begin{table}[]
\centering
\caption{Comparison of algorithms for non-monotone submodular maximization under matroid and knapsack constraints.
The table includes state-of-the-art results, our deterministic algorithms, and  algorithms we derandomize (marked with $\dagger$).}

\begin{tabular}{lllll}

\Xhline{1.5pt}
Constraints                & Approximation Ratio & Query Complexity          & Type                            & Reference                   \\ \hline
                           & $0.385 - \eps$      & $O_{\eps}(n^{11})$          &                                 & \cite{DBLP:journals/mor/BuchbinderF19}{$^\dagger$}                     \\ 
                           & $0.401 - \eps$      & $\mathrm{poly}(n,1/\eps)$           & \multirow{-2}{*}{Random}        & \cite{DBLP:conf/stoc/BuchbinderF24}                          \\ \cline{2-5}
                           & $0.305 -\eps$             & $O_{\eps}(nk)$            &                                 & \cite{NEURIPS2024_c4e40d31}                           \\ 
                           & $0.367 - \eps$      & $O_{\eps}(n^5)$           &                                 & \cite{DBLP:conf/stoc/BuchbinderF25}                           \\ 
\multirow{-5}{*}{Matroid}  & $0.385 - \eps$      & $O_{\eps}(n^5)$           & \multirow{-3}{*}{Deterministic} & {\textbf{Ours}} \\ \hline
                           & $0.367 - \eps$      & $O_{\eps}(n^{\eps^{-2}})$ &                         & \cite{DBLP:conf/focs/FeldmanNS11}{$^\dagger$}                          \\ 
                           & $0.401 - \eps$      & $O_{\eps}(n^{\eps^{-2}})$ & \multirow{-2}{*}{Random}                                & \cite{DBLP:conf/stoc/BuchbinderF24}                           \\ \cline{2-5}
                           & $0.25$              & $O(n^4)$                    &                                 & \cite{DBLP:journals/tcs/SunZZZ24}                           \\
\multirow{-4}{*}{Knapsack} & $0.367 - \eps$      & $O_{\eps}(n^{\eps^{-2}})$ & \multirow{-2}{*}{Deterministic} & {\textbf{Ours}} \\ \Xhline{1.5pt}
\end{tabular}
\end{table}
\paragraph{Organization.}
Our paper is organized as follows: in Section~2, we introduce basic notations and review the definition and fundamental properties of the extended multilinear extension. In Section~3, we present our algorithm for the matroid constraint: Section~3.1 describes the discrete local search procedure, Section~3.2 explains how a local search stationary point guides the Deterministic aided Continuous Greedy algorithm, and Section~3.3 analyzes the performance of the overall algorithm. In Section~4, we present our algorithm for the knapsack constraint: Section~4.1 begins with the enumeration of elements, Section~4.2 introduces a deterministic continuous greedy algorithm for the knapsack constraint, Section~4.3 presents our rounding procedure, and Section~4.4 analyzes the performance of the entire algorithm. Finally, in Section~5, we summarize our work and discuss possible directions for future research.
\section{Preliminary}
\label{sec:pre}
In this section, we first introduce some basic definitions and essential lemmas adopted in our algorithm design. Without  ambiguity, we denote $n=|\N|$ and use $\OPT$ to represent the optimal feasible solution. We do not explicitly consider how the submodular function $f$ is computed; instead, we assume access to a \emph{value oracle} that returns $f(S)$ for any set $S \subseteq \mathcal{N}$ in $O(1)$ time. In addition, for problems under matroid constraints, we assume access to a \emph{membership (or independence) oracle} that determines whether a given set is independent in $O(1)$ time. The total number of calls to the value oracle and the membership oracle made by an algorithm is referred to as its \emph{query complexity}.
\paragraph{Polytope of Constraints.}
Algorithms based on the continuous optimization tool are typically executed over a feasible region, whose form depends on the underlying constraints and is referred to as the \emph{polytope} associated with the corresponding constraint.
The matroid polytope $\mathcal{P}(\mathcal{M}) \subseteq [0, 1]^{\N}$ is defined as:
\begin{align*}
    \mathcal{P}(\mathcal{M}) = \text{conv}\left\{ \mathbf{1}_{S} \mid S \in \mathcal{I} \right\}
\end{align*}
where $\mathbf{1}_{S} \in [0, 1]^{\N}$ is a vector such that for any $u \in S$, $(\mathbf{1}_{S})_u = 1$, and for any $u \in \N \setminus S$, $(\mathbf{1}_{S})_u = 0$. The knapsack polytope is defined as 
\[ 
\mathcal{P}(B) = \left\{  x \in [0,1]^{\N}:\sum_{u \in \N}  w(u) \cdot  x_u \le B \right\}
 \]
\paragraph{Extended Multilinear Extension.}
The paper from \cite{DBLP:conf/stoc/BuchbinderF25} proposed an extension of the multilinear extension to address the issue that the standard multilinear extension cannot yield deterministic algorithms. For a submodular function $f: 2^{\mathcal{N}} \rightarrow \mathbb{R}$, its extended multilinear extension $F: [0, 1]^{2^{\mathcal{N}}} \rightarrow \mathbb{R}$ is defined as:
\begin{align*}
    F(\symy) = \sum_{\mathcal{J} \subseteq 2^{\mathcal{N}}} \left( f\left( \bigcup_{S \in \mathcal{J}} S \right) \cdot \prod_{S \in \mathcal{J}} \symy_S \cdot \prod_{S \notin \mathcal{J}} (1 - \symy_S) \right)
\end{align*}

Below we introduce some notation and properties related to the extended multilinear extension. All the  properties can be found in \cite{DBLP:conf/stoc/BuchbinderF25}.

\begin{definition}[Random set]
    For a vector $\symy \in [0, 1]^{2^{\mathcal{N}}}$, define the random set $\reali(\symy, S)$, where $\Pr[\reali(\symy, S) = S] = \symy_S$, otherwise $\reali(\symy, S) = \emptyset$. Furthermore, define $\reali(\symy) = \bigcup_{S \subseteq \mathcal{N}} \reali(\symy, S)$. 
\end{definition}
\begin{observation}
    For any $\symy\in[0,1]^{2^{\N}}$ $F(\symy) = \mathbb{E} [f(\reali(\symy))]$.
\end{observation}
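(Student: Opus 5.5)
The plan is to expand $F(\symy)$ as an expectation over the independent coin flips that define $\reali(\symy)$. For each $S\subseteq\N$, let $X_S$ be the indicator of the event $\reali(\symy,S)=S$. By the definition of the random set, these indicators are independent Bernoulli variables with $\Pr[X_S=1]=\symy_S$, and otherwise $\reali(\symy,S)=\emptyset$. The whole collection of outcomes $(X_S)_{S\subseteq\N}$ is therefore determined by the family $\mathcal{J}=\{S: X_S=1\}\subseteq 2^{\N}$. By independence, each fixed $\mathcal{J}$ occurs with probability
\[
\prod_{S\in\mathcal{J}}\symy_S\prod_{S\notin\mathcal{J}}(1-\symy_S).
\]

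The next step is to identify the realized set on the event that a particular $\mathcal{J}$ occurs. There $\reali(\symy,S)=S$ for $S\in\mathcal{J}$ and $\reali(\symy,S)=\emptyset$ for $S\notin\mathcal{J}$. Hence
\[
\reali(\symy)=\bigcup_{S\subseteq\N}\reali(\symy,S)=\bigcup_{S\in\mathcal{J}}S,
\]
since the empty sets contribute nothing to the union.

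Then apply the law of total expectation, partitioning the probability space by the value of $\mathcal{J}$. The underlying space is finite, with $2^{2^{|\N|}}$ atoms, so no convergence issues arise. This gives
\[
\mathbb{E}[f(\reali(\symy))]=\sum_{\mathcal{J}\subseteq 2^{\N}}f\Big(\bigcup_{S\in\mathcal{J}}S\Big)\prod_{S\in\mathcal{J}}\symy_S\prod_{S\notin\mathcal{J}}(1-\symy_S),
\]
which is exactly the defining formula for $F(\symy)$.

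There is no real obstacle. The only point needing care is that the definition of $\reali(\symy,S)$ implicitly assumes the choices for distinct $S$ are made independently; I will state this explicitly, since the product formula depends on it. One boundary case should also be checked: the term $\mathcal{J}=\emptyset$ contributes $f(\emptyset)\prod_S(1-\symy_S)$, consistent with the empty union being $\emptyset$.
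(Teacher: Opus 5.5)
Your proof is correct, and it is the direct unfolding of the definitions that the paper relies on. The paper states this observation without proof, deferring to Buchbinder and Feldman; your conditioning on $\mathcal{J}=\{S : \reali(\symy,S)=S\}$ reproduces the defining sum term by term, and you rightly make explicit the independence across distinct $S$, which the paper's definition of $\reali(\symy,S)$ leaves implicit.
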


\begin{definition}[Coordinate-wise probabilistic sum] The coordinate-wise probabilistic sum of $m$ vectors $\mathbf{a}^{(1)}, \mathbf{a}^{(2)}, \dots, \mathbf{a}^{(m)} \in [0, 1]^n$ is a vector $\mathbf{c}$ where each entry $k \in \{1, \dots, n\}$ is defined as: \[ \left( \bigoplus_{i=1}^{m} \mathbf{a}^{(i)} \right)_k = 1 - \prod_{i=1}^{m} (1 - a_k^{(i)}) \]
    
\end{definition}

\begin{definition}[Marginal vector of $\mathbf{y}$]
Let $\mathbf{y} \in [0, 1]^{2^{\mathcal{N}}}$, then the marginal vector $\text{Mar}(\mathbf{y}) \in [0, 1]^{\mathcal{N}}$ is defined by:\[ \text{Mar}_u(\mathbf{y}) \triangleq \Pr[u \in R(\mathbf{y})] = 1 - \prod_{S \subseteq 2^{\mathcal{N}} \mid u \in S} (1 - y_S) \quad \forall u \in \mathcal{N} \]
\end{definition}

\begin{observation}
\label{maroplus}
Let $\mathbf{y}^1, \mathbf{y}^2 \in [0, 1]^{2^{\mathcal{N}}}$ be vectors such that $\mathbf{y}^1 + \mathbf{y}^2 \le \mathbf{1}$. Then:\[ \text{Mar}(\mathbf{y}^1 \oplus \mathbf{y}^2) = \text{Mar}(\mathbf{y}^1) \oplus \text{Mar}(\mathbf{y}^2) \]
\end{observation}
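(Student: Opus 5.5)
This identity can be verified coordinate by coordinate, directly from the definitions of $\Mar$ and of the probabilistic sum $\oplus$. The operation $\mathbf{y}^1 \oplus \mathbf{y}^2$ on $[0,1]^{2^{\N}}$ is the coordinate-wise probabilistic sum, so $(\mathbf{y}^1\oplus\mathbf{y}^2)_S = 1-(1-y^1_S)(1-y^2_S)$ for every $S\subseteq\N$. The hypothesis $\mathbf{y}^1+\mathbf{y}^2\le \mathbf{1}$ is only needed to ensure that everything stays in $[0,1]$ and that the objects are well defined in the paper's setting. The algebraic identity itself will not use it.

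Fix an element $u\in\N$. By the definition of the marginal vector,
\[
1-\Mar_u(\mathbf{y}^1\oplus\mathbf{y}^2) = \prod_{S\ni u}\bigl(1-(\mathbf{y}^1\oplus\mathbf{y}^2)_S\bigr) = \prod_{S\ni u}(1-y^1_S)(1-y^2_S).
\]
Splitting the product gives
\[
\prod_{S\ni u}(1-y^1_S)\cdot\prod_{S\ni u}(1-y^2_S) = \bigl(1-\Mar_u(\mathbf{y}^1)\bigr)\bigl(1-\Mar_u(\mathbf{y}^2)\bigr).
\]
Applying the definition of $\oplus$ on $[0,1]^{\N}$, the right-hand side equals $1-\bigl(\Mar(\mathbf{y}^1)\oplus\Mar(\mathbf{y}^2)\bigr)_u$. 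Subtracting both sides from $1$ proves the identity at coordinate $u$, and since $u$ was arbitrary, the two vectors are equal.

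The same statement has a probabilistic reading that can serve as a sanity check. Take independent realizations $\reali(\mathbf{y}^1)$ and $\reali(\mathbf{y}^2)$. Their union has the same distribution as $\reali(\mathbf{y}^1\oplus\mathbf{y}^2)$, because each set $S$ is included with probability $1-(1-y^1_S)(1-y^2_S)$, independently across different $S$. An element $u$ misses the union exactly when it misses both realizations, which happens with probability $(1-\Mar_u(\mathbf{y}^1))(1-\Mar_u(\mathbf{y}^2))$.

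The only step that needs care is a bookkeeping point: being clear that $\oplus$ is the probabilistic sum on the $2^{\N}$-indexed space on the left and on the $\N$-indexed space on the right. Once that is settled, the argument is just factoring a product over the sets $S$ containing $u$.
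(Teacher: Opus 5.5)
Your proof is correct. The paper gives no proof of this observation and imports it from Buchbinder--Feldman, and your factorization of $\prod_{S\ni u}(1-y^1_S)(1-y^2_S)$ into $\bigl(1-\mathrm{Mar}_u(\mathbf{y}^1)\bigr)\bigl(1-\mathrm{Mar}_u(\mathbf{y}^2)\bigr)$ is the natural direct verification. One small correction to your side remark: the hypothesis $\mathbf{y}^1+\mathbf{y}^2\le\mathbf{1}$ is not needed even to keep entries in $[0,1]$, since $1-(1-a)(1-b)\in[0,1]$ for all $a,b\in[0,1]$.
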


\begin{observation}
\label{observation 8}
 Given a set function $f: 2^{\mathcal{N}} \to \mathbb{R}$ and a vector $\mathbf{y} \in [0,1]^{2^{\mathcal{N}}}$, we define the set function $g_{\mathbf{y}}: 2^{\mathcal{N}} \to \mathbb{R}$ by
\begin{equation}
    g_{\mathbf{y}}(A) \triangleq F(\mathbf{e}_A \vee \mathbf{y}) \quad \forall A \subseteq \mathcal{N}.
\end{equation}
Where $e_A\in [0,1]^{2^\N}$ is a vector such that $(e_{A})_A=1$ and for any $ S\subseteq \N, S\neq A $, $(e_A)_S=0$.
If $f$ is a non-negative submodular function, then $g_{\mathbf{y}}(A)$ is also a non-negative submodular function. 
\end{observation}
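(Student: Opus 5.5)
Since $g_{\mathbf{y}}$ is defined through the random set $R(\cdot)$, I would prove both properties of Observation~\ref{observation 8} by rewriting $g_{\mathbf{y}}$ as an expectation of $f$ over a random set that does not depend on $A$. Fix $\mathbf{y}\in[0,1]^{2^{\N}}$ and $A\subseteq\N$. The vector $\mathbf{e}_A\vee\mathbf{y}$ agrees with $\mathbf{y}$ on every coordinate except $A$, where it equals $1$. Hence, in the random set $R(\mathbf{e}_A\vee\mathbf{y})=\bigcup_{S}R(\mathbf{e}_A\vee\mathbf{y},S)$, the term for $S=A$ is deterministically $A$. Every other term $R(\mathbf{y},S)$ is independent and distributed exactly as in $R(\mathbf{y})$.

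Define $T_A=\bigcup_{S\neq A}R(\mathbf{y},S)$. By the Observation $F(\mathbf{y})=\mathbb{E}[f(R(\mathbf{y}))]$,
\[
g_{\mathbf{y}}(A)=\mathbb{E}\bigl[f(A\cup T_A)\bigr].
\]
This already gives non-negativity, since $f\ge 0$.

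The subtlety is that $T_A$ depends on $A$: coordinate $A$ is excluded. I would remove this dependence by a coupling. Let $R=R(\mathbf{y})$ be drawn once, with independent components $R(\mathbf{y},S)$ for all $S$. Then $A\cup R$ differs from $A\cup T_A$ only by the term $R(\mathbf{y},A)$, which is either $A$ or $\emptyset$, and in both cases it is absorbed into $A$. So $A\cup R=A\cup T_A$ pointwise, and
\[
g_{\mathbf{y}}(A)=\mathbb{E}\bigl[f(A\cup R)\bigr],
\]
where the random set $R$ is now the same for all $A$.

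Submodularity then follows pointwise. For each fixed realization $R$, the map $A\mapsto f(A\cup R)$ is submodular, because for $A\subseteq B$ and $u\notin B$ the marginal gains are:
\begin{itemize}
\item if $u\in R$, both marginals are $0$;
\item otherwise they are marginals of $f$ at $A\cup R\subseteq B\cup R$, and diminishing returns applies.
\end{itemize}
Taking expectations of a nonnegative combination of submodular functions preserves submodularity. Equivalently, one can check $g_{\mathbf{y}}(S)+g_{\mathbf{y}}(T)\ge g_{\mathbf{y}}(S\cup T)+g_{\mathbf{y}}(S\cap T)$ directly, using $(S\cup R)\cup(T\cup R)=(S\cup T)\cup R$ and $(S\cup R)\cap(T\cup R)=(S\cap T)\cup R$.

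The main point needing care is the coupling step. The argument must justify that the independence structure of the $R(\mathbf{y},S)$ lets us use a single common $R(\mathbf{y})$ for every $A$; once that is done, everything else is routine.
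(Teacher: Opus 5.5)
Your proof is correct and takes essentially the paper's route. The paper imports this observation from Buchbinder and Feldman without proof, but its own submodularity lemma for $g_{-\mathbf{Z}}$ uses exactly your identity $R(\mathbf{e}_S \vee \mathbf{y}) = R(\mathbf{y}) \cup S$, with both occurrences of $R(\mathbf{y})$ treated as the same random variable, followed by pointwise submodularity of $S \mapsto f(R \cup S)$; your absorption of the $S=A$ component into $A$ and your separate handling of $u \in R$ make explicit two points the paper glosses over.
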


\begin{observation}
\label{obs:eme}
 Let $F$ be the extended multilinear extension of an arbitrary set function $f: 2^{\mathcal{N}} \to \mathbb{R}$. Then, for every vector $\mathbf{y} \in [0, 1]^{2^{\mathcal{N}}}$,

\begin{align*}
F(\mathbf{e}_S \vee \mathbf{y}) &\geq (1 - \|\operatorname{Mar}(\mathbf{y})\|_\infty) \cdot f(S) &&\forall  S \subseteq \mathcal{N} \\
    \frac{\partial^2 F}{(\partial y_S)^2}(\mathbf{y}) &= 0 && \forall S \subseteq \mathcal{N} \nonumber \\
        \frac{\partial^2 F}{\partial y_S \partial y_T}(\mathbf{y}) &\leq 0 \qquad &&\forall S, T \subseteq \mathcal{N}, S \cap T = \varnothing\\
    (1 - y_S) \cdot \frac{\partial F(\mathbf{y})}{\partial y_S} &= F(\mathbf{e}_S \vee \mathbf{y}) - F(\mathbf{y}) && \forall S \subseteq \mathcal{N}, y_S \in [0, 1] \\
    F(\mathbf{y} \oplus \mathbf{z}) &= \sum_{J \subseteq 2^{\mathcal{N}}} \left( F\left( \mathbf{y} \vee \sum_{S \in J} \mathbf{e}_S \right) \cdot \prod_{S \in J} z_S \cdot \prod_{S \in 2^{\mathcal{N}} \setminus J} (1 - z_S) \right) && \forall \mathbf{z} \in [0, 1]^{2^{\mathcal{N}}} 
\end{align*}
    
\end{observation}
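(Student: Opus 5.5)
The plan is to prove all five items from the probabilistic representation $F(\symy)=\mathbb{E}[f(\reali(\symy))]$ of the previous observation. The only fact we use about this representation is that the sets $\reali(\symy,S)$ are mutually independent, and each equals $S$ with probability $y_S$ and $\emptyset$ otherwise. We treat the items in order of difficulty: the structural identities first, the cross-derivative sign next, and the first inequality last.

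\textbf{Multilinearity and the derivative identity.} Condition on every coordinate except $S$. Then $F(\symy)=y_S\,F_1+(1-y_S)\,F_0$, where $F_1$ (resp.\ $F_0$) is the value of $F$ with $y_S$ replaced by $1$ (resp.\ $0$). Both $F_1$ and $F_0$ are independent of $y_S$, so $F$ is affine in $y_S$, and therefore $\partial^2F/(\partial y_S)^2=0$. The same expression gives $\partial F/\partial y_S=F_1-F_0$. Moreover $F(\mathbf{e}_S\vee\symy)=F_1$, so $F(\mathbf{e}_S\vee\symy)-F(\symy)=(1-y_S)(F_1-F_0)$, which is the fourth item.

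\textbf{Cross derivatives.} Apply the affine decomposition twice, in $y_S$ and then in $y_T$. Let $A$ be the union of $\reali(\symy,U)$ over $U\notin\{S,T\}$. This gives
\[
\frac{\partial^2F}{\partial y_S\partial y_T}(\symy)=\mathbb{E}\bigl[f(A\cup S\cup T)-f(A\cup S)-f(A\cup T)+f(A)\bigr].
\]
When $S\cap T=\varnothing$ we have $(A\cup S)\cap(A\cup T)=A$ and $(A\cup S)\cup(A\cup T)=A\cup S\cup T$. Submodularity applied to the pair $A\cup S$, $A\cup T$ then makes every term of the expectation nonpositive. This is exactly where the disjointness hypothesis is used.

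\textbf{Probabilistic sum.} The coordinate $(\symy\oplus\mathbf z)_S=1-(1-y_S)(1-z_S)$ is precisely the probability that $\reali(\symy,S)\cup\reali(\mathbf z,S)=S$, where the two sets are drawn independently. Hence $\reali(\symy\oplus\mathbf z)$ has the same distribution as $\reali(\symy)\cup\reali(\mathbf z)$ with $\reali(\symy)$ and $\reali(\mathbf z)$ independent. Condition on the family $J=\{S:\reali(\mathbf z,S)=S\}$, which occurs with probability $\prod_{S\in J}z_S\prod_{S\notin J}(1-z_S)$. On this event the union is distributed as $\reali\bigl(\symy\vee\sum_{S\in J}\mathbf e_S\bigr)$. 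Taking expectations and summing over $J$ yields the fifth identity.

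\textbf{First inequality (main obstacle).} Here $\reali(\mathbf e_S\vee\symy)=S\cup\reali(\symy)$. Define $h(A)=f(S\cup A)$, which is non-negative and submodular, so it suffices to show
\[
\mathbb{E}[h(\reali(\symy))]\ge(1-p)\,h(\emptyset),\qquad p=\|\Mar(\symy)\|_\infty.
\]
The difficulty is that membership events of different elements in $\reali(\symy)$ are correlated, so independence-based multilinear arguments do not apply. The lemma needed is the known sampling lemma of Buchbinder, Feldman, Naor and Schwartz: if every element lies in a random set $A$ with probability at most $p$, with arbitrary correlations, then $\mathbb{E}[h(A)]\ge(1-p)h(\emptyset)$. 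I would either cite it or reprove it as follows.
\begin{itemize}
\item Order the elements $u_1,\dots,u_n$ so that $p_i=\Pr[u_i\in A]$ is nonincreasing, and let $U_i=\{u_1,\dots,u_i\}$.
\item Telescope $h(A)$ over the prefixes $A\cap U_i$. By submodularity, the marginal of $u_i$ with respect to $A\cap U_{i-1}$ is at least its marginal with respect to $U_{i-1}$.
\item Taking expectations gives $\mathbb{E}[h(A)]\ge h(\emptyset)+\sum_i p_i\bigl(h(U_i)-h(U_{i-1})\bigr)$.
\item Abel summation rewrites the right-hand side as $(1-p_1)h(\emptyset)+\sum_i(p_i-p_{i+1})h(U_i)$, with $p_{n+1}=0$.
\item Since the $p_i$ are nonincreasing and $h\ge0$, every term in the sum is non-negative, so the right-hand side is at least $(1-p_1)h(\emptyset)\ge(1-p)h(\emptyset)$.
\end{itemize}
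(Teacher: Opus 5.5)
Your proof is correct. The paper gives no argument for this observation and simply defers to \cite{DBLP:conf/stoc/BuchbinderF25}, so your write-up works as a self-contained substitute along the natural route:
\begin{itemize}
\item the independent-coordinate representation of $\reali(\symy)$ makes $F$ affine in each $y_S$, which gives items 2 and 4;
\item the four-term expansion, together with submodularity applied to $A\cup S$ and $A\cup T$, gives item 3;
\item the coupling of $\reali(\symy\oplus\mathbf z)$ with $\reali(\symy)\cup\reali(\mathbf z)$ gives item 5;
\item the correlated sampling lemma applied to $h(A)=f(S\cup A)$ gives item 1.
\end{itemize}

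Your reproof of the sampling lemma is exactly the paper's own Lemma~\ref{randomset}: the Abel-summation step is just the evaluation of the Lov\'asz extension $\hat h(\Mar(\symy))$. It is then followed by the observation that $T_\lambda(\Mar(\symy))=\emptyset$ for $\lambda\ge\|\Mar(\symy)\|_\infty$ while $h\ge 0$ elsewhere. So you could invoke that lemma instead of reproving it.

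One thing you should make explicit. Despite the words ``arbitrary set function'' in the statement, items 1 and 3 fail without further hypotheses:
\begin{itemize}
\item $f\equiv-1$ violates item 1 whenever $\Mar(\symy)\neq\mathbf{0}$;
\item $f(S)=|S|^2$ at $\symy=\mathbf{0}$, with $S$ and $T$ two distinct singletons, gives a mixed partial equal to $2>0$, violating item 3.
\end{itemize}
Your argument correctly, but silently, assumes $f$ is non-negative and submodular for item 1 and submodular for item 3. Only items 2, 4 and 5 hold for every set function.
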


\paragraph{\lovasz Extension.}
The \lovasz extension is another important extension for set functions. However, in this paper, it is only used as a definition for auxiliary proof conclusions, so its properties will not be extensively covered. For a submodular function $f: 2^{\mathcal{N}} \rightarrow \mathbb{R}$, its \lovasz{} extension $\hat{f}: [0, 1]^{\mathcal{N}} \rightarrow \mathbb{R}$ is defined as:
\begin{align*}
    \hat{f}(\symx) = \int_{0}^{1} f(T_{\lambda}(\symx))  d\lambda
\end{align*}
where $T_\lambda(\symx) = \{ u \in \mathcal{N} : \symx_u > \lambda \}$.

\begin{lemma}
    \label{randomset}
    Given $\symx \in [0, 1]^{\mathcal{N}}$ and a submodular function $f: 2^{\mathcal{N}} \rightarrow \mathbb{R}$. Let $\hat{f}: [0, 1]^{\mathcal{N}} \rightarrow \mathbb{R}$ be its \lovasz{} extension. Let $A(\symx) \subseteq \mathcal{N}$ be a random set satisfying $\forall u\in \N,\Pr[u \in A(\symx)] = \symx_u$. Then:
    \begin{align*}
        \mathbb{E} [f(A(\symx))] \ge \hat{f}(\symx)
    \end{align*}
\end{lemma}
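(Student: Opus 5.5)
The plan is to use the classical fact that, among all distributions over subsets of $\mathcal{N}$ with a prescribed marginal vector $\symx$, a submodular $f$ attains its smallest expectation on the ``chain'' distribution, and that this chain distribution is exactly the law of $T_\lambda(\symx)$ for $\lambda$ uniform in $[0,1]$. Since $\hat f(\symx) = \mathbb{E}_{\lambda\sim U[0,1]}[f(T_\lambda(\symx))]$ by definition, the lemma then follows at once. Note that non-negativity of $f$ is never used, which matches the statement.

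\textbf{Step 1 (set-up and compactness).} Let $\mathcal{D}(\symx)$ be the set of probability vectors $p\in[0,1]^{2^{\mathcal{N}}}$ with $\sum_S p_S=1$ and $\sum_{S\ni u} p_S = \symx_u$ for every $u\in\mathcal{N}$. This is a nonempty compact polytope, since the law of $A(\symx)$ lies in it. The objective $\Phi(p)=\sum_S p_S f(S)$ is linear. Pick $p^*$ that minimizes $\Phi$ over $\mathcal{D}(\symx)$ and, among all minimizers, maximizes the potential
\[
\Psi(p)=\sum_S p_S |S|^2 .
\]
Such a $p^*$ exists because the set of minimizers is itself a compact face. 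It suffices to show that $\Phi(p^*) = \hat f(\symx)$, because then $\mathbb{E}[f(A(\symx))]\ge \Phi(p^*)=\hat f(\symx)$.

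\textbf{Step 2 (uncrossing).} I claim the support of $p^*$ is a chain. Suppose instead that $S,T$ are in the support with $S\not\subseteq T$ and $T\not\subseteq S$. Let $\delta=\min(p^*_S,p^*_T)>0$. Move mass $\delta$ from each of $S$ and $T$ to each of $S\cup T$ and $S\cap T$.
\begin{itemize}
\item Marginals are preserved, because every element lies in equally many of the pairs $\{S,T\}$ and $\{S\cup T,S\cap T\}$.
\item $\Phi$ does not increase, since $f(S\cup T)+f(S\cap T)\le f(S)+f(T)$ by submodularity.
\item $\Psi$ strictly increases. Writing $a=|S\setminus T|\ge1$, $b=|T\setminus S|\ge1$ and $c=|S\cap T|$, we get $(a+b+c)^2+c^2-(a+c)^2-(c+b)^2=2ab>0$.
\end{itemize}
The new vector is therefore another minimizer with strictly larger $\Psi$, contradicting the choice of $p^*$. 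I expect this step to be the main obstacle, because uncrossing cannot be argued by a naive termination count when the masses are real numbers. The extremal choice of $p^*$ in Step 1 is what avoids any iteration argument.

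\textbf{Step 3 (identify the chain distribution).} Let the support be $S_1\subsetneq S_2\subsetneq\cdots\subsetneq S_k$ with masses $q_1,\dots,q_k$. Then $\symx_u=\sum_{i: u\in S_i} q_i$, and the indices containing $u$ form a suffix $\{i,\dots,k\}$. Hence $u\in S_i$ if and only if $\symx_u > \sum_{j<i} q_j$; one must check the boundary cases of ties and zero masses, which is routine.

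Set $\lambda_i=\sum_{j<i}q_j$, so that $\lambda\sim U[0,1]$ falls in $[\lambda_i,\lambda_{i+1})$ with probability $q_i$. On that interval, $T_\lambda(\symx)=\{u:\symx_u>\lambda\}$ equals $S_i$, up to a set of $\lambda$ of measure zero. Therefore
\[
\Phi(p^*)=\sum_i q_i f(S_i)=\int_0^1 f(T_\lambda(\symx))\,d\lambda=\hat f(\symx),
\]
which completes the argument.
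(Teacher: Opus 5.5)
Your argument is correct in substance but takes a different route from the paper. The paper argues directly. It sorts $\mathcal N$ so that $\symx_{u_1}\ge\cdots\ge\symx_{u_n}$ and sets $A_i=\{u_1,\dots,u_i\}$. It then telescopes $f(A(\symx))=f(\emptyset)+\sum_i X_i\, f(u_i\mid A_{i-1}\cap A(\symx))$ and lowers each marginal to $f(u_i\mid A_{i-1})$ by submodularity, since $A_{i-1}\cap A(\symx)\subseteq A_{i-1}$. Taking expectations gives $f(\emptyset)+\sum_i \symx_{u_i} f(u_i\mid A_{i-1})$, which is exactly $\int_0^1 f(T_\lambda(\symx))\,d\lambda$. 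That proof takes a few lines, needs no optimization over distributions or compactness, and uses submodularity only along one fixed chain. Your uncrossing argument proves more: $\hat f(\symx)$ is the minimum of $\mathbb{E}[f(A)]$ over all distributions with marginals $\symx$, attained by the threshold distribution. Your secondary potential $\Psi$ is a clean way to avoid a termination argument for real-valued masses. The cost is the extremal setup and the bookkeeping in Step 3.

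That bookkeeping contains a reversed index. With the chain increasing, $S_1\subsetneq\cdots\subsetneq S_k$, the indices containing $u$ form a suffix $\{m_u,\dots,k\}$, as you say, so $\symx_u=\sum_{j\ge m_u}q_j$. The correct criterion is therefore $u\in S_i \iff \symx_u>\sum_{j>i}q_j$, not $\symx_u>\sum_{j<i}q_j$. For $i=1$ your criterion says $u\in S_1$ iff $\symx_u>0$, which forces $S_1=S_k$. Accordingly, $T_\lambda(\symx)=S_i$ for $\lambda\in[\sum_{j>i}q_j,\ \sum_{j\ge i}q_j)$, not on $[\lambda_i,\lambda_{i+1})$. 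Fix it by listing the chain in decreasing order or by using tail sums. Each $S_i$ still occupies an interval of length $q_i$, so the identity $\sum_i q_i f(S_i)=\int_0^1 f(T_\lambda(\symx))\,d\lambda$ and the lemma are unaffected.
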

\begin{proof}
    Let $\{u_1, u_2, \dots, u_n\} = \mathcal{N}$ be ordered such that $\forall i < j$, $\symx_{u_i} \ge \symx_{u_j}$. Let event $X_i$ indicate $u_i \in A(\symx)$. Let $A_i = \{u_j \mid j \le i\}$.
    \begin{align*}
        \mathbb{E} [f(A(\symx))] &= \mathbb{E} \left[ f(\emptyset) + \sum_{i=1}^n f(A(\symx) \cap A_i \mid A(\symx) \cap A_{i-1}) \right] \\
        &= \mathbb{E} \left[ f(\emptyset) + \sum_{i=1}^n X_i \cdot f(u_i \mid A_{i-1} \cap A(\symx)) \right] \\
        &\ge \mathbb{E} \left[ f(\emptyset) + \sum_{i=1}^n X_i \cdot f(u_i \mid A_{i-1} ) \right] = f(\emptyset) + \sum_{i=1}^n \mathbb{E}[X_i] \cdot  f(u_i \mid A_{i-1} )  \\
        &= f(\emptyset) + \sum_{i=1}^n \symx_{u_i} \cdot f(u_i \mid A_{i-1}) = f(\emptyset) + \int_{0}^{1} \sum_{\substack{i \\ \lambda < \symx_{u_i}}} f(u_i \mid A_{i-1})  d\lambda \\
        &= \hat{f}(\symx)
    \end{align*}
    The first equality decomposes the function value into the incremental contribution of each element. The inequality follows from submodularity. The final equality holds because, due to the descending order of $\symx_{u_i}$, the sum (including the preceding empty set term) constitutes a difference sequence for $f(T_\lambda(\symx))$, thus equaling $\hat{f}(\symx)$.
\end{proof}

\section{Matroid Constraints}
In this section, we introduce the deterministic aided measured continuous greedy algorithm based on the extended multilinear extension. In \cite{DBLP:conf/stoc/BuchbinderF25}, the authors have provided a rounding method for EME vectors when the support set is sufficiently small. Consequently, utilizing this rounding algorithm, our algorithm only needs to obtain a vector with a sufficiently small support set to provide a solution for maximizing submodular functions. Additionally, we introduce a dummy element set $D$, which only occupy constraints without changing the function value, and the definition of the discrete stationary point, which are essential bridges in our theoretical analysis.                \paragraph{Deterministic-Pipage}
In \cite{DBLP:conf/stoc/BuchbinderF25}, the authors provide a deterministic pipage rounding procedure that converts an EME vector with sufficiently small support into a feasible set in the matroid. Therefore, our algorithm only needs to compute a vector whose support size is sufficiently small in order to obtain a solution for the submodular maximization problem.

\begin{theorem}[Deterministic-Pipage, Theorem 5.6 in \cite{DBLP:conf/stoc/BuchbinderF25}]
    \label{rounding}
    For a vector $y \in [0,1]^{2^{\mathcal{N}}}$, if $\mar(y) \in \mathcal{P}(\mathcal{M})$, there exists an algorithm that, within $O(n^2 \cdot 2^{\ffrac(y)})$ value queries and $O(n^5 \log^2 n)$ independence queries, returns a set $S \in \mathcal{M}$ satisfying $f(S) \ge F(y)$.
\end{theorem}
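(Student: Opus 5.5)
This statement is Theorem~5.6 of \cite{DBLP:conf/stoc/BuchbinderF25}, and the plan follows the structure of a pipage rounding that works directly on the extended multilinear extension. Throughout, $\ffrac(\symy)$ denotes the number of sets $S$ with $\symy_S\in(0,1)$. The algorithm maintains a vector $\symy$ with $\mar(\symy)\in\mathcal{P}(\mathcal{M})$. It never decreases $F(\symy)$, and it never increases the fractional support. At each step it either makes one more coordinate integral or merges coordinates.

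The first step is to make the supports disjoint. The contributing sets $S$ with $\symy_S>0$ can be refined into their atoms, that is, the cells of the Venn diagram of the support. Replacing $\symy$ by a vector on disjoint sets whose random union has the same law would keep $F$ unchanged. However, that law is not a product law, so I would not do this. Instead I would use Observation~\ref{obs:eme}: $F$ is linear in each single coordinate $\symy_S$, so $F$ restricted to a line $\symy+t(\mathbf{e}_S-\alpha\mathbf{e}_T)$ is a quadratic in $t$. By the cross-derivative bound it is concave when $S\cap T=\emptyset$. For general $S,T$ I would pick the direction so that the product structure makes $F$ convex along it. One way is to move along directions that preserve the marginal vector's membership in $\mathcal{P}(\mathcal{M})$ and that split the elements into tight sets. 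This follows the classical Calinescu et al.\ pipage argument, with element marginals $\mar_u(\symy)$ playing the role of $x_u$.

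Concretely, take two fractional sets $S,T$. Consider the map $t\mapsto\symy(t)$ that increases $\symy_S$ and decreases $\symy_T$ at rates chosen so that, for the elements $u$ where the matroid constraint is tight, $\mar_u$ changes in a direction lying in the tight face. If the move is a linear combination of single-coordinate moves on coordinates that do not interact, $F$ is multilinear, hence convex along the line, and one of the two endpoints does not decrease $F$. Moving to the endpoint either makes one of $\symy_S,\symy_T$ integral (which reduces $\ffrac$) or creates a new tight set. Because the tight sets form a chain, or a laminar family after uncrossing, there are at most $n$ of them. Hence there are $O(n)$ steps between decreases of $\ffrac$.

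The query count follows from the support size. Evaluating $F(\symy)$ exactly costs $2^{\ffrac(\symy)}$ value queries, one for each pattern of the fractional coordinates. There are $O(n^2)$ evaluations in total, giving $O(n^2 2^{\ffrac(\symy)})$ value queries. Finding tight sets and exchange pairs requires matroid submodular minimization of $\rank(A)-\sum_{u\in A}\mar_u$, which accounts for the $O(n^5\log^2 n)$ independence queries.

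The main obstacle is that $\mar$ is nonlinear in $\symy$, so a line in $\symy$-space traces a curve in marginal space. I would handle this by first converting $\symy$, losslessly and with the same $F$ value, into one whose fractional sets are pairwise disjoint. Where sets overlap, I would merge them by increasing the coordinate of their union, using submodularity of $g_{\symy}$ from Observation~\ref{observation 8}. Once the fractional sets are disjoint, $\mar$ is linear along each two-coordinate move, and the convexity argument goes through. At the end all coordinates are integral, so $\reali(\symy)$ is a fixed set $S$ with $\mathbf{1}_S\in\mathcal{P}(\mathcal{M})$, i.e.\ $S\in\mathcal{I}$ and $f(S)=F(\symy)$.
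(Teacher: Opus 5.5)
Your plan has a genuine gap at exactly the step you call the main obstacle. The paper does not reprove this statement; it imports Theorem~5.6 of \cite{DBLP:conf/stoc/BuchbinderF25}. The key ingredient of that proof is restated in the knapsack rounding section: the operation \textsc{Relax}$(\symy,u)$, which merges each $\symy_{S+u}$ into $\symy_S$ and places $\Mar_u(\symy)$ on the singleton coordinate $\{u\}$.

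You propose to convert $\symy$ ``losslessly and with the same $F$ value'' into a vector whose fractional sets are pairwise disjoint, by merging overlapping sets into their union. This cannot work. As you noted yourself, the law of $\reali(\symy)$ is in general not a product law over disjoint blocks; already $\symy_{\{a,b\}}=\symy_{\{b,c\}}=1/2$ has no such representation. So in general no vector supported on disjoint sets reproduces $F(\symy)$. Moreover, raising a coordinate on $S\cup T$ gives the elements of $S\setminus T$, $T\setminus S$ and $S\cap T$ a common marginal contribution, so $\Mar(\symy)$ changes and may leave $\mathcal{P}(\mathcal{M})$. Submodularity of $g_{\symy}$ (Observation~\ref{observation 8}) says nothing about such replacements.

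What is actually available is one-sided and element-wise. For $\symy'=$\textsc{Relax}$(\symy,u)$ one has $\Mar(\symy')=\Mar(\symy)$ and $F(\symy')\ge F(\symy)$. After relaxing, $u$ lies in no positive coordinate other than $\{u\}$, and $\ffrac$ grows by at most one.

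Second, even with disjoint fractional sets, your pipage step fails for set-level moves. Along $\symy+t(\mathbf{e}_S-\alpha\mathbf{e}_T)$ the marginal vector moves along $\mathbf{1}_S-\alpha\mathbf{1}_T$. A single scalar $\alpha$ cannot in general keep every currently tight set $A$ tight, since that requires $|A\cap S|=\alpha|A\cap T|$ for all of them. So the current point may be an endpoint of the feasible segment, and convexity gives no progress. The tight-set-shrinking count of Calinescu et al.\ that you invoke is specifically about exchanges between two single elements.

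This is why the rounding works only on singleton coordinates of relaxed elements $u,v$. There $\Mar_u=\symy_{\{u\}}$ and $\Mar_v=\symy_{\{v\}}$, so the classical element-level pipage step and its $O(n^2)$ step count apply verbatim. $F$ is convex along $\mathbf{e}_{\{u\}}-\mathbf{e}_{\{v\}}$ because the pure second derivatives vanish and the mixed one is nonpositive for disjoint sets (Observation~\ref{obs:eme}; cf.\ Lemma~\ref{lem:exchange_convexity}). This is convexity, not the concavity you claimed for disjoint $S,T$, and it comes from submodularity rather than from multilinearity alone.

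Finally, your assertion that the fractional support never grows is false for any such scheme, since each \textsc{Relax} may create a new fractional singleton. To get $O(n^2\cdot 2^{\ffrac(\symy)})$ value queries, you must relax elements only when needed, interleaved with the pipage steps. You must also bound by a constant the number of relaxed singletons that are simultaneously fractional, so that each evaluation of $F$ costs $2^{\ffrac(\symy)+O(1)}$. Your proposal contains no such control.
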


\paragraph{Dummy elements.}
We can make the subsequent analysis more convenient by adding at most $n$ dummy elements to the original set $\mathcal{N}$. Specifically, let $\bar{\mathcal{N}} = \{u_1, \dots, u_n\}$, $\bar{\mathcal{I}} = \{ S \subseteq \mathcal{N} \cup \bar{\mathcal{N}} \mid S \cap \mathcal{N} \in \mathcal{I}, |S| \le \rank(\mathcal{M}) \}$, and $\bar{f}: 2^{\mathcal{N} \cup \bar{\mathcal{N}}} \rightarrow \mathbb{R}$, $\bar{f}(S) = f(S \setminus \bar{\mathcal{N}})$. The solution to the new problem $(\mathcal{N} \cup \bar{\mathcal{N}}, \bar{\mathcal{I}}), \bar{f}$ will be consistent with the original problem. Additionally, we can always add dummy elements to a set $S \in \mathcal{I}$ to make $S$ a basis of the matroid $\mathcal{M}$ without changing the function value of $S$. The computed result will inevitably contain some dummy elements; we simply remove them from the solution set to obtain a solution to the original problem with the same function value. In subsequent analysis, we will assume by default that the problem includes a sufficient number of dummy elements.

\begin{lemma}
    $\bar{\mathcal{M}} = (\mathcal{N} \cup \bar{\mathcal{N}}, \bar{\mathcal{I}})$ is a matroid, and $\bar{f}$ is a submodular function.
\end{lemma}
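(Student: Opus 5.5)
We verify the two claims separately, starting with submodularity of $\bar{f}$, which is immediate. For $S,T \subseteq \mathcal{N}\cup\bar{\mathcal{N}}$ we have $(S\cup T)\setminus\bar{\mathcal{N}} = (S\setminus\bar{\mathcal{N}})\cup(T\setminus\bar{\mathcal{N}})$, and likewise for intersections. Applying the set-form definition of submodularity of $f$ to $S\setminus\bar{\mathcal{N}}$ and $T\setminus\bar{\mathcal{N}}$ then gives $\bar f(S)+\bar f(T)\ge \bar f(S\cup T)+\bar f(S\cap T)$. Non-negativity is inherited from $f$ in the same way.

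For the matroid claim, we check the three axioms for $\bar{\mathcal{I}}$.
\begin{itemize}
\item \emph{Empty set.} We have $\emptyset\in\bar{\mathcal{I}}$, since $\emptyset\cap\mathcal{N}=\emptyset\in\mathcal{I}$ and $|\emptyset|=0\le\rank(\mathcal{M})$.
\item \emph{Downward closure.} If $A\subseteq B\in\bar{\mathcal{I}}$, then $A\cap\mathcal{N}\subseteq B\cap\mathcal{N}\in\mathcal{I}$, so $A\cap\mathcal{N}\in\mathcal{I}$ because $\mathcal{I}$ is down-closed. Also $|A|\le|B|\le\rank(\mathcal{M})$.
\item \emph{Exchange.} Let $A,B\in\bar{\mathcal{I}}$ with $|A|<|B|$. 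Then $|A|<\rank(\mathcal{M})$, so adding any single element keeps the cardinality bound satisfied. It therefore suffices to find $u\in B\setminus A$ with $(A+u)\cap\mathcal{N}\in\mathcal{I}$.
\end{itemize}

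For the exchange step we split into two cases.

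\textbf{Case 1: $B\setminus A$ contains a dummy element $u\in\bar{\mathcal{N}}$.} Then $(A+u)\cap\mathcal{N}=A\cap\mathcal{N}\in\mathcal{I}$, and we are done.

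\textbf{Case 2: $B\setminus A\subseteq\mathcal{N}$.} Every dummy element of $B$ then lies in $A$, so $|B\cap\bar{\mathcal{N}}|\le|A\cap\bar{\mathcal{N}}|$. Combined with $|A|<|B|$, this forces $|A\cap\mathcal{N}|<|B\cap\mathcal{N}|$. Both $A\cap\mathcal{N}$ and $B\cap\mathcal{N}$ are in $\mathcal{I}$, so the exchange axis of $\mathcal{M}$ yields $u\in (B\cap\mathcal{N})\setminus(A\cap\mathcal{N})$ with $(A\cap\mathcal{N})+u\in\mathcal{I}$. Since $u\in\mathcal{N}$ and $u\notin A$, we have $u\in B\setminus A$ and $(A+u)\cap\mathcal{N}=(A\cap\mathcal{N})+u\in\mathcal{I}$.

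The only step needing care is the counting argument in Case 2, which relies on the dummy parts of $A$ and $B$ being nested. The rest is routine verification. One could also note that $\bar{\mathcal{M}}$ is the truncation to rank $\rank(\mathcal{M})$ of the direct sum of $\mathcal{M}$ with a free matroid on $\bar{\mathcal{N}}$, and truncations and direct sums of matroids are matroids; the direct check above avoids citing this.
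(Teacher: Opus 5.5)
Your proof is correct and follows essentially the same route as the paper. In the exchange step you split on whether $B\setminus A$ contains a dummy element, where the paper splits on $|A\cap\bar{\mathcal{N}}|<|B\cap\bar{\mathcal{N}}|$ versus not; otherwise both arguments invoke the exchange axiom of $\mathcal{M}$ on the non-dummy parts. The remaining differences are cosmetic: you prove submodularity via the lattice inequality, using that $S\mapsto S\setminus\bar{\mathcal{N}}$ preserves unions and intersections, whereas the paper uses diminishing marginal returns with a separate dummy case. You also explicitly verify the empty-set, down-closure and cardinality-bound conditions, which the paper leaves implicit.
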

\begin{proof}
    For any $A \subseteq B \subseteq \mathcal{N} \cup \bar{\mathcal{N}}$, $u \in (\mathcal{N} \cup \bar{\mathcal{N}}) \setminus B$:
    If $u \in \bar{\mathcal{N}}$, then $\bar{f}(A + u) - \bar{f}(A) = \bar{f}(B + u) - \bar{f}(B) = 0$. Otherwise,
    \begin{align*}
        \bar{f}(A + u) - \bar{f}(A) &= f((A \setminus \bar{\mathcal{N}}) + u) - f(A \setminus \bar{\mathcal{N}}) \\
        &\ge f((B \setminus \bar{\mathcal{N}}) + u) - f(B \setminus \bar{\mathcal{N}}) \\
        &= \bar{f}(B + u) - \bar{f}(B)
    \end{align*}
    Therefore, $\bar{f}$ is submodular.
    
    To prove $\bar{\mathcal{M}}$ is a matroid, we only need to prove $\bar{\mathcal{I}}$ satisfies the exchange property.
    For any $A, B \in \bar{\mathcal{I}}$ with $|A| < |B|$. If $|A \cap \bar{\mathcal{N}}| < |B \cap \bar{\mathcal{N}}|$, then there exists a dummy element $u \in (B \cap \bar{\mathcal{N}}) \setminus A$, so $A + u \in \bar{\mathcal{I}}$.
    Otherwise, consider $|A \setminus \bar{\mathcal{N}}| = |A| - |A \cap \bar{\mathcal{N}}| < |B| - |B \cap \bar{\mathcal{N}}| = |B \setminus \bar{\mathcal{N}}|$. By the definition of $\bar{\mathcal{I}}$, $A \in \bar{\mathcal{I}} \Rightarrow A \setminus \bar{\mathcal{N}} \in \mathcal{I}$,
    and $A \setminus \bar{\mathcal{N}}, B \setminus \bar{\mathcal{N}} \in \mathcal{I}$. By the matroid exchange property, there exists $u \in B \setminus \bar{\mathcal{N}} \subseteq B$ such that $(A \setminus \bar{\mathcal{N}}) + u \in \mathcal{I}$, implying $A + u \in \bar{\mathcal{I}}$.
\end{proof}

The main idea of the algorithm on the extended multilinear extension is similar with that on the multilinear extension. First, the main algorithm finds a reference solution $\symz$ through the local search algorithm. Then, adopting the information from $\symz$, the main algorithm performs continuous greedy in a direction orthogonal to $\symz$ before time $t_s$ and perform continuous greedy in all direction after time $t_s$, obtaining a fractional solution $\mathbf{y}$ negatively correlated with $f(\symz)$. Finally, the best of $Rounding(\mathbf{y})$ and $\symz$ is an $0.385- O(\eps)$ approximation solution. The formal statement of the deterministic $0.385-O(\eps)$ approximation algorithm for non-monotone submodular maximization subject to matroid constraints is as follows:


\begin{algorithm}[H]        
    \label{alg:matroid_main}
  \caption{MAIN$(\mathcal{M}=(\mathcal{N},\mathcal{I}), f, t_s \in [0,1], \eps \in (0,1))$}
  \label{alg4}
  \small
  $\symz \gets \text{\textsc{LocalSearch}}(\mathcal{M}, f, \eps)$\;
  $\symy^1 \gets \text{\textsc{AidedContinuousGreedy}}(\mathcal{M}, f, \symz, t_s, \eps)$\;
  \Return $\arg\max\limits_{Y \in \{\textsc{Deterministic-Pipage}(\symy^1), \symz\}} f(Y)$\;
\end{algorithm}

\subsection{Discrete Local Search}
\label{localsearch}
In this subsection, we introduce a discrete local search algorithm to obtain the reference solution $\symz$. Although this algorithm originates from Algorithm 3 in \cite{DBLP:conf/focs/BuchbinderF24}, we restate and reprove it such that it can be adapted to the non-monotone submodular maximization problem we study. The algorithm calls a deterministic combinatorial algorithm to obtain a constant-factor approximation of $f(OPT)$ as its initial solution and continuously swaps elements to ensure that the current solution is a basis of the matroid $\mathcal{M}=(\N,\I)$ and the objective function value increases. Essentially, the solution returned by Algorithm \ref{alg1} is a discrete stationary point. The formal statement is as follows:

\begin{algorithm}[H]
  \caption{\textsc{LocalSearch}$(\mathcal{M}=(\mathcal{N},\mathcal{I}), f, \eps \in (0,1))$}
  \label{alg1}
  \small
  Use the algorithm in Lemma~\ref{approx} to find $S_0$ such that $f(S_0) \ge 0.305 \cdot f(OPT)$, set $S \gets S_0$\;
  Add elements until $S$ becomes a basis of $\mathcal{M}$\;
  \While{\rm there exist $u \in S$, $v \in \mathcal{N} \setminus S$ such that $S - u + v \in \mathcal{I}$ and $f(v \mid S) - f(u \mid S - u) \ge \frac{\eps}{r} \cdot f(S_0)$}{
    $S \gets S + v - u$\;
  }
  \Return $S$\;
\end{algorithm}

\begin{lemma}[Theorem 2.4 in \cite{NEURIPS2024_c4e40d31}]
    \label{approx}
    For a non-monotone submodular maximization problem $(\mathcal{M}, f)$, there exists a deterministic algorithm that returns an $0.305 - \eps$ approximate solution with at most $O_{\eps}(n r)$ value and independence queries.
\end{lemma}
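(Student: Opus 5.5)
This lemma is imported from the literature rather than proved here, so the plan is to invoke Theorem~2.4 of \cite{NEURIPS2024_c4e40d31} directly. I would only check that its hypotheses match our setting: a non-negative submodular $f$ accessed through a value oracle, and a matroid $\mathcal{M}=(\mathcal{N},\mathcal{I})$ accessed through an independence oracle. These are exactly the oracle assumptions fixed in Section~\ref{sec:pre}. The guarantee is then stated with respect to $f(\OPT)$ under the same constraint. The only adaptation needed for Algorithm~\ref{alg1} is to instantiate the lemma with a constant $\eps_0$ (independent of the local-search $\eps$) small enough that $0.305-\eps_0$ is a fixed positive constant. The ``$0.305\cdot f(OPT)$'' in Algorithm~\ref{alg1} should be read in this sense: the analysis only needs $f(S_0)=\Omega(f(\OPT))$. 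I would also check that passing to the dummy-element instance $(\bar{\mathcal{M}},\bar f)$ is harmless. It is, because $\bar f$ is non-negative submodular, $\bar{\mathcal{M}}$ is a matroid (previous lemma), and optimal values coincide.

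If a self-contained argument were wanted, I would reconstruct the cited proof as a derandomization of the measured continuous greedy / random greedy for matroids. The algorithm runs $O_\eps(r)$ discrete steps while maintaining an explicit distribution over independent sets with support size bounded by a function of $\eps$ only. At each step, for every set $A$ in the support, it computes a base $M_A$ maximizing $\sum_{u\in M_A} f(u\mid A)$ greedily, using $O(n)$ queries. It then decides how to extend $A$ by elements of $M_A$. The randomized choice (a uniformly random element of $M_A$) is replaced by solving a small LP that splits the probability mass of $A$ among a few candidate extensions. The LP keeps the expected marginal gain at least the randomized bound, keeps every element's marginal probability at most $1-(1-1/r)^{t}$, and is chosen so that its basic optimal solution has few nonzeros, which keeps the support bounded.

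The analysis then tracks two quantities. The expected value satisfies a recurrence of the form $\mathbb{E}[f(A_{t+1})]-\mathbb{E}[f(A_t)] \ge \frac{1}{r}\big(\mathbb{E}[f(A_t\cup\OPT)]-\mathbb{E}[f(A_t)]\big)$. The union with $\OPT$ is lower-bounded by $(1-\max_u \Pr[u\in A_t])\,f(\OPT)$, in the spirit of Lemma~\ref{randomset}. Solving the resulting ODE and optimizing its parameters yields the constant $0.305$.

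The main obstacle is keeping the support constant while preserving both the marginal-probability bound and the expected-gain bound simultaneously. This is what forces the LP-with-basic-solutions argument, and it is also where the $\eps$ loss and the $O_\eps(nr)$ query count come from. Since all of this is established in \cite{NEURIPS2024_c4e40d31}, the proof here reduces to the citation.
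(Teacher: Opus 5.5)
Your proposal matches the paper, which gives no proof and simply imports Theorem~2.4 of \cite{NEURIPS2024_c4e40d31}. Your side checks are sound: the oracle model, the harmlessness of the dummy-element instance, and reading the $0.305$ in Algorithm~\ref{alg1} as a fixed constant $0.305-\eps_0$ (concretely, the proof of Theorem~\ref{thm2.1} only uses $f(S_0) > f(\OPT)/4$).

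Do not lean on your optional reconstruction, however. The recurrence with step $1/r$ and marginals bounded by $1-(1-1/r)^t$ is the cardinality-constraint random-greedy analysis and would yield a $1/e$-type bound. Such a simple greedy scheme does not achieve that under a matroid constraint, and it is not how $0.305$ arises, so a self-contained proof would have to follow the cited paper rather than this sketch.
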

\paragraph{Discrete Stationary Points.}
Intuitively, we call a solution stationary if no local exchange can 
significantly improve the objective value, meaning the algorithm has 
reached a stable state with respect to $1$-exchange moves.

We formalize the notion of a discrete stationary point under $1$-exchange.
Let $\mathcal{M}=(\mathcal{N},\mathcal{I})$ be a matroid of rank $r$ and 
$f:2^{\mathcal{N}}\to\mathbb{R}$ be submodular. 
A basis $S \in \mathcal{I}$ is called an $\alpha$-approximate 
discrete stationary point if for every pair $u \in S$ and 
$v \in \mathcal{N}\setminus S$ such that $S-u+v \in \mathcal{I}$,
\[
f(v \mid S) - f(u \mid S-u) \le \alpha .
\]
When $\alpha=0$, this coincides with the standard notion of a 
$1$-exchange local optimum.

Algorithm~\ref{alg1} terminates exactly when no feasible exchange 
improves the objective by more than 
$\frac{\varepsilon}{r}\cdot f(S_0)$; hence the returned solution 
is an $\alpha$-approximate discrete stationary point with 
$\alpha = \frac{\varepsilon}{r}\cdot f(S_0)$.

For notational simplicity, throughout the sequel we refer to such 
$\alpha$-approximate discrete stationary points simply as 
\emph{stationary points}, as the additive error will be explicitly 
accounted for in the analysis. Next, we illustrate the theoretical guarantee for Algorithm \ref{alg1} as follows:
\begin{theorem}
    \label{thm2.1}
    Algorithm \ref{alg1} terminates within $O(\eps^{-1} n r)$ value queries and $O(\eps^{-1} n r^2)$ independence queries, and the returned set $S$ satisfies $S \in \mathcal{I}$ and, for any $T \in \mathcal{I}, \eps \in (0,1)$,
    \begin{align}
        \label{eq1}
        f(S) \ge \frac{1}{2} (f(S \cap T) + f(S \cup T)) - \eps \cdot f(OPT)
    \end{align}
\end{theorem}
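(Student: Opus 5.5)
The proof has three parts: a running-time bound from potential growth, the stationarity inequality via a bijection, and an error budget.

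\textbf{Running time.} The initial set $S_0$ satisfies $f(S_0)\ge 0.305\cdot f(\OPT)$ by Lemma~\ref{approx}. I would first ensure that the basis-completion step does not decrease the value below a constant fraction of $f(S_0)$; adding dummy elements achieves this, since they leave $\bar f$ unchanged. Each exchange in Algorithm~\ref{alg1} increases $f$ by at least
\[
f(S-u+v)-f(S)\ \ge\ f(v\mid S)-f(u\mid S-u)\ \ge\ \tfrac{\eps}{r}f(S_0).
\]
The first inequality holds by submodularity, because $f(S-u+v)-f(S-u)\ge f(v\mid S)$. Since $f(S)\le f(\OPT)\le f(S_0)/0.305$, there are at most $O(r/\eps)$ exchanges. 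Each iteration scans all $O(nr)$ pairs. One can precompute the $O(n)$ quantities $f(v\mid S)$ and $f(u\mid S-u)$ per iteration, using $O(n)$ value queries and $O(nr)$ independence queries. Summing over iterations gives $O(\eps^{-1}nr)$ value queries and $O(\eps^{-1}nr^2)$ independence queries.

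\textbf{Inequality \eqref{eq1}.} Fix $T\in\mathcal{I}$ and extend it with dummy elements to a basis $T'$; this leaves $f(S\cap T)$ and $f(S\cup T)$ unchanged. By the standard matroid basis exchange lemma, there is a bijection $h:T'\setminus S\to S\setminus T'$ with $S-h(v)+v\in\mathcal{I}$ for every $v$. Stationarity then gives, for each $v\in T'\setminus S$,
\[
f(v\mid S)-f(h(v)\mid S-h(v))\le \tfrac{\eps}{r}f(S_0).
\]
Summing over the at most $r$ such $v$ gives
\[
\sum_{v\in T'\setminus S} f(v\mid S)\ \le\ \sum_{u\in S\setminus T'} f(u\mid S-u)+\eps f(S_0).
\]
Submodularity bounds the left side from below by $f(S\cup T)-f(S)$. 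For the right side, order $S\setminus T'=\{u_1,\dots,u_k\}$ and telescope. Since $S-u_i\supseteq (S\cap T')\cup\{u_1,\dots,u_{i-1}\}$, submodularity gives
\[
\sum_i f(u_i\mid S-u_i)\le f(S)-f(S\cap T).
\]
Combining the two bounds yields $2f(S)\ge f(S\cap T)+f(S\cup T)-\eps f(S_0)$, which is stronger than \eqref{eq1} since $f(S_0)\le f(\OPT)$.

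\textbf{Main obstacle.} The delicate step is the bookkeeping around dummy elements and basis completion. It must be checked that $S$ and $T'$ are bases of the same augmented matroid $\bar{\mathcal M}$, so that the bijection lemma applies. It must also be checked that dummy elements in the exchanges contribute zero marginals, so that they do not break the telescoping. Both facts follow from $\bar f(X)=f(X\setminus\bar{\mathcal N})$ and the lemma showing $\bar{\mathcal M}$ is a matroid.
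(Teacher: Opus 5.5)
Your proposal is correct and follows the paper's proof essentially step for step. The same potential argument bounds the number of exchanges by $O(r/\eps)$, at $O(n)$ value and $O(nr)$ independence queries per iteration, and the same basis-exchange bijection combined with submodularity yields \eqref{eq1}. Your explicit telescoping bound $\sum_i f(u_i\mid S-u_i)\le f(S)-f(S\cap T)$ and your treatment of the dummy elements simply spell out what the paper compresses into a single ``by submodularity'' step.
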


\begin{proof}
    Due to dummy elements, we can assume $|S| = |T| = r$, both being bases of $\mathcal{M}$. Let $\{u_1, \dots, u_k\}$ and $\{v_1, \dots, v_k\}$ denote the elements in $S \setminus T$ and $T \setminus S$, respectively. By the matroid exchange property, we can find a bijection $h: S \setminus T \rightarrow T \setminus S$ such that $S - u_i + h(u_i) \in \mathcal{I}$ for each $u_i$.
    \begin{align*}
        f(S \cup T) + f(S \cap T) - 2f(S) 
        = &  (f(S \cup T) - f(S)) - (f(S) - f(S \cap T)) \\
        \le &\left( \sum_{i=1}^k f(v_i \mid S) \right) - \left( \sum_{i=1}^k f(u_i \mid S - u_i) \right) \quad \text{(by submodularity)} \\
        = &\sum_{i=1}^k \left[ f(h(u_i) \mid S) - f(u_i \mid S - u_i) \right] \\
        \le &k \cdot \frac{\eps}{r} \cdot f(S_0) \le \eps \cdot f(OPT) \quad \text{(by loop condition)}
    \end{align*}
    Consequently, the solution returned by Algorithm \ref{alg1} satisfies the inequality (\ref{eq1}). In each iteration, we have:
    \begin{align*}
        f(S + v - u) - f(S) &= (f(S + v - u) - f(S - u)) + (f(S - u) - f(S)) \\
        &\ge (f(S + v) - f(S)) + (f(S - u) - f(S)) \quad \text{(by submodularity)} \\
        &= f(v \mid S) - f(u \mid S - u) \\
        &\ge \frac{\eps}{r} \cdot f(S_0) > \frac{\eps}{4r} \cdot f(OPT)
    \end{align*}
    The inequality in the algorithm's loop condition ensures the last step. Since $f(S) - f(S_0) \le f(OPT) - f(S_0) \le f(OPT)$, the loop must terminate after at most $\frac{4r}{\eps}$ iterations; otherwise, it would contradict $f(S) - f(S_0) \le f(OPT)$. In each iteration, the algorithm requires $O(n)$ value oracle queries and $O(nr)$ independence oracle queries. Totally, Algorithm \ref{alg1} needs $O(\eps^{-1}nr)$ value oracle queries and $O(\eps^{-1}nr^2)$ independence oracle queries.
\end{proof}

\subsection{Deterministic aided Continuous Greedy}
In this subsection, we introduce the aided continuous greedy algorithm that leverages the discrete stationary point returned by the local search algorithm to guide the forward direction on continuous greedy algorithm. Firstly, we introduce a combinatorial algorithm for obtaining a forward direction. The continuous greedy process then proceeds by taking a small step along this direction.
By construction, the direction obtained at each step yields a sufficiently large marginal gain, and thus the algorithm advances by updating the current solution by a small constant multiple of this vector.

\paragraph{Split Algorithm.}
\label{split}
The split algorithm serves to extract a forward direction with the largest marginal gain at the current point, under the constraint that the support size remains bounded.
This constraint is crucial from a computational perspective: evaluating the extended multilinear extension value incurs a cost that scales with the size of its support. The split algorithm returns a partition of an independent set. As the number of partitions are bounded, the size of support set for the forward direction is also bounded by $\ell$. Although the theoretical guarantees for Algorithm \ref{alg2} was proposed in \cite{DBLP:conf/stoc/BuchbinderF25}, for completeness, we also restate the proof based on the \lovasz extension, which has some subtle differences from \cite{DBLP:conf/stoc/BuchbinderF25}. 

\begin{algorithm}[H]
  \caption{\textsc{Split}$(\mathcal{M}=(\mathcal{N},\mathcal{I}), f, \ell)$}
  \label{alg2}
  \small
  Initialize $T_1 \gets \emptyset,\; T_2 \gets \emptyset,\; \dots,\; T_\ell \gets \emptyset$\;
  Let $T \gets \bigcup_{i=1}^\ell T_i$\;
  \While{$T$ is not a basis of $\mathcal{M}$}{
    $\mathcal{N}' \gets \{ u \in \mathcal{N} \setminus T \mid T + u \in \mathcal{I} \}$\;
    $(u, j) \gets \arg\max_{(u, j) \in \mathcal{N}' \times [\ell]} f(u \mid T_j)$\;
    $T_j \gets T_j + u$\;
  }
  \Return $(T_1, T_2, \dots, T_\ell)$\;
\end{algorithm}
\begin{lemma}[Lemma 4.2 in \cite{DBLP:conf/stoc/BuchbinderF25}]
    \label{lemmaspliteq}
    Given a matroid $\mathcal{M}$ on the ground set $\mathcal{N}$, a submodular function $f$, and an integer $\ell$ as input, Algorithm \ref{alg2} outputs disjoint sets $(T_1, T_2, \dots, T_\ell)$ whose union is a basis of $\mathcal{M}$, satisfying for any set $O \subseteq \mathcal{N}$:
    \begin{align}
    \label{spliteq}
        \sum_{j=1}^{\ell} f(T_j \mid \emptyset) \ge \left(1 - \frac{1}{\ell}\right) \cdot f(O) - \frac{1}{\ell} \sum_{j \in [\ell]} f(T_j)
    \end{align}
\end{lemma}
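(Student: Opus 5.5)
The plan is a charging argument that compares each greedy step against one element of $O$, in the style of the classical $1/2$-approximation for greedy over a matroid, and then averages over the $\ell$ parts. I assume $O\in\mathcal{I}$; using the dummy elements it can be padded to a basis without changing $f(O)$. I also use that $f$ is non-negative.

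\textbf{Step 1: match $O$ to the greedy order.} Let $t_1,\dots,t_r$ be the elements of $T=\bigcup_j T_j$ in the order Algorithm~\ref{alg2} adds them. By the standard matroid exchange fact (Brualdi-type bijection for two bases), $O$ can be ordered as $o_1,\dots,o_r$ so that $\{t_1,\dots,t_{i-1}\}+o_i\in\mathcal{I}$ for every $i$. We may take $o_i=t_i$ whenever $t_i\in O$, so that $o_i\notin\{t_1,\dots,t_{i-1}\}$. Then $o_i\in\mathcal{N}'$ at iteration $i$.

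\textbf{Step 2: bound each greedy gain.} Let $T_j^{(i-1)}$ denote part $j$ just before step $i$. Since the algorithm maximizes $f(u\mid T_j)$ over pairs in $\mathcal{N}'\times[\ell]$, the gain $g_i$ at step $i$ satisfies $g_i\ge f(o_i\mid T_j^{(i-1)})$ for every $j$. Averaging over $j$ and applying submodularity (since $T_j^{(i-1)}\subseteq T_j$) gives
\[
g_i\ge \tfrac1\ell\sum_{j}f(o_i\mid T_j).
\]
Summing over $i$, the left side telescopes to $\sum_j f(T_j\mid\emptyset)$. On the right, submodularity gives $\sum_i f(o_i\mid T_j)\ge f(O\cup T_j)-f(T_j)$ for each $j$. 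Hence
\[
\sum_j f(T_j\mid\emptyset)\ge \tfrac1\ell\sum_j\bigl(f(O\cup T_j)-f(T_j)\bigr).
\]

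\textbf{Step 3: sum over the disjoint parts.} Because the $T_j$ are disjoint, repeated application of the lattice form of submodularity gives
\[
\sum_j f(O\cup T_j)\ge (\ell-1)f(O)+f(O\cup T).
\]
Non-negativity gives $f(O\cup T)\ge 0$, so $\sum_j f(O\cup T_j)\ge(\ell-1)f(O)$. Substituting into Step 2 yields exactly~\eqref{spliteq}.

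\textbf{Main obstacle.} The only delicate point is Step 1. It requires the existence of the exchange ordering for a basis built incrementally. It also requires handling elements of $O\cap T$ so that each $o_i$ is genuinely a candidate at step $i$. I would prove the ordering by reverse induction on $i$, using the exchange axiom between $\{t_1,\dots,t_{i-1}\}$ and the set of still-unmatched elements of $O$.
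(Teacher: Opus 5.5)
Your proof is correct in structure and follows the paper's skeleton. You pair each greedy step with an element of $O$ that was still a feasible candidate, bound the greedy gain by the average of that element's marginals over the $\ell$ parts, pass to the final parts $T_j$ by submodularity, and sum.

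Two ingredients are obtained differently:
\begin{itemize}
\item \emph{Pairing.} The paper takes a basis-exchange bijection $h\colon O\to T$ with $(T-h(u))+u\in\mathcal I$, and uses down-closedness to see that $h^{-1}(t_i)$ was available when $t_i$ was added. Your reverse induction builds the prefix ordering $\{t_1,\dots,t_{i-1}\}+o_i\in\mathcal I$ directly from the augmentation axiom, which is more self-contained.
\item \emph{Final step.} The paper lower-bounds $\frac1\ell\sum_j f(O\cup T_j)$ through the Lov\'asz extension (Lemma~\ref{randomset}), applied to the random set equal to $T_j\cup O$ with probability $1/\ell$. Your chain of lattice inequalities uses $(O\cup T_j)\cap(O\cup T_k)=O$. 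It is more elementary and even keeps the extra term $\frac1\ell f(O\cup T)$. The sampling argument's advantage is that it still works when the $T_j$ overlap in a bounded way rather than being disjoint.
\end{itemize}
Your restriction to $O\in\mathcal I$ is the correct reading. For dependent $O$ the inequality can fail: take $f(S)=|S|$ on a rank-one uniform matroid with $|\mathcal N|\ge 4$, $O=\mathcal N$ and $\ell=2$. The paper's proof also tacitly assumes independence, and Corollary~\ref{splitprop} makes this explicit.

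One step needs a justification you have not supplied. In Step~2, the inequality $f(o_i\mid T_j^{(i-1)})\ge f(o_i\mid T_j)$ does not follow from submodularity when $o_i\in T_j$. In that case the right-hand side is $0$, while the left-hand side may be negative. With your choice $o_i=t_i$ for $t_i\in O$, this happens exactly when $j=j_i$. There the left-hand side is $g_i$ itself, so what you need is $g_i\ge 0$.

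This is where the dummy elements are genuinely used, beyond padding $O$. As long as $T$ is not a basis, an unused dummy element lies in $\mathcal N'$ and has marginal $0$ with respect to every part. Hence the greedy choice has $g_i\ge 0$.

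This hypothesis cannot be dropped. Take the one-element matroid $\{a\}$ with $f(\emptyset)=1$, $f(\{a\})=0$ and $\ell=2$. The algorithm returns $T_1=\{a\}$ and $T_2=\emptyset$. For $O=\{a\}$ the left-hand side is $-1$ while the right-hand side is $-\frac12$.

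The paper's proof passes over the same point, for $u\in O\cap T$ with $h(u)=u$. Once you add this sentence, your argument is complete.
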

\begin{proof}
    Let $T = \bigcup_{j=1}^\ell T_j$. Due to dummy elements, we can assume $|T| = |O| = r$.
    By the exchange property of matroid, there exists a bijection $h: O \rightarrow T$ such that for any $u \in O \setminus T$, $(T - h(u)) + u \in \mathcal{I}$. Furthermore, for $u \in O \cap T$, $h(u) = u$.
    Let $u_i \in T \setminus O$ be the $i$-th element added to $T$ in Algorithm \ref{alg2}. Define $T_j^i = T_j \setminus \{ h^{-1}(u_k) \mid k > i \}$, and let $j_i$ denote the index of the set to which $u_i$ belongs, i.e., $u_i \in T_{j_i}$. According to the algorithm, we state:
    \begin{align*}
        f(u_i \mid T_{j_i}^{i-1}) \ge \frac{1}{\ell} \sum_{j=1}^\ell f(h^{-1}(u_i) \mid T_{j}^{i-1}) \ge \frac{1}{\ell} \sum_{j=1}^\ell f(h^{-1}(u_i) \mid T_{j})
    \end{align*}
    The first inequality holds because the exchange property guarantees $(T - u_i) + h^{-1}(u_i) \in \mathcal{I}$, so the greedy strategy of the algorithm ensures $f(u_i \mid T_{j_i}^{i-1}) \ge f(h^{-1}(u_i) \mid T_{j}^{i-1})$ for any $j$ (otherwise $h^{-1}(u_i)$ would have been added to some $T_j$ at this step). The second inequality follows from the submodularity of $f$.
    Summing the above equation over all $i$ gives:
    \begin{align*}
        \sum_{j=1}^\ell f(T_j \mid \emptyset) &= \sum_{i=1}^{r} f(u_i \mid T_{j_i}^{i-1}) \nonumber \ge \frac{1}{\ell} \sum_{j=1}^\ell \sum_{i=1}^r f(h^{-1}(u_i) \mid T_j) \nonumber \\
        &\ge \frac{1}{\ell} \sum_{j=1}^\ell f\left( \bigcup_{i=1}^r \{ h^{-1}(u_i) \} \mid T_j \right) \quad \text{(by submodularity)} \nonumber \\
        &= \frac{1}{\ell} \sum_{j=1}^\ell \left( f(T_j \cup O) - f(T_j) \right) 
    \end{align*}
    Let $A$ be a random set where for each $1 \le j \le \ell$, $\Pr[A = T_j \cup O] = \frac{1}{\ell}$. Then $\frac{1}{\ell} \sum_{j=1}^{\ell} f(T_j \cup O) = \mathbb{E} [f(A)]$.
    Let $p_u = \Pr[u \in A]$. Since the $T_j$ are disjoint, $p_u \le \frac{1}{\ell}$ for all $u \in \mathcal{N} \setminus O$, and $p_u = 1$ for all $u \in O$.
    By Lemma \ref{randomset},
    \begin{align*}
        \mathbb{E} [f(A)] \ge \hat{f}(p) \ge \int_{\frac{1}{\ell}}^1 f(T_{\lambda}(p))  d\lambda \ge \left(1 - \frac{1}{\ell}\right) f(O)
    \end{align*}
    The first inequality is from Lemma \ref{randomset}, the second from the definition of the Lovász extension, and the third follows from the property of $p$: only elements in $O$ have probability $> \frac{1}{\ell}$ of being in the random set.
    Combining these inequalities yields (\ref{spliteq}).
\end{proof}
Corollary~\ref{splitprop} specializes Lemma~\ref{lemmaspliteq} to the regime $\ell = 1/\varepsilon$ and rewrites the inequality in terms of total marginal gain with respect to the empty set. As we will show later, this form allows us to establish a lower bound on the improvement obtained by the forward direction induced by the split algorithm. Essentially, Corollary~\ref{splitprop} provides a quantitative guarantee on the quality of the direction induced by the split algorithm. When the function $f$ is instantiated as the objective evaluated at the current fractional solution, the left-hand side represents the total marginal gain contributed by the components of the forward direction. The inequality then implies that this direction achieves a sufficiently large increase relative to any independent set $O$, and therefore constitutes a sufficiently good forward direction for the continuous greedy process.
\begin{corollary}
    \label{splitprop}
    In Algorithm \ref{alg2}, for any $O \in \mathcal{I}$, if $\ell \ge \frac{1}{\eps}$, then:
    \begin{align}
        \sum_{j=1}^{\ell} f(T_j \mid \emptyset) \ge \left(1 - 2\eps\right) \cdot f(O) - \left(1 -\eps\right) \cdot f(\emptyset)
    \end{align}
\end{corollary}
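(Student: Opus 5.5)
The plan is to derive the corollary from Lemma~\ref{lemmaspliteq} and then fix the coefficient of $f(\emptyset)$ to exactly $1-\eps$ for every $\ell \ge 1/\eps$. The fix is to take a convex combination with a second, trivial bound, namely that the total greedy gain is nonnegative. Throughout, write $X \triangleq \sum_{j=1}^{\ell} f(T_j \mid \emptyset)$.

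\textbf{Step 1 (rewrite the lemma).} Since $f(T_j) = f(T_j\mid\emptyset) + f(\emptyset)$, we have $\sum_j f(T_j) = X + \ell f(\emptyset)$. Substituting into (\ref{spliteq}) gives
\[
X \ge \left(1-\tfrac{1}{\ell}\right) f(O) - \tfrac{1}{\ell}X - f(\emptyset),
\qquad\text{hence}\qquad
X \ge \tfrac{\ell-1}{\ell+1}\, f(O) - \tfrac{\ell}{\ell+1}\, f(\emptyset).
\]
Taken alone, this bound has coefficient $\tfrac{\ell}{\ell+1}$ on $f(\emptyset)$, which exceeds $1-\eps$ when $\ell$ is large. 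So it does not suffice by itself.

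\textbf{Step 2 (a second bound: $X \ge 0$).} Because the problem contains enough dummy elements, whenever $T$ is not yet a basis some dummy element $d$ with $T+d\in\mathcal{I}$ is available. Its marginal $f(d\mid T_j)$ is $0$ for every $j$. Algorithm~\ref{alg2} picks the pair maximizing $f(u\mid T_j)$, so every greedy step adds marginal at least $0$. Telescoping within each $T_j$ then gives $f(T_j\mid\emptyset)\ge 0$, and hence $X \ge 0$.

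\textbf{Step 3 (convex combination).} Set $\lambda \triangleq (1-\eps)\frac{\ell+1}{\ell}$. We have $\lambda\ge 0$, and $\lambda \le 1$ is equivalent to $\ell \ge \frac{1-\eps}{\eps}$, which holds because $\ell\ge 1/\eps$. Taking $\lambda$ times the bound of Step~1 plus $(1-\lambda)$ times the bound of Step~2 yields
\[
X \ge (1-\eps)\tfrac{\ell-1}{\ell}\, f(O) - (1-\eps)\, f(\emptyset).
\]
Finally, $\tfrac{\ell-1}{\ell} \ge 1-\eps$ for $\ell\ge 1/\eps$, so $(1-\eps)\tfrac{\ell-1}{\ell} \ge (1-\eps)^2 \ge 1-2\eps$. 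Since $f(O)\ge 0$, we may lower this coefficient to $1-2\eps$, which gives the claimed inequality.

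The only delicate point is Step~2. Nonnegativity of the greedy gains relies on the dummy-element convention, which keeps a zero-marginal feasible element available until $T$ is a basis. The choice of $\lambda$ in Step~3 is exactly what makes the $f(\emptyset)$ coefficient $1-\eps$ for every $\ell\ge 1/\eps$.
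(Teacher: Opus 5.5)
Your proof is correct. It uses the same two ingredients as the paper: the rearranged form of Lemma~\ref{lemmaspliteq}, and $X \ge 0$ from the dummy elements. What differs is how you combine them.

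The paper splits on whether $f(O) \ge f(\emptyset)$. If $f(O) < f(\emptyset)$, the right-hand side is negative and $X \ge 0$ suffices. If $f(O) \ge f(\emptyset)$, the coefficient comparison really amounts to regrouping the bound as $\frac{\ell-1}{\ell+1}\bigl(f(O)-f(\emptyset)\bigr) - \frac{1}{\ell+1} f(\emptyset)$ and then using $\frac{\ell-1}{\ell+1} \ge 1-2\eps$ and $\frac{1}{\ell+1} \le \eps$. The inequalities the paper actually cites bound $\frac{1}{1+\eps}$ from below. They cannot by themselves control the $f(\emptyset)$ coefficient, which is exactly the obstruction you identify in Step~1.

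Your choice $\lambda = (1-\eps)\frac{\ell+1}{\ell}$ avoids the case split. It also makes the $f(\emptyset)$ coefficient exactly $1-\eps$ for every $\ell \ge 1/\eps$.

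The sign assumptions differ slightly. You need $f(O) \ge 0$, while the paper's case analysis needs $f(\emptyset) \ge 0$. Both hold where the corollary is used. There it is applied to $g_{-\symz_i}$, which is not globally non-negative, but $g_{-\symz_i}(OPT_i) = F(\mathbf{e}_{OPT_i} \vee \symy^{i-1}) \ge 0$ and $g_{-\symz_i}(\emptyset) = F(\symy^{i-1}) \ge 0$.
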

\begin{proof} 
    \[
    \begin{array}{c}
    \sum_{j=1}^{\ell} (f(T_j) - f(\emptyset)) \ge \left(1 - \frac1\ell\right) f(O) - \frac1\ell \sum_{j=1}^{\ell} f(T_j) \\
    \Updownarrow \\
        \left(1 + \frac1\ell\right) \sum_{j=1}^{\ell} (f(T_j) - f(\emptyset)) \ge \left(1 - \frac1\ell\right) f(O) - f(\emptyset) \\
        \Updownarrow \\
        \begin{aligned}
        \sum_{j=1}^{\ell} (f(T_j) - f(\emptyset)) &\ge \frac{1 - \frac1\ell}{1 + \frac1\ell} f(O) - \frac{1}{1 + \frac1\ell} f(\emptyset) \\
        &\ge \left(1 - 2\eps\right) f(O) - \left(1 - \eps\right) f(\emptyset)
        \end{aligned}
    \end{array}
    \]
    The last inequality uses $\frac{1 - \eps}{1 + \eps} \ge 1 - 2\eps+\eps^2, \frac{1}{1+\eps}\ge 1-\eps-\eps^2$ for $\eps \in (0,1)$ and assumes $f(O) \ge f(\emptyset)$ for a simpler coefficient form. If $f(O) < f(\emptyset)$, the right-hand side is negative. However, dummy elements ensure $f(T_j) \ge f(\emptyset)$ since $f(T_j)$ is non-decreasing when elements are added. 
\end{proof}


\paragraph{Continuous Greedy Algorithm.} Inspired by \cite{DBLP:journals/corr/abs-2307-09616}, the authors illustrated the stationary point for non-monotone submodular maximization is probably arbitrary bad, which implies the solution far away from the bad stationary point has the potential to be a good solution. Leveraging this insight, we propose a deterministic continuous greedy algorithm that combines the information of the discrete stationary point through an auxiliary function. In Algorithm \ref{alg3}, we adopt the parameter $t_s$ to control the distance close to the stationary point. Before $t_s$, the feasible ascent directions is always orthogonal to the stationary point, i.e., always moving in the direction away from the stationary point. After $t_s$, we release the limitation for ascent direction, i.e., moving in all possible direction in feasible domain. Additionally, the concrete direction in each iteration is returned by the Algorithm \ref{alg2} based on the auxiliary function, while limiting the number of coordinates at the same time. The formal statement is as follows:


\begin{algorithm}[H]
  \caption{\textsc{ContinuousGreedy}($\mathcal{M}=(\mathcal{N},\mathcal{I})$, $f$, $\symz \subseteq \mathcal{N}$, $t_s \in [0,1]$, $\eps \in (0,1)$)}
  \label{alg3}
  \small
  $\delta \gets \eps^3,\quad \symy^0 \gets \mathbf{0}$\;
  \For{$i = 1$ \KwTo $1/\delta$}{
    $\symz_i \gets
      \begin{cases}
        \symz      & \text{if } i \in [1, t_s / \delta] \\
        \emptyset  & \text{if } i \in (t_s / \delta, 1 / \delta]
      \end{cases}$\;
    Define $f_{-\symz}(S) \triangleq f(S) - \sum_{u \in \symz \cap S}\bigl(f(\{u\}) - f(\emptyset) + 1\bigr)$\;
    Define $g_{-\symz}(S) \triangleq \mathbb{E}\bigl[f_{-Z}\bigl(\reali(\mathbf{e}_{S} \vee \symy^{i-1})\bigr)\bigr]$\;
    $(T_1,\dots,T_{1/\eps}) \gets \text{\textsc{Split}}(\mathcal{M}, g_{-\symz_i}, 1/\eps)$\;
    $\symx^i \gets \delta \cdot \sum_{j=1}^{1/\eps} \mathbf{e}_{T_j}$\;
    $\symy^{i} \gets \symy^{i-1} \oplus \symx^i$\;
  }
  \Return {$\symy^{1/\delta}$}\;
\end{algorithm}
We briefly explain the role of the modified set functions $f_{-\symz}$ and $g_{-\symz}$ used in Algorithm~\ref{alg3}. The purpose of these functions is purely technical: they ensure that elements in $\symz$ are never selected by the split procedure, while preserving the relative marginal values of all other elements. Specifically, for any $u \in \symz$, the definition of $f_{-\symz}$ enforces a strictly negative marginal contribution for adding $u$ to any set, and hence $u$ is dominated by every element outside $\symz$ in the greedy selection. For elements $v \notin \symz$, the marginal gains $f_{-\symz}(v \mid S)$ coincide with $f(v \mid S)$ for all $S \subseteq \mathcal{N}$, so the behavior of the objective function on $\mathcal{N} \setminus \symz$ remains unchanged. The function $g_{-\symz}$ is the corresponding extended multilinear extension evaluated
at the current fractional point. Importantly, the support of $f_{-\symz}$ and $g_{-\symz}$ differs from that of
$f$ only by the elements in $\symz$, and evaluating $g_{-\symz}$ incurs the same asymptotic computational cost as evaluating the original function, up to a constant-factor overhead.

\begin{lemma}
    The set functions $f_{-\symz}(S)$ and $g_{-\symz}(S)$ defined in Algorithm \ref{alg3} are submodular. Note we do not require these functions to be non-negative.
\end{lemma}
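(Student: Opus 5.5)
The plan is to show that $f_{-\symz}$ is $f$ plus a modular function, hence submodular, and then to show that $g_{-\symz}$ is the composition of $f_{-\symz}$ with the ``union with an independent random set'' operation, which preserves submodularity.

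\textbf{Submodularity of $f_{-\symz}$.} Write $c_u \triangleq f(\{u\}) - f(\emptyset) + 1$ for $u\in\symz$ and $c_u = 0$ otherwise. Then
\[
f_{-\symz}(S) = f(S) - \sum_{u\in S} c_u = f(S) - c(S),
\]
where $c(\cdot)$ is modular. A modular function satisfies $c(S)+c(T) = c(S\cup T)+c(S\cap T)$ for all $S,T\subseteq\N$, because $\mathbf{1}_S+\mathbf{1}_T=\mathbf{1}_{S\cup T}+\mathbf{1}_{S\cap T}$. Subtracting this identity from the submodular inequality for $f$ gives
\[
f_{-\symz}(S)+f_{-\symz}(T)\ge f_{-\symz}(S\cup T)+f_{-\symz}(S\cap T).
\]
Equivalently, marginals satisfy $f_{-\symz}(u\mid A)=f(u\mid A)-c_u$, and the constant shift $c_u$ preserves diminishing returns. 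Non-negativity is not needed and is not claimed.

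\textbf{Submodularity of $g_{-\symz}$.} Let $R=\reali(\symy^{i-1})$. Since $\reali(\mathbf{e}_S\vee\symy^{i-1})$ is determined by independent coordinates, and coordinate $S$ is forced to $1$, we have $\reali(\mathbf{e}_S\vee\symy^{i-1}) = S\cup R'$. Here $R'$ is the union of $\reali(\symy^{i-1},T)$ over $T\neq S$.

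There is a subtlety: the coordinate $\symy_S$ itself is overwritten. However, its contribution $\reali(\symy^{i-1},S)\subseteq S$ is absorbed into $S$ anyway. So $S\cup R' = S\cup R$ pointwise, under the natural coupling in which all coordinates share the same randomness. Hence
\[
g_{-\symz}(S)=\mathbb{E}\bigl[f_{-\symz}(S\cup R)\bigr],
\]
where the distribution of $R$ does not depend on $S$.

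For each fixed realization $R$, the map $S\mapsto f_{-\symz}(S\cup R)$ is submodular. To see this, note that for $A\subseteq B$ and $u\notin B$, either $u\in R$, in which case both marginals vanish, or $u \notin R$, in which case
\[
f_{-\symz}(u\mid A\cup R)\ge f_{-\symz}(u\mid B\cup R)
\]
by submodularity, since $A\cup R\subseteq B\cup R$. A non-negative combination (expectation) of submodular functions is submodular, so $g_{-\symz}$ is submodular. This is exactly the argument behind Observation~\ref{observation 8}, now applied to $f_{-\symz}$ without the non-negativity part.

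\textbf{Main obstacle.} The only delicate step is justifying the identity $\reali(\mathbf{e}_S\vee\symy)=S\cup\reali(\symy)$ in distribution, that is, handling the overwritten coordinate $S$. I would handle it by the coupling just described, or alternatively by citing Observation~\ref{observation 8}, whose proof does not use non-negativity, with $f$ replaced by $f_{-\symz}$. Everything else is routine.
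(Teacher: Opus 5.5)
Your proof is correct and follows the same route as the paper. The paper also treats $f_{-\symz}(u\mid S)=f(u\mid S)-[u\in\symz]\bigl(f(\{u\})-f(\emptyset)+1\bigr)$ as a constant shift of $f$'s marginals, then writes $g_{-\symz}(S)=\mathbb{E}[f_{-\symz}(\reali(\symy^{i-1})\cup S)]$ and applies submodularity of $f_{-\symz}$ inside the expectation; your handling of the case $u\in\reali(\symy^{i-1})$ and of the overwritten coordinate $\symy_S$ just spells out details the paper leaves implicit.
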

\begin{proof}
    For any $A \subseteq B \subseteq \mathcal{N}, u \in \mathcal{N} \setminus B$:
    \begin{align*}
        f_{-\symz}(u \mid B) &= f(u \mid B) - [u \in \symz] (f(\{u\}) - f(\emptyset) + 1) \\
        &\le f(u \mid A) - [u \in \symz] (f(\{u\}) - f(\emptyset) + 1) = f_{-\symz}(u \mid A)
    \end{align*}
    Thus $f_{-\symz}$ is submodular. Next,
    \begin{align*}
        g_{-\symz}(u \mid B) &= \mathbb{E} [f_{-\symz}(\reali(\mathbf{e}_{B + u} \vee \symy^{i-1}))] - \mathbb{E} [f_{-\symz}(\reali(\mathbf{e}_{B} \vee \symy^{i-1}))] \\
        &= \mathbb{E} \left[ f_{-\symz}(\reali(\symy^{i-1}) \cup (B + u)) - f_{-\symz}(\reali(\symy^{i-1}) \cup B) \right] \\
        &\le \mathbb{E} \left[ f_{-\symz}(\reali(\symy^{i-1}) \cup (A + u)) - f_{-\symz}(\reali(\symy^{i-1}) \cup A) \right] \quad \text{(submodularity)} \\
        &= g_{-\symz}(u \mid A)
    \end{align*}
    The second equality treats both $\reali(\symy^{i-1})$ occurrences as the same random variable. The inequality uses the submodularity of $f_{-\symz}$. Thus $g_{-\symz}$ is submodular.
\end{proof}
\begin{lemma}
    For any $u \in Z$, the function $g_{-Z}(S)$ defined in Algorithm \ref{alg3} satisfies for any set $S \subseteq \mathcal{N} - u$: $ g_{-Z}(S) > g_{-Z}(S + u)$.
\end{lemma}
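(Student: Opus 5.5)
We need to show that $g_{-Z}(S+u) - g_{-Z}(S) < 0$ for $u\in Z$ and $S\subseteq \mathcal{N}-u$. The plan is to write $g_{-Z}(u\mid S)$ as an expectation, exactly as in the submodularity proof above, and then bound it pointwise.

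\textbf{Step 1: express the marginal as an expectation.} By the definition of $\reali$, we have $\reali(\mathbf{e}_{S}\vee \symy^{i-1}) = \reali(\symy^{i-1})\cup S$ almost surely, and likewise for $S+u$. Coupling both through the same random set $R=\reali(\symy^{i-1})$ gives
\[
g_{-Z}(S+u)-g_{-Z}(S)=\mathbb{E}\bigl[f_{-Z}(R\cup S+u)-f_{-Z}(R\cup S)\bigr].
\]

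\textbf{Step 2: bound the marginal pointwise.} Fix a realization of $R$ and split on whether $u\in R$.
\begin{itemize}
\item If $u\in R$, then $R\cup S+u=R\cup S$, so the difference is $0$.
\item If $u\notin R$, let $A=R\cup S$, which satisfies $u\notin A$. Since $u\in Z$,
\[
f_{-Z}(u\mid A)=f(u\mid A)-\bigl(f(\{u\})-f(\emptyset)+1\bigr).
\]
By submodularity, $f(u\mid A)\le f(u\mid\emptyset)=f(\{u\})-f(\emptyset)$, so this quantity is at most $-1$.
\end{itemize}

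\textbf{Step 3: take expectations.} Combining the two cases,
\[
g_{-Z}(u\mid S)\le -\Pr[u\notin R]=-(1-\Mar_u(\symy^{i-1})).
\]
This is strictly negative exactly when $\Mar_u(\symy^{i-1})<1$.

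\textbf{Main obstacle: showing $\Mar_u(\symy^{i-1})<1$.} This inequality is needed for strictness. I would argue it from the construction, proving by induction on $i$ that $\Mar(\symy^{i})\le 1-(1-\delta)^{i}<\mathbf{1}$ coordinatewise. The inductive step uses three facts:
\begin{itemize}
\item each $\symx^i$ has entries at most $\delta<1$;
\item the Split output sets $T_j$ are disjoint, so each element lies in at most one $T_j$;
\item by Observation~\ref{maroplus}, $\Mar(\symy^{i-1}\oplus\symx^i)=\Mar(\symy^{i-1})\oplus\Mar(\symx^i)$.
\end{itemize}
Together these give $1-\Mar_u(\symy^{i})\ge(1-\delta)(1-\Mar_u(\symy^{i-1}))$, and hence $1-\Mar_u(\symy^{i-1})\ge(1-\delta)^{i-1}>0$. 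Alternatively, in the phase $i\le t_s/\delta$ one can argue that elements of $Z$ are never selected, so $\Mar_u=0$ there, which gives the bound $-1$ directly. The inductive argument is cleaner and covers every iteration.
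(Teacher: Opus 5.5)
Your proof is correct and follows the paper's route: you bound the marginal of $f_{-Z}$ at $u$ by $-1$ via submodularity, then average over the realization of $\reali(\symy^{i-1})$. Your case split on $u\in R$ is more careful than the paper's proof, which applies its pointwise claim even to realizations containing $u$ (where only equality holds), so strictness really does need your bound $\Mar_u(\symy^{i-1})\le 1-(1-\delta)^{i-1}<1$; this is the first part of Lemma~\ref{suppproof}, which does not depend on the present lemma, so there is no circularity.
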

\begin{proof}
    First, we claim: For any $T \subseteq \mathcal{N} - u$, $f_{-\symz}(T) > f_{-\symz}(T + u)$. This follows from the definition of $f_{-\symz}$ and submodularity:
    \begin{align*}
        f_{-\symz}(T + u) - f_{-\symz}(T) &= (f(T + u) - f(T)) + (-1 \cdot (f(\{u\}) - f(\emptyset) + 1) \\
        &\le (f(\{u\}) - f(\emptyset)) - (f(\{u\}) - f(\emptyset) + 1) = -1
    \end{align*}
    Now, note that $\reali(\mathbf{e}_{S} \vee \symy) = \reali(\symy) \cup S$ because elements in $S$ are always included in $\reali(\mathbf{e}_{S} \vee \symy)$, and the values at indices corresponding to $S$ do not affect the distribution of other elements in $\reali(\mathbf{y})$.
    Therefore,
    \begin{align*}
        g_{-Z}(S) &= \mathbb{E} [f_{-\symz}(\reali(\mathbf{y}) \cup S)] = \sum_{T \subseteq \mathcal{N}} \Pr[\reali(\mathbf{y}) = T] \cdot f_{-\symz}(T \cup S) \\
        &> \sum_{T \subseteq \mathcal{N}} \Pr[\reali(\mathbf{y}) = T] \cdot f_{-\symz}(T \cup (S + u)) \quad \text{(by claim above)} \\
        &= \mathbb{E} [f_{-\symz}(\reali(\mathbf{y}) \cup (S + u))] = g_{-Z}(S + u)
    \end{align*}
\end{proof}
Due to the negative incremental gain property of elements $u \in Z$, and the fact that dummy elements have zero incremental gain and are sufficiently numerous, such $u$ will never be selected by the \textsc{Split}(Algorithm \ref{alg2}) into any set.
\begin{corollary}
    \label{notinclude}
    In the $i$-th iteration of Algorithm \ref{alg3}, for any $j \in [1/\eps]$, $T_{j} \cap \symz_i = \emptyset$.
\end{corollary}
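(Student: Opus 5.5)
The plan is to argue directly from the greedy rule of \textsc{Split} (Algorithm~\ref{alg2}), run on the submodular function $g_{-\symz_i}$. If $i > t_s/\delta$, then $\symz_i = \emptyset$ and there is nothing to prove. So fix an iteration $i \le t_s/\delta$, where $\symz_i = \symz$. I argue by contradiction and look at the first moment in the execution of \textsc{Split} at which some $u \in \symz$ is selected, i.e.\ the step where $(u,j)$ is the $\arg\max$ over $\mathcal{N}' \times [\ell]$ and $u \in \symz$.

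At that moment $u \in \mathcal{N}'$, so $T + u \in \mathcal{I}$ and $T$ is not yet a basis. By the preceding lemma, $g_{-\symz}(u \mid T_j) < 0$ for every $j$, since $T_j \subseteq \mathcal{N} - u$. The key step is to exhibit a competitor in $\mathcal{N}'$ whose marginal is $\ge 0$. I use the dummy elements: $\bar{\mathcal{I}}$ only imposes $S \cap \mathcal{N} \in \mathcal{I}$ and $|S| \le \rank(\mathcal{M})$. Because $T$ is not a basis, $|T| < \rank(\mathcal{M})$. Because there are $n$ dummy elements and $|T| < n$, some dummy $d \notin T$ exists, and $T + d \in \bar{\mathcal{I}}$, so $d \in \mathcal{N}'$. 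Moreover, $f_{-\symz}$ is unchanged by adding a dummy ($d \notin \symz$, and $\bar f$ ignores dummies). Since $\reali(\mathbf{e}_{T_j + d} \vee \symy^{i-1}) = \reali(\symy^{i-1}) \cup T_j \cup \{d\}$, this gives $g_{-\symz}(d \mid T_j) = 0 > g_{-\symz}(u \mid T_j)$.

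So the pair $(d,j)$ strictly beats every pair $(u,j')$, which contradicts the choice of $u$ as the $\arg\max$. Hence no element of $\symz$ is ever added, and $T_j \cap \symz_i = \emptyset$ for all $j$.

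The only delicate point is the availability of a dummy outside $T$ at every step. This needs enough dummies (at least $\rank(\mathcal{M}) \ge |T|+1$), which the paper's standing assumption of ``a sufficient number of dummy elements'' provides. We also have to check that $\symz$ itself contains no dummies, or, if it does, that those dummies are simply treated as elements of $\symz$ while other dummies remain outside. In the latter case we ensure that the number of dummies exceeds $|\symz| + \rank(\mathcal{M})$, so a dummy $d \notin \symz \cup T$ always exists, and the argument goes through unchanged.
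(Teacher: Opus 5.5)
Your proof is correct and is the paper's own argument: the preceding lemma gives every $u\in\symz$ a strictly negative marginal under $g_{-\symz}$, while some dummy element outside $T$ can always be added with zero marginal, so the greedy $\arg\max$ in \textsc{Split} never selects an element of $\symz$. You also make explicit two details the paper leaves to its ``sufficiently many dummy elements'' assumption: that such a dummy is actually addable whenever $T$ is not a basis, and that enough dummies must lie outside both $\symz$ and $T$.
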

Next, we establish several structural properties of the sequence $\{\symy^i\}_{i=0}^{1/\delta}$ generated by Algorithm~\ref{alg3}. First, the solution constructed by the algorithm is feasible, in particular that the final marginal vector lies in the matroid polytope. 
\begin{lemma}
    \label{marinP}
    $\Mar(\symy^{1/\delta}) \in \mathcal{P}(\mathcal{M})$.
\end{lemma}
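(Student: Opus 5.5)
We will show that $\Mar(\symy^{1/\delta})$ is dominated coordinatewise by a convex combination of indicator vectors of independent sets. Since $\mathcal{P}(\mathcal{M})$ is down-closed, this gives $\Mar(\symy^{1/\delta}) \in \mathcal{P}(\mathcal{M})$.

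The first step is to unroll the update $\symy^i = \symy^{i-1}\oplus \symx^i$, starting from $\symy^0=\mathbf{0}$. This gives $\symy^{1/\delta} = \bigoplus_{i=1}^{1/\delta} \symx^i$. Applying Observation~\ref{maroplus} repeatedly yields
\[
\Mar(\symy^{1/\delta}) = \bigoplus_{i=1}^{1/\delta} \Mar(\symx^i).
\]
The observation requires $\symy^1+\symy^2\le \mathbf{1}$. If that hypothesis is inconvenient, we instead compute directly from the definition: $\Mar_u(\symy) = 1-\prod_{S\ni u}(1-y_S)$, and $\oplus$ multiplies the factors $(1-y_S)$, so $1-\Mar_u(\symy^{1/\delta}) = \prod_i \prod_{S\ni u}(1-\symx^i_S)$. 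Either way, the coordinatewise probabilistic sum is bounded by the ordinary sum, since $1-\prod(1-a_k)\le \sum a_k$. Hence $\Mar(\symy^{1/\delta}) \le \sum_{i} \Mar(\symx^i)$.

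Next we bound each $\Mar(\symx^i)$. The vector $\symx^i = \delta\sum_{j=1}^{1/\eps}\mathbf{e}_{T_j}$ is built from the sets $T_1,\dots,T_{1/\eps}$ returned by \textsc{Split}. By Lemma~\ref{lemmaspliteq} these sets are disjoint, so each $u$ lies in at most one $T_j$. Therefore $\Mar_u(\symx^i) = \delta$ if $u\in T^i := \bigcup_j T_j$, and $0$ otherwise. That is, $\Mar(\symx^i) = \delta\,\mathbf{1}_{T^i}$, and $T^i$ is a basis of $\mathcal{M}$, again by Lemma~\ref{lemmaspliteq}.

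Summing over the $1/\delta$ iterations gives
\[
\Mar(\symy^{1/\delta}) \le \sum_{i=1}^{1/\delta} \delta\,\mathbf{1}_{T^i}.
\]
The right-hand side is a convex combination of indicator vectors of independent sets, so it lies in $\mathcal{P}(\mathcal{M})$. Since $\mathcal{P}(\mathcal{M})$ is down-closed, any nonnegative vector dominated by a point of the polytope is also in the polytope, which finishes the argument.

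One technical point needs care. The algorithm runs on the extended ground set with dummy elements, so each $T^i$ is a basis of $\bar{\mathcal{M}}$. We must check that the conclusion transfers to $\mathcal{M}$, or that the rounding step of Theorem~\ref{rounding} is applied to $\bar{\mathcal{M}}$. Restricting to $\mathcal{N}$ works: $T^i\cap\mathcal{N}\in\mathcal{I}$, and the projection of $\mathcal{P}(\bar{\mathcal{M}})$ onto $\mathcal{N}$ lies in $\mathcal{P}(\mathcal{M})$.

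We also need that $1/\delta$ is an integer, which we may assume. The step most likely to need attention is the subadditivity of $\oplus$. It follows from the elementary inequality $1-\prod_k(1-a_k)\le\sum_k a_k$ for $a_k\in[0,1]$, proved by induction on the number of terms.
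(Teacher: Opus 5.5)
Your proof is correct and follows the same route as the paper. The paper also unrolls $\symy^{1/\delta}=\bigoplus_i \symx^i$, passes $\Mar$ through $\oplus$, bounds the probabilistic sum by the ordinary sum $\sum_i \delta\,\mathbf{1}_{T^i}$ (a convex combination of indicators of independent sets), and concludes by down-closedness of $\mathcal{P}(\mathcal{M})$. Your write-up is somewhat more careful than the paper's: the paper writes $\sum_i \symx^i \in \mathcal{P}(\mathcal{M})$ where it means $\sum_i \Mar(\symx^i)$, and it does not address the dummy-element projection or the hypothesis of Observation~\ref{maroplus}, which your direct product computation sidesteps.
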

\begin{proof}
    Since the sets returned by \textsc{Split} form a basis of $\mathcal{M}$, $\frac{1}{\delta} \Mar(\symx^i) \in \mathcal{P}(\mathcal{M})$. Therefore, their linear combination $\sum_{i \in [1/\delta]} \symx^i \in \mathcal{P}(\mathcal{M})$.
    Thus,
    \begin{align*}
        \Mar(\symy^{1/\delta}) = \Mar\left( \bigoplus_{i \in [1/\delta]} \symx^i \right) = \bigoplus_{i \in [1/\delta]} \Mar\left( \symx^i \right)
    \end{align*}
    For any $u \in \mathcal{N}$, $\Mar(\symy^{1/\delta})_u = \bigoplus_{i \in [1/\delta]} \Mar(\symx^i)_u \le \left( \sum_{i \in [1/\delta]} \symx^i \right)_u$.
    By the downward-closed property of $\mathcal{P}(\mathcal{M})$, it follows that $\Mar(\symy^{1/\delta}) \in \mathcal{P}(\mathcal{M})$.
\end{proof}

Additionally, we show that the support of $\symy^i$ grows in a controlled manner, that all marginal values remain bounded, and that the contribution of elements in $\symz$ is appropriately delayed. These bounds allow us to interpret the update rule as a valid continuous greedy process and will be repeatedly invoked in the proof of the main approximation result.
\begin{lemma}
    \label{suppproof}
    For any integer $0 \le i \le 1/\delta$, $\supp (\symy^{ i}) \le (1/\eps) \cdot i \le 1/\eps^4$, $\Vert \Mar(\symy^{ i}) \Vert_{\infty} \le 1 - (1 - \delta)^i$, and $\Vert \Mar(\symy^{ i}) \wedge \mathbf{1}_{\symz} \Vert_{\infty} \le 1 - (1 - \delta)^{\max\{0, i - t_s / \delta\}}$.
\end{lemma}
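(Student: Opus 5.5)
We argue by induction on $i$, tracking each of the three quantities separately. The base case $i=0$ is immediate: $\symy^0=\mathbf{0}$ has empty support and $\Mar(\symy^0)=\mathbf{0}$. For the inductive step we use the update $\symy^i=\symy^{i-1}\oplus\symx^i$ with $\symx^i=\delta\sum_{j=1}^{1/\eps}\mathbf{e}_{T_j}$. Here $\oplus$ is the coordinate-wise probabilistic sum on $[0,1]^{2^{\N}}$, so $(\symy^i)_S = 1-(1-\symy^{i-1}_S)(1-\symx^i_S)$.

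\emph{Support.} A coordinate $S$ of $\symy^i$ is nonzero only if it is nonzero in $\symy^{i-1}$ or in $\symx^i$. The vector $\symx^i$ has at most $1/\eps$ nonzero coordinates, namely the sets $T_1,\dots,T_{1/\eps}$ returned by \textsc{Split}. Hence $\supp(\symy^i)\le \supp(\symy^{i-1})+1/\eps\le i/\eps$. Since $i\le 1/\delta=\eps^{-3}$, this gives $\supp(\symy^i)\le \eps^{-4}$.

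\emph{Marginals.} By Observation~\ref{maroplus}, $\Mar(\symy^i)=\Mar(\symy^{i-1})\oplus\Mar(\symx^i)$. The sets $T_j$ are disjoint, so each $u\in\N$ lies in at most one $T_j$, and therefore $\Mar_u(\symx^i)\le\delta$. Consequently
\[
1-\Mar_u(\symy^i)=(1-\Mar_u(\symy^{i-1}))(1-\Mar_u(\symx^i))\ge (1-\delta)^{i-1}(1-\delta)=(1-\delta)^i,
\]
which is the second claim.

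\emph{Marginals on $\symz$.} For $u\in\symz$ and any iteration $i\le t_s/\delta$, we have $\symz_i=\symz$. Corollary~\ref{notinclude} then gives $u\notin T_j$ for every $j$, so $\Mar_u(\symx^i)=0$ and $\Mar_u(\symy^i)=\Mar_u(\symy^{i-1})$. Starting from $0$, $\Mar_u$ therefore stays $0$ throughout the first $\lfloor t_s/\delta\rfloor$ iterations. In each later iteration it can grow by at most the factor analysis above, since $1-\Mar_u$ shrinks by at most $(1-\delta)$. This yields $1-\Mar_u(\symy^i)\ge(1-\delta)^{\max\{0,i-t_s/\delta\}}$; if $t_s/\delta$ is not an integer, the exponent is read with the appropriate rounding.

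The only delicate point is the disjointness of the $T_j$ and the exclusion of $\symz$ from them. Both are guaranteed: disjointness by the \textsc{Split} construction, and exclusion by Corollary~\ref{notinclude}. The rest is direct bookkeeping.
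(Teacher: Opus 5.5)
Your proof is correct and follows the paper's argument essentially step for step: the support bound comes from the at most $1/\eps$ new coordinates added per iteration, the marginal bound comes from Observation~\ref{maroplus} together with the disjointness of the $T_j$ (giving $\Mar_u(\symx^i)\le\delta$), and the bound on $\symz$ comes from Corollary~\ref{notinclude}, which keeps $\Mar_u$ at $0$ up to iteration $t_s/\delta$. Your remark about rounding the exponent when $t_s/\delta$ is not an integer addresses a point the paper silently glosses over (it implicitly treats $t_s/\delta$ as an integer), so it is extra care rather than a deviation.
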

\begin{proof}
    Since each update from $\symy^{i}$ to $\symy^{i+1}$ affects only $1/\eps$ positions, the number of non-zero positions in $\symy^{ i}$ is at most $(1/\eps) \cdot i$. Additionally, the sets $(T_1, T_2, \dots, T_{1/\eps})$ returned by \textsc{Split} are disjoint, i.e., $\Vert \Mar(\symx^i) \Vert_{\infty} \le \delta$.
    Using the property of $\Mar$ and $\oplus$ (Observation \ref{maroplus}), for any $u \in \mathcal{N}$:
    \begin{align*}
        \Mar(\symy^{ i})_u = \Mar\left(\symy^{i-1}\right)_u \oplus \Mar\left(\symx_i\right)_u \le 1 - \left(1 - \Mar(\symy^{i-1})_u\right)(1 - \delta)
    \end{align*}
    Recursively applying this gives $\Mar(\symy^{ i})_u \le 1 - (1 - \delta)^i$.

    By Corollary \ref{notinclude}, for any $u \in \symz, i \le t_s / \delta$, we have $u \notin T_1 \cup T_2 \cup \dots T_{1/\eps}$, so $\Mar(\symy^{ i})_u = 0$. Then, applying the above inequality yields for $u \in \symz, i \ge t_s / \delta$:
    $\Mar(\symy^{i})_u \le 1 - (1 - \delta)^{i - t_s / \delta}$.
\end{proof}
\begin{lemma}
    \label{upperboundofmar}
    For any $i \le 1/\delta$, we have $1 - (1 - \delta)^i \le 1 - e^{-i\delta} + i\delta^2$, $\Vert\Mar(\symy^i)\Vert_{\infty} \le 1 - e^{-i\delta} + \delta$, and $\Vert\Mar(\symy^i) \wedge \mathbf{1}_{\symz}\Vert_{\infty} \le 1 - e^{-\max\{0, i\delta - t_s\}} + \delta$.
\end{lemma}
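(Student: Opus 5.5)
The plan is to prove the scalar inequality $1-(1-\delta)^i \le 1-e^{-i\delta}+i\delta^2$ first, and then obtain the two norm bounds by combining it with Lemma~\ref{suppproof}.

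The scalar inequality is equivalent to $e^{-i\delta}-(1-\delta)^i \le i\delta^2$. We compare the two $i$-th powers through the telescoping identity $a^i-b^i=(a-b)\sum_{k=0}^{i-1}a^k b^{i-1-k}$, with $a=e^{-\delta}$ and $b=1-\delta$. Since $e^{-\delta}\ge 1-\delta\ge 0$ and both lie in $[0,1]$, every term of the sum is at most $1$, so $e^{-i\delta}-(1-\delta)^i \le i\,(e^{-\delta}-(1-\delta))$. Taylor's theorem with the Lagrange remainder then gives $e^{-\delta}-1+\delta = \frac{\delta^2}{2}e^{-\xi}\le \frac{\delta^2}{2}\le \delta^2$, which yields the claim, in fact with slack. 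The only point to check is that $a,b\in[0,1]$ requires $\delta\le 1$, and this holds because $\delta=\eps^3$.

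For the first norm bound, Lemma~\ref{suppproof} gives $\Vert\Mar(\symy^i)\Vert_\infty\le 1-(1-\delta)^i$. Applying the scalar inequality bounds this by $1-e^{-i\delta}+i\delta^2$. Since $i\le 1/\delta$, we have $i\delta^2\le\delta$, and the bound follows.

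For the $\symz$-restricted bound, set $m=\max\{0,i-t_s/\delta\}$. Lemma~\ref{suppproof} gives $\Vert\Mar(\symy^i)\wedge\mathbf 1_{\symz}\Vert_\infty\le 1-(1-\delta)^m$. Because $m\le i\le 1/\delta$, the scalar inequality applied with $i$ replaced by $m$ gives the bound $1-e^{-m\delta}+m\delta^2\le 1-e^{-m\delta}+\delta$. Finally, $m\delta=\max\{0,i\delta-t_s\}$, which matches the stated form. If $t_s/\delta$ is not an integer, the exponent in Lemma~\ref{suppproof} must be read as a real number; the telescoping argument needs an integer exponent, so I would round $m$ down to $\lfloor m\rfloor$. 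Since the bound $1-e^{-x}$ is increasing in $x$, replacing $m$ by $\lfloor m\rfloor\le m$ only loosens it, so the estimate still holds.

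The main obstacle is the scalar inequality itself; once it is established, both norm bounds are direct substitutions into Lemma~\ref{suppproof}.
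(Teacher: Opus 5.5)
Your main argument is correct and is essentially the paper's proof written in closed form. The paper proves the scalar inequality by induction on $i$. It first shows $e^{-\delta}\le 1-\delta+\delta^2$ from the power series. Then, in the step $1-(1-\delta)^{i+1}=(1-(1-\delta)^i)(1-\delta)+\delta$, it uses $e^{-i\delta}\le 1$ and picks up at most $\delta^2$ of extra error. Unrolled, this is exactly your telescoping sum $\sum_{k=0}^{i-1}e^{-k\delta}(1-\delta)^{i-1-k}\bigl(e^{-\delta}-(1-\delta)\bigr)$. Your Lagrange remainder gives the sharper constant $i\delta^2/2$. The paper never spells out the two norm bounds; deriving them from Lemma~\ref{suppproof} with $i\delta^2\le\delta$ and $m\le i$, as you do, is what is intended.

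Your remark about non-integer $t_s/\delta$, however, does not work as stated.
\begin{itemize}
\item \textbf{Rounding down goes the wrong way.} The quantity $1-(1-\delta)^m$ is increasing in $m$. So a bound on $1-(1-\delta)^{\lfloor m\rfloor}$ says nothing about the larger quantity $1-(1-\delta)^m$. Monotonicity of $1-e^{-x}$ only lets you enlarge the right-hand side.
\item \textbf{The real exponent is not justified either.} If $t_s/\delta\notin\mathbb{Z}$, elements of $\symz$ may already be added in iteration $\lfloor t_s/\delta\rfloor+1$. The exponent the algorithm actually supports is therefore the integer $i-\lfloor t_s/\delta\rfloor=\lceil m\rceil$. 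The real-exponent form of Lemma~\ref{suppproof} is only valid when $t_s/\delta$ is an integer.
\end{itemize}

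There are two ways to fix this.
\begin{itemize}
\item \textbf{Assume integrality.} Take $t_s/\delta\in\mathbb{Z}$, as the paper tacitly does. This is harmless, e.g.\ by rounding $t_s$ to a multiple of $\delta$.
\item \textbf{Round up instead.} Let $k=\lceil m\rceil$, so $1\le k\le i\le 1/\delta$ and $m\ge k-1$. Then
$e^{-m\delta}-(1-\delta)^k\le e^{-(k-1)\delta}-(1-\delta)^{k-1}+\delta(1-\delta)^{k-1}\le \delta\bigl((k-1)\delta/2+(1-\delta)^{k-1}\bigr)\le\delta$.
The last step uses convexity of $x\mapsto x\delta/2+(1-\delta)^x$ on $[0,1/\delta]$: it equals $1$ at $x=0$ and $1/2+(1-\delta)^{1/\delta}<1$ at $x=1/\delta$. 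This recovers the stated $+\delta$ bound, but only because of your sharper $\delta^2/2$ constant. It is an extra argument, not a consequence of monotonicity.
\end{itemize}
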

\begin{proof}
    Prove by induction on $i$. For $i=0$, it holds trivially. For $i=1$, $1 - (1 - \delta) \le 1 - e^{-\delta} + \delta^2$. Expanding $e^{-\delta}$:
    \begin{align*}
        e^{-\delta} - (1 - \delta + \delta^2) &= \sum_{k \ge 0} \frac{(-\delta)^k}{k!} - (1 - \delta + \delta^2) \\
        &= -\frac{\delta^2}{2} - \sum_{k \ge 2} \left( \frac{\delta^{2k-1}}{(2k-1)!} - \frac{\delta^{2k}}{(2k)!} \right) \le 0
    \end{align*}
    So it holds for $i=1$. Assume it holds for all integers in $[0, i]$, prove for $i+1$:
    \begin{align*}
        1 - (1 - \delta)^{i+1} &= (1 - (1 - \delta)^i)(1 - \delta) + \delta \\
        &\le (1 - e^{-i\delta} + i\delta^2)(1 - \delta) + \delta \\
        &= 1 + i\delta^2 - i\delta^3 - e^{-i\delta}(1 - \delta) \\
        &\le 1 + i\delta^2 - i\delta^3 - e^{-i\delta}(e^{-\delta} - \delta^2) \quad \text{(using $i=1$ case)} \\
        &= 1 + (i\delta^2 + e^{-i\delta} \delta^2) - e^{-(i+1)\delta} - i\delta^3 \\
        &\le 1 + (i+1)\delta^2 - e^{-(i+1)\delta} \quad \text{(since $e^{-i\delta} \le 1$)}
    \end{align*}
\end{proof}
\begin{lemma}
    \label{polytime}
    Algorithm \ref{alg3} requires a total of $O_\eps(n r)$ queries.
\end{lemma}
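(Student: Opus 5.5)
The plan is to count queries per outer iteration and multiply by the number of iterations. Algorithm~\ref{alg3} runs $1/\delta = \eps^{-3}$ outer iterations. Each one makes a single call to \textsc{Split} on $g_{-\symz_i}$ with $\ell = 1/\eps$, and then performs an update $\symy^i = \symy^{i-1}\oplus \symx^i$. The update needs no oracle queries.

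\textbf{Cost of one \textsc{Split} call.} The while loop of Algorithm~\ref{alg2} adds one element per round until $T$ is a basis, so it runs at most $r$ rounds. (With dummy elements the rank is unchanged, and the ground set grows to at most $2n$, which affects only constants.) In each round, the algorithm computes $\mathcal{N}'$ using $O(n)$ independence queries, one per candidate $u$. It then evaluates $g_{-\symz_i}(u \mid T_j)$ for all pairs $(u,j) \in \mathcal{N}'\times[1/\eps]$, which is $O(n/\eps)$ evaluations of $g_{-\symz_i}$. Each $T_j$ changes by only one element per round, so cached values can be reused, but even without caching one \textsc{Split} call costs $O(nr/\eps)$ evaluations of $g$ and $O(nr)$ independence queries.

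\textbf{Cost of one evaluation of $g_{-\symz_i}$.} This is the step that needs care. By definition,
\[
g_{-\symz}(S)=\mathbb{E}\bigl[f\bigl(\reali(\mathbf{e}_S\vee\symy^{i-1})\bigr)\bigr]-\sum_{u\in\symz}\Pr\bigl[u\in \reali(\mathbf{e}_S\vee\symy^{i-1})\bigr]\bigl(f(\{u\})-f(\emptyset)+1\bigr).
\]
The first term is $F(\mathbf{e}_S\vee\symy^{i-1})$. Since $\reali(\mathbf{e}_S\vee\symy^{i-1})=\reali(\symy^{i-1})\cup S$, this equals a sum over subfamilies $\mathcal{J}$ of $\supp(\symy^{i-1})$. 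By Lemma~\ref{suppproof}, $|\supp(\symy^{i-1})|\le 1/\eps^4$, so the sum has at most $2^{\eps^{-4}}=O_\eps(1)$ terms, each needing one value query on $f$. The second term involves only $\Mar(\symy^{i-1})$ and the $n+1$ fixed values $f(\{u\})$ and $f(\emptyset)$. The marginals are computed from the known vector $\symy^{i-1}$ without queries, and the singleton values can be queried once at the start for $O(n)$ total. Hence each evaluation of $g_{-\symz_i}$ costs $O_\eps(1)$ value queries.

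\textbf{Conclusion.} Combining these counts, each iteration uses $O_\eps(nr)$ queries, and over $\eps^{-3}$ iterations the total is $O_\eps(nr)+O(n)=O_\eps(nr)$. The only point requiring care is that the $2^{|\supp|}$ factor is absorbed into the $O_\eps$. This relies on the support bound of Lemma~\ref{suppproof}, which holds because each step adds at most $1/\eps$ new coordinates, namely the sets $T_j$.
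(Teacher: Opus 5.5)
Your proposal is correct and follows the same argument as the paper. Lemma~\ref{suppproof} bounds the support by $\eps^{-4}$, so each evaluation of $g_{-\symz_i}$ costs $O_\eps(1)$ value queries; this is then multiplied by the $O(nr)$ (up to a $1/\eps$ factor) evaluations and independence queries per \textsc{Split} call, over $\eps^{-3}$ iterations. You also fill in two details the paper leaves implicit: precomputing the singleton values needed for the correction term of $f_{-\symz}$, and the explicit $1/\eps$ factor coming from the argmax over $\mathcal{N}'\times[\ell]$.
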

\begin{proof}
    Since $\reali(\symy)$ has at most $2^{\ffrac(\symy)}$ distinct possible values, and by Lemma \ref{suppproof}, $\ffrac(\symy^i) \le \supp(\symy^i) \le 1/\eps^4$ (a constant depending only on $\eps$), computing $g_{-\symz}(S)$ requires a constant number of value queries.
    In Algorithm \ref{alg3}, only the \textsc{Split} subroutine requires query complexity beyond $\eps$ dependence. One \textsc{Split} execution requires $O(n r)$ computations of $g_{-\symz}(S)$ and independence queries. Therefore, Algorithm \ref{alg3} requires $O_\eps(n r)$ queries in total.
\end{proof}

Leveraging the above lemmas for the sequence $\{\symy^i\}_{i=0}^{1/\delta}$, we can obtain a lower bound on the function value growth at each step.
\begin{lemma}
    \label{lemmaofSPLIT}
    Let $OPT_i = OPT \setminus \symz_i$. In Algorithm \ref{alg3}, for any $i \in [0, 1/\delta)$,
    \begin{align*}
        \frac{1}{\delta} \left[ F(\symy^{ i}) - F(\symy^{i-1}) \right] \ge (1 - 3\eps) \cdot \left( F(\mathbf{e}_{OPT_i} \vee \symy^{i-1}) - F(\symy^{i-1}) \right)
    \end{align*}
\end{lemma}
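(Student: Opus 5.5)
We will lower bound the one-step gain $F(\symy^i)-F(\symy^{i-1})$ by a linear term in the split sets, and then apply Corollary~\ref{splitprop} to the auxiliary function $g=g_{-\symz_i}$ built from $\symy^{i-1}$. The argument has four steps.

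\textbf{Step 1: expanding the update.} Write $\symx^i=\delta\sum_{j}\mathbf{e}_{T_j}$ with $\ell=1/\eps$ disjoint sets $T_j$. By the last identity of Observation~\ref{obs:eme},
\[
F(\symy^{i-1}\oplus\symx^i)=\sum_{J\subseteq[\ell]}\delta^{|J|}(1-\delta)^{\ell-|J|}\,F\Bigl(\symy^{i-1}\vee\sum_{j\in J}\mathbf{e}_{T_j}\Bigr).
\]
Each term can be compared with $F(\symy^{i-1})$ plus a sum of single marginals. Define $\Delta_j=F(\mathbf{e}_{T_j}\vee\symy^{i-1})-F(\symy^{i-1})$. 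By submodularity of $g_{\symy^{i-1}}$ (Observation~\ref{observation 8}) applied to the disjoint sets $T_j$ (they lie in different coordinates, and $\reali(\symy\vee\sum_{j\in J}\mathbf{e}_{T_j})=\reali(\symy)\cup\bigcup_{j\in J}T_j$), the joint gain of $J$ is at least $\sum_{j\in J}\Delta_j$ minus nonnegative interaction terms.

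\textbf{Step 2: handling the interaction terms.} The cleanest way is to use the second-order bounds in Observation~\ref{obs:eme}. Along the path $t\mapsto \symy^{i-1}\oplus t\,\symx^i/\delta$, the cross derivatives $\partial^2F/\partial y_{T_j}\partial y_{T_k}$ are nonpositive and bounded in absolute value by $O(f(\OPT))$, using nonnegativity and the fact that each term is a difference of $f$ values. Since there are only $\ell^2=\eps^{-2}$ such pairs, a second-order Taylor bound gives
\[
F(\symy^i)-F(\symy^{i-1})\ge \delta\sum_j\Delta_j-O(\delta^2\eps^{-2})f(\OPT)=\delta\sum_j\Delta_j-O(\delta\eps)f(\OPT),
\]
using $\delta=\eps^3$. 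Alternatively one can argue combinatorially: terms with $|J|\ge2$ carry total weight $O(\delta^2\ell^2)=O(\eps^4)$ relative to $\delta$, and they are dropped after bounding them below by $-f$-values. Either route gives an additive $O(\eps)\delta f(\OPT)$ loss, which we will absorb into the $3\eps$ factor, or accept as an additive error as the paper's later analysis presumably does.

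\textbf{Step 3: applying Split.} Let $g$ be $g_{-\symz_i}$ at $\symy^{i-1}$. By Corollary~\ref{notinclude}, no $T_j$ meets $\symz_i$, so $f_{-\symz_i}$ agrees with $f$ on $\reali(\symy^{i-1})\cup T_j$ up to the same correction as on $\reali(\symy^{i-1})$. Hence $g(T_j)-g(\emptyset)=\Delta_j$. For $O=OPT_i\in\I$, which is disjoint from $\symz_i$, we likewise get $g(O)-g(\emptyset)=F(\mathbf{e}_{OPT_i}\vee\symy^{i-1})-F(\symy^{i-1})$. Because the statement only involves differences, Corollary~\ref{splitprop}, in the form $\sum_j g(T_j\mid\emptyset)\ge(1-2\eps)g(O)-(1-\eps)g(\emptyset)$, rewritten with $g(\emptyset)$ subtracted, yields
\[
\sum_j\Delta_j\ge(1-2\eps)\bigl(F(\mathbf{e}_{OPT_i}\vee\symy^{i-1})-F(\symy^{i-1})\bigr)-\eps\,g(\emptyset).
\]
The stray $\eps g(\emptyset)$ should instead be handled by applying Lemma~\ref{lemmaspliteq} directly to the shifted function $g-g(\emptyset)$, which is still submodular and vanishes at $\emptyset$. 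Its hypothesis does not need nonnegativity, except in the Lovász step, where the dummy elements keep the values of the $T_j$ at least the value of $\emptyset$.

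\textbf{Step 4: combining, and the main obstacle.} Dividing the Step 2 bound by $\delta$ and inserting Step 3 gives the claim with $1-3\eps$, where the extra $\eps$ absorbs the quadratic error from Step 2. This absorption requires comparing $f(\OPT)$ with the bracket, which is not automatic if the bracket is small. This is the main obstacle. I would first try to show that the exact expansion loss is multiplicative, namely at least $(1-\delta)^{\ell-1}\ge1-\eps^2$ times $\delta\sum_j\Delta_j$. This would follow by lower-bounding each $|J|\ge2$ term through submodularity and the disjointness of the $T_j$, so that only the $|J|\le1$ terms matter. The remaining negative parts would then be charged to the slack $\eps$ in the factor.
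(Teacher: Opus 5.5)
\textbf{Gap: the proposal does not close, and the inequality it targets is false as written.} Your Steps 1 and 3 follow the paper's route. You expand $F(\symy^{i-1}\oplus\symx^i)$ over the unions $\bigcup_{j\in J}T_j$, and you apply Corollary~\ref{splitprop} to $g_{-\symz_i}$, using that $T_j$, $OPT_i$ and $\reali(\symy^{i-1})$ all avoid $\symz_i$. Your first inequality in Step 3 is exactly the paper's. The gap is the obstacle you flag in Step 4, and neither of your patches closes it.

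\emph{Step 4 patch.} Submodularity of $g_{\symy^{i-1}}$ gives only the \emph{upper} bound $F(\symy^{i-1}\vee\mathbf{e}_{T_j}\vee\mathbf{e}_{T_k})-F(\symy^{i-1})\le\Delta_j+\Delta_k$. The joint gain can be as low as $-F(\symy^{i-1})$. So the $|J|\ge2$ terms can cost about $\binom{\ell}{2}\delta^2F(\symy^{i-1})=\Theta(\eps\delta)F(\symy^{i-1})$, which is not a multiple of $\sum_j\Delta_j$.

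\emph{Step 3 shift.} For $h=g-g(\emptyset)$, the Lov\'asz step of Lemma~\ref{lemmaspliteq} discards $\frac1\ell h(OPT_i\cup T)$, where $T=\bigcup_jT_j$. Dropping it requires $F(\mathbf{e}_{OPT_i\cup T}\vee\symy^{i-1})\ge F(\symy^{i-1})$. The dummy elements only give $g(T_j)\ge g(\emptyset)$, so the $-\eps F(\symy^{i-1})$ term is genuine.

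\emph{Step 2.} Your $O(f(\OPT))$ bound on cross derivatives is unjustified as written, because the relevant sets contain $\reali(\symy^{i-1})$ and are not independent. It is also not needed.

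\emph{Counterexample.} Take the free matroid of rank $2$ on $\{a,b\}$, with $f(\emptyset)=1$, $f(\{a\})=f(\{b\})=1+\eta$, $f(\{a,b\})=0$, $t_s=0$, and $i=1$. \textsc{Split} puts $a$ and $b$ in different parts. The second pick has marginal $-(1+\eta)$ against the first part, $\eta$ against an empty part, and $0$ for dummies. Hence $\frac1\delta[F(\symy^1)-F(\symy^0)]=2\eta-\delta-2\delta\eta$, while the stated right-hand side is $(1-3\eps)\eta$. For $\eta=\delta/2$ and $\eps<1/3$, the left side is negative and the right side is positive.

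\textbf{What can be proved, and how to finish.} The paper's computation actually yields, and Lemma~\ref{fandg} actually uses, the version with coefficient $1$ on $F(\symy^{i-1})$:
\[ F(\symy^i)\ge(1-\delta)F(\symy^{i-1})+\delta(1-3\eps)F(\mathbf{e}_{OPT_i}\vee\symy^{i-1}), \]
that is, right-hand side $(1-3\eps)F(\mathbf{e}_{OPT_i}\vee\symy^{i-1})-F(\symy^{i-1})$. The paper's closing ``rearranging gives the result'' overstates what it proved. Aim for this weaker form instead.

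Replace Step 2 by dropping every $|J|\ge2$ term using $F\ge0$. This gives $F(\symy^i)\ge(1-\delta)^{\ell}F(\symy^{i-1})+\delta(1-\delta)^{\ell-1}\sum_jF(\mathbf{e}_{T_j}\vee\symy^{i-1})$. Bound both powers below by $1-\delta/\eps$. Then insert Corollary~\ref{splitprop} with its $-(1-\eps)F(\symy^{i-1})$ term kept intact.

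With $\delta=\eps^3$, the coefficient of $F(\symy^{i-1})$ is $(1-\eps^2)(1+\eps^2-\eps^3+\eps^4)\ge1-\delta$. The coefficient of $F(\mathbf{e}_{OPT_i}\vee\symy^{i-1})$ is $\delta(1-\eps^2)(1-2\eps)\ge\delta(1-3\eps)$. Every loss lands on $F(\symy^{i-1})$, so no comparison with $f(\OPT)$ or with the size of the bracket is ever needed.
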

\begin{proof}
    Note that for $i \le t_s / \delta$, the algorithm does not involve any elements in $\symz$. It can be viewed as executing on the set $\mathcal{N} \setminus \symz$. Since $T_j \cap \symz_i = \emptyset$, $OPT_i \cap \symz_i = \emptyset$, $\reali(\symy^{i}) \cap \symz_i = \emptyset$, we easily get $F(\mathbf{e}_{OPT_i} \vee \symy^{i}) = g_{-\symz_i}(OPT_i)$, $F(\mathbf{e}_{T_j} \vee \symy^{i}) = g_{-\symz_i}(T_j)$.
    
    Using the property of Algorithm \textsc{Split} (Corollary \ref{splitprop}), we have:
    \begin{align}
        \label{aligninSPLIT}
        \sum_{j=1}^{1/\eps} \left( F(\mathbf{e}_{T_j} \vee \symy^{i-1}) - F(\symy^{i-1}) \right) \ge (1 - 2\eps) F(\mathbf{e}_{OPT_i} \vee \symy^{i-1}) - (1 - \eps) F(\symy^{i-1})
    \end{align}
    Now note:
    \begin{align*}
        F(\symy^i) &= \mathbb{E} [f(\reali(\symy^i))] = \mathbb{E} [f(\reali(\symy^{i-1} \oplus \symx^i))] = \mathbb{E} [f(\reali(\symy^{i-1}) \cup \reali( \symx^i))] \\
        &= \sum_{S \subseteq \mathcal{N}} \Pr[\reali(\symx^i) = S] \cdot \mathbb{E} [f(\reali(y^{i-1}) \cup S)] \\
        &= \sum_{S \subseteq \mathcal{N}} \Pr[\reali(\symx^i) = S] \cdot F(\mathbf{e}_{S} \vee \symy^{i-1}) \\
        &\ge (1 - \delta)^{1/\eps} F(\symy^{i-1}) + \sum_{j=1}^{1/\eps} \delta (1 - \delta)^{1/\eps - 1} F(\mathbf{e}_{T_j} \vee \symy^{i-1}) \quad \text{(only $S=\emptyset$ or $T_j$)} \\
        &\ge \left(1 - \frac{\delta}{\eps}\right) F(\symy^{i-1}) + \delta \left(1 - \frac{\delta}{\eps}\right) \sum_{j=1}^{1/\eps} F(\mathbf{e}_{T_j} \vee \symy^{i-1}) \quad \text{($(1-\delta)^{k} \ge 1 - k\delta$)} \\
        &\ge \left(1 - \frac{\delta}{\eps}\right) \left[ F(\symy^{i-1}) + \delta \left( (1 - 2\eps) F(\mathbf{e}_{OPT_i} \vee \symy^{i-1}) + \left(\frac{1}{\eps} - 1 + \eps\right) F(\symy^{i-1}) \right) \right] \quad \text{(using (\ref{aligninSPLIT}))} \\
        &\ge (1 - \delta) F(\symy^{i-1}) + \delta (1 - 3\eps) F(\mathbf{e}_{OPT_i} \vee \symy^{i-1}) \quad \text{(using $\delta = \eps^3$ and simplifying)}
    \end{align*}
    Rearranging gives the result.
\end{proof}
Note that the lower bound in the above lemma involves $F(\mathbf{e}_{OPT_i} \vee \symy^{i-1})$. In the next lemma, we analyze this further to relate it to the optimal solution $f(OPT)$.
\begin{lemma}
    \label{lemmaofincreament}
    In Algorithm \ref{alg3}, for any $i \in [1, 1/\delta]$, $A \subseteq \mathcal{N}$, we have:
    \begin{align*}
        F(\symy^i \vee \mathbf{e}_A) \ge \left( e^{-\max\{0, t - t_s\}} - e^{-t} \right) \cdot  \max \{0, f(A) - f(A \cup Z) \} + (e^{-t} - \delta) \cdot f(A)
    \end{align*}
    where $t = i \delta$.
\end{lemma}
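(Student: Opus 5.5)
Since $F(\symy^i\vee\mathbf{e}_A)=\mathbb{E}[f(A\cup\reali(\symy^i))]$, the plan is to lower bound this by conditioning on how $\reali(\symy^i)$ meets $\symz$. Write $R=\reali(\symy^i)$ and define the submodular function $h(S)=f(A\cup S)$, which is non-negative. We need $\mathbb{E}[h(R)]$ to be at least the stated combination of $f(A)=h(\emptyset)$ and $f(A\cup\symz)=h(\symz)$. The tool is the Lovász-extension bound of Lemma~\ref{randomset}, applied to a suitable marginal vector, together with the coordinate bounds of Lemma~\ref{upperboundofmar}: every $u\notin\symz$ has $\Pr[u\in R]\le p:=1-e^{-t}+\delta$, and every $u\in\symz$ has $\Pr[u\in R]\le q:=1-e^{-\max\{0,t-t_s\}}+\delta$, with $q\le p$.

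First step: I reduce to a statement about marginals. Let $\symx=\Mar(\symy^i)$. Lemma~\ref{randomset} gives $\mathbb{E}[h(R)]\ge\hat h(\symx)=\int_0^1 h(T_\lambda(\symx))\,d\lambda$. For $\lambda\ge p$ the set $T_\lambda(\symx)$ is empty, so the integrand is $h(\emptyset)=f(A)$; this contributes $(1-p)f(A)=(e^{-t}-\delta)f(A)$. For $\lambda\in[q,p)$, $T_\lambda(\symx)$ contains only elements outside $\symz$, i.e.\ $T_\lambda\subseteq\N\setminus\symz$.

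Second step: bound the integrand on $[q,p)$ by $\max\{0,f(A)-f(A\cup\symz)\}$; for $\lambda<q$ I use non-negativity, $h\ge 0$. For a set $T$ disjoint from $\symz$, submodularity gives $h(T)\ge h(T)+h(\symz)-h(T\cup\symz)\ge h(\emptyset)-h(\symz)+\bigl[h(\symz)-h(T\cup\symz)\bigr]+\ldots$. Cleaner: from $h(T)+h(\symz)\ge h(T\cup\symz)+h(\emptyset)$ (using $T\cap\symz=\emptyset$) and $h(T\cup\symz)\ge0$, we get $h(T)\ge f(A)-f(A\cup\symz)$. Together with $h(T)\ge 0$ this gives $h(T)\ge\max\{0,f(A)-f(A\cup\symz)\}$. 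Integrating over $[q,p)$, which has length $p-q=e^{-\max\{0,t-t_s\}}-e^{-t}$ (the $\delta$'s cancel), yields exactly the first term of the statement.

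Main obstacle: the bounds of Lemma~\ref{upperboundofmar} are only upper bounds on the marginals, so thresholds at $p$ and $q$ are valid. If the actual marginals are smaller, the sets $T_\lambda$ shrink, but the argument still holds because we only used $T_\lambda=\emptyset$ for $\lambda\ge p$ and $T_\lambda\cap\symz=\emptyset$ for $\lambda\ge q$. Both of these are preserved under smaller marginals. I also need to check the conventions: $\Mar(\symy^i)_u$ for $u\in A$ is irrelevant since $h$ already contains $A$, and $R$ may overlap $A$, which is harmless. Finally I would confirm the $i=1/\delta$ edge case and that $p\le1$ is not needed, since otherwise the coefficient $e^{-t}-\delta$ is simply negative-safe via $f\ge0$.
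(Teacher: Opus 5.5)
Your proposal is correct and follows the paper's proof essentially step for step. The paper also applies Lemma~\ref{randomset}, but to $f$ at $\Mar(\symy^i)\vee\mathbf{1}_A$; this is equivalent to your $h(S)=f(A\cup S)$ at $\Mar(\symy^i)$. It then splits the Lov\'asz integral at $1-\theta^{\symz}=q$ and $1-\theta=p$ using Lemma~\ref{upperboundofmar}, and on $[q,p)$ uses $T_\lambda\subseteq\N\setminus\symz$ together with submodularity and non-negativity to get $f(T_\lambda\cup A)\ge\max\{0,f(A)-f(A\cup\symz)\}$. Your check of the edge case $p>1$ is a small extra the paper omits; the abandoned first inequality chain in your second step is not valid as written and should simply be deleted in favor of the clean version.
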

\begin{proof}
    By Lemma \ref{randomset}, $F(\symy^i \vee \mathbf{e}_{A}) \ge \hat{f}(\Mar(\symy^i \vee \mathbf{e}_{A}))$.
    Let $\theta^{\symz} = e^{-\max\{0, t - t_s\}} - \delta$, $\theta = e^{-t} - \delta$.
    By Lemma \ref{upperboundofmar}, for $u \in \symz$, $\Mar(\symy^i)_u \le 1 - \theta^{\symz}$; for $u \notin \symz$, $\Mar(\symy^i)_u \le 1 - \theta$.
    Note that $\Mar(\symy^i \vee \mathbf{e}_A) = \Mar(\symy^i) \vee \mathbf{1}_A$ because $\reali(\symy^i \vee \mathbf{e}_A) = \reali(\symy^i) \cup A$. Using these observations:
    \begin{align*}
        F(\symy^i \vee \mathbf{e}_{OPT}) &\ge \hat{f}(\Mar(\symy^i) \vee \mathbf{1}_A) = \int_0^1 f(T_{\lambda}(\Mar(\symy^i)) \cup A)  d\lambda \\
        &\ge \int_{1 - \theta^{\symz}}^{1 - \theta} f(T_{\lambda}(\Mar(\symy^i)) \cup A)  d\lambda + \int^{1}_{1 - \theta} f(T_\lambda(\Mar(\symy^i)) \cup A)  d\lambda \\
        &\ge \int_{1 - \theta^{\symz}}^{1 - \theta} f(T_{\lambda}(\Mar(\symy^i)) \cup A)  d\lambda + \theta \cdot f(A) \quad \text{(since $T_\lambda = \emptyset$ for $\lambda > 1 - \theta$)} \\
        &\ge (\theta^{\symz} - \theta) \cdot \max\{f(A) - f(Z \cup A), 0\} + \theta \cdot f(A)
    \end{align*}
    The last inequality holds because when $1 - \theta^{\symz} < \lambda \le 1 - \theta$, we have $T_\lambda(\Mar(\symy^i)) \subseteq \mathcal{N} \setminus \symz$. By submodularity and non-negativity of $f$, for any $B \subseteq \mathcal{N} \setminus \symz$:
    \begin{align*}
        &f(B \cup (\symz \setminus A) \cup A) - f(B \cup A) \le f((Z \setminus A) \cup A) - f(A) \\
        \Rightarrow& f(B \cup A) \ge f(A) - f(Z \cup A) \\
        \Rightarrow& \int_{1 - \theta^{\symz}}^{1 - \theta} f(T_{\lambda}(\Mar(\symy^i)) \cup A)  d\lambda \ge (\theta^{\symz} - \theta) \cdot (f(A) - f(Z \cup A))
    \end{align*}
    And this value is also $\ge 0$ due to non-negativity.
\end{proof}

To translate the per-iteration improvement bounds into a bound on the final
value $F(\symy^{1/\delta})$, we compare the discrete evolution of
$F(\symy^{i})$ with an auxiliary function. Since $A$ is arbitrary in the above lemma, we can finally express our solution $F(\symy^{1/\delta})$ in terms of $OPT$ and $Z$, and then complete the entire algorithm via Theorem \ref{rounding}. Define $g(0) = 0$,
\begin{align*}
    g((i+1)\delta) = \begin{cases}
        g(i\delta) + \delta \left[ f(OPT \setminus \symz) - (1 - e^{-i\delta}) \cdot f(\symz \cup OPT) - g(i\delta) \right] & \text{if } i\delta < t_s \\
        g(i\delta) + \delta \left[ e^{-i\delta} \cdot f(OPT) + (e^{t_s - i\delta} - e^{-i\delta}) \cdot \max\{f(OPT) - f(\symz \cup OPT), 0\} - g(i\delta) \right] & \text{if } i\delta \ge t_s
    \end{cases}
\end{align*}
\begin{lemma}
    \label{fandg}
    For any $i \in [0, 1/\delta]$, $g(i\delta) \le F(\symy^i) + 4i\delta \eps \cdot f(OPT)$
\end{lemma}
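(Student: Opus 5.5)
We prove the bound by induction on $i$, comparing the recurrence defining $g$ with the per-step growth of $F(\symy^i)$ guaranteed by Lemma~\ref{lemmaofSPLIT}. The base case $i=0$ is immediate: $g(0)=0\le F(\mathbf{0})$, since $f$ is non-negative.

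For the inductive step, assume $g(i\delta)\le F(\symy^i)+4i\delta\eps f(\OPT)$. Lemma~\ref{lemmaofSPLIT}, applied at step $i+1$, gives
\[
F(\symy^{i+1})\ge (1-\delta)F(\symy^{i})+\delta(1-3\eps)F(\mathbf{e}_{OPT_{i+1}}\vee\symy^{i}).
\]
We lower bound $F(\mathbf{e}_{OPT_{i+1}}\vee\symy^i)$ with Lemma~\ref{lemmaofincreament} at $t=i\delta$, splitting into two cases.

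\emph{Case $i\delta<t_s$.} Here $OPT_{i+1}=\OPT\setminus\symz$. We take $A=\OPT\setminus\symz$. Because $\Mar(\symy^i)$ vanishes on $\symz$, the cleanest route is to redo the Lovász-extension argument: every $\lambda\le 1-\theta$ contributes at least $f(A)$ minus the loss from adding the elements outside $\symz$. A simpler alternative is to apply Lemma~\ref{lemmaofincreament} directly with $t_s$ replaced so that $\theta^{\symz}\approx1$. This yields
\[
F(\mathbf{e}_A\vee\symy^i)\ge (1-e^{-i\delta})\max\{0,f(A)-f(A\cup\symz)\}+(e^{-i\delta}-\delta)f(A).
\]
Dropping the $\max$ to its first argument, and using $A\cup\symz=\OPT\cup\symz$, gives
\[
F(\mathbf{e}_A\vee\symy^i)\ge f(\OPT\setminus\symz)-(1-e^{-i\delta})f(\OPT\cup\symz)-\delta f(\OPT),
\]
which is the bracket in $g$ up to $\delta f(\OPT)$.

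\emph{Case $i\delta\ge t_s$.} Here $OPT_{i+1}=\OPT$. We take $A=\OPT$ and obtain the bracket in $g$ directly, again up to $\delta f(\OPT)$.

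Combining, write $h_i$ for the bracket term in the definition of $g$, excluding the $-g(i\delta)$. Then
\[
F(\symy^{i+1})\ge(1-\delta)F(\symy^i)+\delta(1-3\eps)h_i-\delta^2 f(\OPT).
\]
Since $h_i\le f(\OPT)$ by submodularity and non-negativity (for the first case, $f(\OPT\setminus\symz)\le f(\OPT)+f(\OPT\cup\symz)$ type bounds are needed; in the second case the coefficient sums to at most $1$ times $f(\OPT)$ terms), we have $\delta(1-3\eps)h_i\ge\delta h_i-3\eps\delta f(\OPT)$ when $h_i\ge0$. If $h_i<0$, the term can be handled by using $\max\{h_i,0\}$ consistently, or by noting that $g$ then decreases anyway.

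Plugging in the inductive hypothesis gives
\[
g((i+1)\delta)=(1-\delta)g(i\delta)+\delta h_i\le(1-\delta)F(\symy^i)+(1-\delta)4i\delta\eps f(\OPT)+\delta h_i,
\]
and therefore
\[
g((i+1)\delta)\le F(\symy^{i+1})+\bigl(4i\delta\eps+3\eps\delta+\delta^2\bigr)f(\OPT).
\]
Since $\delta=\eps^3\le\eps$, the last coefficient is at most $4(i+1)\delta\eps$.

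The main obstacle is the first case, which needs a clean lower bound on $F(\mathbf{e}_{\OPT\setminus\symz}\vee\symy^i)$ with exact coefficient $1-e^{-i\delta}$ on $f(\OPT\cup\symz)$. The delicate part is bounding $h_i$ from above by $f(\OPT)$ so that the $(1-3\eps)$ factor costs only $O(\eps\delta)f(\OPT)$. I would first try the Lovász-extension computation with $T_\lambda\subseteq\mathcal{N}\setminus\symz$: for each such set $B$, submodularity gives $f(B\cup A)\ge f(A)-f(\symz\cup\OPT)+f(\symz\cup B\cup A)\ge f(A)-f(\OPT\cup\symz)$. This accounts for the mass $1-e^{-i\delta}$ of nonempty $T_\lambda$, and the remaining mass $e^{-i\delta}-\delta$ contributes $f(A)$.
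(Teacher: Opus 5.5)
Your proposal is correct and is essentially the paper's proof: induction on $i$, then Lemma~\ref{lemmaofSPLIT} at step $i+1$, then Lemma~\ref{lemmaofincreament} applied directly (no modification of $t_s$ is needed) with $A=\OPT\setminus\symz$ when $i\delta<t_s$ and $A=\OPT$ when $i\delta\ge t_s$ to get $F(\mathbf{e}_{OPT_{i+1}}\vee\symy^i)\ge h_i-\delta f(\OPT)$, and finally $h_i\le f(\OPT)$ and $\delta\le\eps$ to absorb the $(3\eps\delta+\delta^2)f(\OPT)$ error, which is exactly what the paper compresses into ``by Lemmas~\ref{lemmaofSPLIT} and~\ref{lemmaofincreament}'' and ``$g'(i\delta)\le f(\OPT)$''. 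Two of your hedges are unnecessary: in the first case $h_i\le f(\OPT\setminus\symz)\le f(\OPT)$ follows from $f(\OPT\cup\symz)\ge 0$ and the optimality of $\OPT$ (as $\OPT\setminus\symz\in\mathcal{I}$), not from submodularity or a bound like $f(\OPT\setminus\symz)\le f(\OPT)+f(\OPT\cup\symz)$; and $h_i\le f(\OPT)$ by itself gives $\delta(1-3\eps)h_i\ge\delta h_i-3\eps\delta f(\OPT)$ regardless of the sign of $h_i$.
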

\begin{proof}
    We prove by induction on $i$. For $i=0$, $g(0) = 0 \le F(\symy^0)$. Assume the lemma holds for $i$, prove for $i+1$.
    Define the increment of $g$ as:
    \begin{align*}        
        g'(i\delta) = \begin{cases}
            f(OPT \setminus \symz) - (1 - e^{-i\delta}) \cdot f(\symz \cup OPT) & \text{if } i\delta < t_s \\
            e^{-i\delta} \cdot f(OPT) + (e^{t_s - i\delta} - e^{-i\delta}) \cdot \max\{f(OPT) - f(\symz \cup OPT), 0\} & \text{if } i\delta \ge t_s
        \end{cases}
    \end{align*}
    By definition, $g'(i\delta) \le f(OPT)$ and
    \begin{align*}
        g((i+1)\delta) = \delta \left[ g'(i\delta) - g(i\delta) \right] + g(i\delta)
    \end{align*}
    By Lemmas \ref{lemmaofSPLIT} and \ref{lemmaofincreament}, we have:
    \begin{align}
        \label{fincreament}
        F(\symy^{i+1}) \ge F(\symy^i) + \delta \left[ (1 - 3\eps) (g'(i\delta) - \delta \cdot f(OPT)) - F(\symy^i) \right] \\
        \ge F(\symy^i) + \delta \left[ (1 - 3\eps) g'(i\delta) - \delta \cdot f(OPT) - F(\symy^i) \right]
    \end{align}
    Therefore:
    \begin{align*}
        g((i+1)\delta) &= g(i\delta) + \delta (g'(i\delta) - g(i\delta)) \\
        &= (1 - \delta) g(i\delta) + \delta g'(i\delta) \\
        &\le (1 - \delta) [ F(\symy^{i}) + 4i\delta\eps \cdot f(OPT) ] + \delta g'(i\delta) \quad \text{(induction hyp.)} \\
        &\le F(\symy^{i+1}) - \delta (1 - 3\eps) g'(i\delta) + \delta^2 f(OPT) + (1 - \delta) \cdot 4i\delta\eps \cdot f(OPT) + \delta g'(i\delta) \quad \text{(using \ref{fincreament})} \\
        &= F(\symy^{i+1}) + (1 - \delta)4i\delta\eps \cdot f(OPT) + 3\eps\delta g'(i\delta) + \delta^2 f(OPT) \\
        &\le F(\symy^{i+1}) + 4((i+1)\delta)\eps \cdot f(OPT) \quad \text{(since $g'(i\delta) \le f(OPT), \delta < \eps$)}
    \end{align*}
\end{proof}
\begin{theorem}
    \label{thm2.4}
    The solution returned by Algorithm \ref{alg3} satisfies the following inequality:
    \begin{align*}        
        F(\symy^{1/\delta}) \ge e^{t_s - 1} \Big[ (2 - t_s - e^{-t_s} - 4\eps ) f(OPT) &- (1 - e^{-t_s}) f(\symz \cap OPT) \\
        &- (2 - t_s - 2e^{-t_s}) f(\symz \cup OPT) \Big]
    \end{align*}
\end{theorem}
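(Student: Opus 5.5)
Given Lemma~\ref{fandg}, it suffices to lower bound $g(1)$, since $F(\symy^{1/\delta}) \ge g(1) - 4\eps f(OPT)$. The plan is to solve the linear recurrence defining $g$ explicitly (as a discretized ODE $g' = \phi(t) - g$), bound the discretization error by $O(\delta)$, and then simplify using the stationary-point inequality of Theorem~\ref{thm2.1} together with submodularity.

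\textbf{Step 1: unroll the recurrence.} We have $g((i+1)\delta) = (1-\delta)g(i\delta) + \delta\,g'(i\delta)$, so
\[
g(1) = \sum_{i<1/\delta} \delta (1-\delta)^{1/\delta - 1 - i} g'(i\delta).
\]
This is a Riemann sum for $\int_0^1 e^{-(1-t)}\phi(t)\,dt$, where
\begin{itemize}
\item $\phi(t) = f(OPT\setminus\symz) - (1-e^{-t})f(\symz\cup OPT)$ on $[0,t_s)$;
\item $\phi(t) = e^{-t}f(OPT) + (e^{t_s-t}-e^{-t})\max\{f(OPT)-f(\symz\cup OPT),0\}$ on $[t_s,1]$.
\end{itemize}
Because $|\phi|\le 2f(OPT)$ and $\phi$ is Lipschitz on each piece, and $(1-\delta)^k$ is within $O(\delta)$ of $e^{-k\delta}$ (Lemma~\ref{upperboundofmar}-style estimates), the sum differs from the integral by $O(\delta) f(OPT)\le \eps f(OPT)$. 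Some care is needed when replacing the max by $f(OPT)-f(\symz\cup OPT)$: the coefficient is nonnegative, so this only lowers the bound. A single lost $\eps$ can be absorbed into the $4\eps$; I would check whether the stated constant needs the error folded into Lemma~\ref{fandg}'s slack.

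\textbf{Step 2: evaluate the integral.} Multiply by $e^{-(1-t)} = e^{-1}e^t$.
\begin{itemize}
\item On $[0,t_s]$: $\int_0^{t_s} e^t\,dt = e^{t_s}-1$ and $\int_0^{t_s} (e^t - 1)\,dt = e^{t_s}-1-t_s$. This contributes $e^{-1}\bigl[(e^{t_s}-1)f(OPT\setminus \symz) - (e^{t_s}-1-t_s)f(\symz\cup OPT)\bigr]$.
\item On $[t_s,1]$: $e^t\phi(t) = f(OPT) + (e^{t_s}-1)\bigl(f(OPT)-f(\symz\cup OPT)\bigr)$ is constant, contributing $e^{-1}(1-t_s)\bigl[e^{t_s}f(OPT) - (e^{t_s}-1)f(\symz\cup OPT)\bigr]$.
\end{itemize}
Factoring out $e^{t_s-1}$ gives coefficients with $1-e^{-t_s}$ on $f(OPT\setminus\symz)$, $(1-t_s)$ on $f(OPT)$, and $-(1-e^{-t_s}-t_se^{-t_s}) - (1-t_s)(1-e^{-t_s})$ on $f(\symz\cup OPT)$.

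\textbf{Step 3: eliminate $f(OPT\setminus\symz)$.} By submodularity, $f(OPT\setminus\symz) + f(OPT\cap\symz) \ge f(OPT)$. This adds $(1-e^{-t_s})f(OPT) - (1-e^{-t_s})f(\symz\cap OPT)$, so the $f(OPT)$ coefficient becomes $2-t_s-e^{-t_s}$, matching the statement.

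The $f(\symz\cup OPT)$ coefficient then simplifies to
\[
-\bigl(2 - t_s - 2e^{-t_s} + t_s e^{-t_s} - t_s e^{-t_s}\bigr) = -(2-t_s-2e^{-t_s}),
\]
which also matches. Theorem~\ref{thm2.1} is therefore not needed here; it is reserved for the final combination with $f(\symz)$.

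\textbf{Main obstacle.} The main obstacle is the discretization bookkeeping: the error terms must fit within the stated $4\eps$ (or the $O(\eps)$ the main theorem tolerates). In particular, the $(1-\delta)^k$ versus $e^{-k\delta}$ error and the jump of $\phi$ at $t_s$ (which may not be a grid point) both need to be controlled. I would handle these crudely, with per-step error $O(\delta)f(OPT)$ summed over $1/\delta$ steps weighted by $\delta$, giving $O(\delta)=O(\eps^3)$.
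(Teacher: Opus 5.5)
Your proposal is correct and uses the same top-level reduction as the paper: Lemma~\ref{fandg} gives $F(\symy^{1/\delta})\ge g(1)-4\eps f(OPT)$, so everything rests on a lower bound for $g(1)$. The difference is how that bound is obtained.

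The paper does not derive it. It asserts that its recurrence is the one analyzed in \cite{DBLP:conf/stoc/BuchbinderF24}, quotes Corollary A.5 there, and takes the continuous-time expression as an exact lower bound on the discrete $g(1)$.

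You instead unroll the recurrence, compare it with $\int_0^1 e^{t-1}\phi(t)\,dt$, and evaluate the integral. This makes each ingredient visible:
\begin{itemize}
\item the max can be dropped because its coefficient $e^{t_s-t}-e^{-t}$ and the weights $\delta(1-\delta)^k$ are nonnegative;
\item $e^{t}\phi(t)$ is constant on $[t_s,1]$;
\item the only use of submodularity (with $f(\varnothing)\ge 0$) is $f(OPT\setminus\symz)\ge f(OPT)-f(\symz\cap OPT)$, applied with the nonnegative coefficient $e^{t_s-1}(1-e^{-t_s})$.
\end{itemize}
Your integrals are right, and so are the resulting coefficients $2-t_s-e^{-t_s}$, $-(1-e^{-t_s})$ and $-(2-t_s-2e^{-t_s})$. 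In the Riemann-sum estimate, the bound $|\phi|\le 2f(OPT)$ needs $f(\symz\cup OPT)\le f(\symz)+f(OPT)\le 2f(OPT)$, i.e.\ that $\symz$ is independent.

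Your route gives a self-contained argument that handles the discretization explicitly, including the jump at a possibly non-grid $t_s$, at a cost of only $O(\delta)f(OPT)$. The paper's route is shorter but leaves it to the reader to check that the cited corollary covers this exact discrete recurrence.

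One caveat on constants, which applies equally to the paper's proof: neither argument gives the $-4\eps$ \emph{inside} the bracket. Lemma~\ref{fandg} costs $4\eps f(OPT)$, but the bracket only allows a loss of $4\eps e^{t_s-1}f(OPT)$, which is strictly smaller when $t_s<1$. The paper's final inequality tacitly uses $e^{t_s-1}\ge 1$.

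Your discretization loss is $O(\delta)f(OPT)=O(\eps^3)f(OPT)$, not a full $\eps$. It does fit in the slack of Lemma~\ref{fandg}, whose induction accumulates only about $(3\eps+\delta)f(OPT)$. That slack does not fix the factor $e^{t_s-1}$, however.

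What both arguments establish is the bound with $4\eps f(OPT)$ subtracted outside the factor $e^{t_s-1}$. Equivalently, they give the stated form with $4e\eps$ in place of $4\eps$, which is all the proof of Theorem~\ref{mainthm} uses. Getting the exact stated constant would need a sharper, discount-weighted version of the error accounting in Lemma~\ref{fandg}.
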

\begin{proof}
    The definition of $g(i\delta)$ is consistent with that in \cite{DBLP:conf/stoc/BuchbinderF24}. Its Corollary A.5 proves:
    \begin{align*}
        g(1) \ge e^{t_s - 1} \Big[ (2 - t_s - e^{-t_s}) f(OPT) &- (1 - e^{-t_s}) f(\symz \cap OPT) \\
        &- (2 - t_s - 2e^{-t_s}) f(\symz \cup OPT) \Big]
    \end{align*}
    By Lemma \ref{fandg},
    \begin{align*}
        F(\symy^{1/\delta}) &\ge g(1) - 4\eps f(OPT) \\
        &\ge e^{t_s - 1} \Big[ (2 - t_s - e^{-t_s} - 4\eps) f(OPT) - (1 - e^{-t_s}) f(\symz \cap OPT) \\
        &\qquad \qquad - (2 - t_s - 2e^{-t_s}) f(\symz \cup OPT) \Big]
    \end{align*}
\end{proof}

\subsection{Proof of the Main Result}

In the previous subsections, we established the theoretical guarantees for the two main components of our algorithm. 
First, the \textsc{Split} algorithm returns a set $\symz$ satisfying the structural inequalities in Theorem~\ref{thm2.1}. 
Second, the continuous greedy phase produces a fractional vector whose value satisfies the bound in Theorem~\ref{thm2.4}. 
Finally, by Theorem~\ref{rounding}, the \textsc{Deterministic-Pipage} procedure converts any fractional vector $y$ with $\mar(y)\in\mathcal{P}(\mathcal{M})$ into a feasible set $S\in\mathcal{M}$ such that $f(S)\ge F(y)$.

Therefore, to prove the approximation guarantee of our algorithm, it suffices to establish a lower bound on the value returned by Algorithm~\ref{alg4}. 
We now prove the main theorem.

\paragraph{Proof of Theorem \ref{mainthm}.}

By Theorem \ref{rounding}, the \textsc{Deterministic-Pipage} algorithm converts a fractional vector $y$ with $\mar(y)\in \mathcal{P}(\mathcal{M})$ into a feasible set $S\in\mathcal{M}$ satisfying
\[
f(S) \ge F(y).
\]
Therefore, it suffices to prove a lower bound on the fractional value returned by Algorithm~\ref{alg4}.

The feasibility of the solutions produced by the algorithm and the query complexity bound follow from Theorem \ref{thm2.1}, Lemma \ref{marinP}, and Lemma \ref{polytime}. Hence, we only need to analyze the approximation ratio.

The lower bound for the solution of our algorithm follows essentially the same analysis as in \cite{DBLP:journals/mor/BuchbinderF19}. The algorithm returns the maximum of two candidate solutions. We first list the inequalities satisfied by these two quantities.

By Theorem \ref{thm2.1}, for every $A \in \mathcal{I}$,
\begin{align*}
    f(\symz) \ge \frac{1}{2} f(\symz \cup A) + \frac{1}{2} f(\symz \cap A) - \eps \cdot f(OPT).
\end{align*}

Substituting $A = OPT$ and $A = OPT \cap \symz$ gives
\begin{align}
    \label{eq5}
    f(\symz) \ge \frac{1}{2} f(\symz \cup OPT) + \frac{1}{2} f(\symz \cap OPT) - \eps \cdot f(OPT),
\end{align}
\begin{align}
    \label{eq6}
    f(\symz) \ge f(\symz \cap OPT) - \eps \cdot f(OPT).
\end{align}

By Theorem \ref{thm2.4},
\begin{align}
    \label{eq7}
    F(\symy^{1/\delta}) \ge e^{t_s - 1} \Big[ (2 - t_s - e^{-t_s} - 4\eps ) f(OPT)
    - (1 - e^{-t_s}) f(\symz \cap OPT) 
    - (2 - t_s - 2e^{-t_s}) f(\symz \cup OPT) \Big].
\end{align}

Inequalities \eqref{eq5}, \eqref{eq6}, and \eqref{eq7} are all valid lower bounds on the objective value returned by the algorithm. Therefore, any convex combination of them is also a valid lower bound.

Let $ALG$ denote the fractional solution returned by the algorithm. For any $p_1,p_2,p_3 \ge 0$ with $p_1+p_2+p_3=1$, we obtain
\begin{align*}
    F(ALG) \ge& 
    \frac{1}{2} p_1 \left( f(\symz \cup OPT) + f(\symz \cap OPT) + \eps f(OPT) \right) \\
    &+ p_2 \left( f(\symz \cap OPT) + \eps f(OPT) \right) \\
    &+ p_3 e^{t_s - 1} \Big[ (2 - t_s - e^{-t_s} - 4\eps ) f(OPT) \\
    &\quad - (1 - e^{-t_s}) f(\symz \cap OPT)
    - (2 - t_s - 2e^{-t_s}) f(\symz \cup OPT) \Big].
\end{align*}

We choose $p_1,p_2,p_3$ and $t_s$ so that the coefficients of $f(\symz \cup OPT)$ and $f(\symz \cap OPT)$ are non-negative while maximizing the coefficient of $f(OPT)$.

As shown in Section~3.1 of \cite{DBLP:journals/mor/BuchbinderF19}, the optimal parameters are
\[
p_1 = 0.205,\quad p_2 = 0.025,\quad p_3 = 0.070,\quad t_s = 0.372,
\]
which yields
\[
F(ALG) \ge (0.385 - O(\eps))\, f(OPT).
\]

Applying Theorem \ref{rounding} completes the proof.

\section{Knapsack Constraints}
In this section, we present our deterministic algorithm for the knapsack constraint and analyze its approximation ratio and query complexity. Our algorithm consists of three components: enumerate elements, optimization, and rounding. The algorithm first enumerates all subsets $E_i$ of the ground set whose size is at most $\eps^{-2}$. Then, for each $E_i$, we assume $E_i$ has been selected, that is, we consider $f(\cdot \mid E_i)$ as the new function and $(1-\eps)B$ as the capacity. We apply a measured continuous greedy algorithm on the extended multilinear extension to obtain a vector $\mathbf{y}_i \in [0,1]^{2^\N}$. Next, we apply a rounding algorithm to $\mathbf{y}_i$ to obtain a set $S_i$. Finally, we select the set with the largest value among all $S_i$ as our output. The above procedure is summarized in Algorithm~\ref{Knapsack:whole}.


\begin{algorithm}[H]
    \caption{Deterministic Knapsack$(f,w,B)$}
        \label{Knapsack:whole}
    \For{
        each set $E_i\in \mathcal{N}$ with $|E_i| \le \eps^{-2}$
    }{
        $\mathbf{y}_i \gets DMCG(f(\cdot\cup E_i), w, (1 - \eps)(B - w(E_i)), \eps)$ \\ 
        $S_i \gets \textsc{Rounding}(f,w, \mathbf{y}_i)$
    }
    return $arg \max_i f(S_i \cup E_i)$
\end{algorithm}

\subsection{Enumerate Elements}
We begin our analysis with the element enumeration step. Following the approach in \cite{DBLP:journals/siamcomp/ChekuriVZ14}, by enumerating all subsets of size at most $O(\varepsilon^{-2})$, we effectively fix the heavy elements and reduce the remaining problem to one consisting only of lightweight elements, each contributing at most an $\varepsilon$ fraction of the total weight. We then show that, in this reduced instance, the knapsack capacity can be safely scaled down by a factor of $\alpha$, and under this reduced capacity, the value of the optimal solution decreases by at most $\alpha - O(\varepsilon)$. This reduction has two purposes: first, it makes the knapsack constraint behave more like a cardinality constraint during the optimization phase; second, it provides slack for the subsequent rounding procedure to ensure feasibility.
\begin{lemma}[Corollary 4.16 in \cite{DBLP:journals/siamcomp/ChekuriVZ14}]
\label{lemma:enumerate_basic} For maximizing a submodular function $f: 2^{\mathcal N} \to \mathbb{R}^+ $ under a knapsack constraint with capacity $B$ and a weight function
$ w: \mathcal N \to \mathbb{R}^+ $,
there exists a subset $E \subseteq \N $ satisfying the following properties:

1. Its cardinality is bounded by $|E| \le \varepsilon^{-2}$, and hence $E$ can be identified by enumerating at most $\varepsilon^{-2}$ elements.

2. For every element $o \in \OPT \setminus E$, the marginal contribution satisfies

   \[ 
   f(o \mid E) \le \varepsilon^{2} f(\OPT).
    \]
   
3. Let $B_{ig}  = \{u \in \OPT|w(u) \ge \eps (B - w(E))\}\backslash E$, then $f(\OPT \backslash B_{ig}) \ge (1 - \eps)f(OPT)$.
    
\end{lemma}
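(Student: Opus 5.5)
We construct $E$ greedily, in the manner of the Chekuri--Vondr\'ak--Zenklusen enumeration argument. Order the elements of $\OPT$ as $o_1,o_2,\dots$, where at each step $o_{k}$ maximizes $f(o\mid \{o_1,\dots,o_{k-1}\})$ over the remaining $o\in\OPT$. Set $E=\{o_1,\dots,o_m\}$ with $m=\min\{|\OPT|,\lfloor\eps^{-2}\rfloor\}$. Property~1 then holds by construction. Since $E\subseteq\OPT$, some enumerated $E_i$ in Algorithm~\ref{Knapsack:whole} coincides with it.

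\textbf{Property~2.} If $|\OPT|\le\eps^{-2}$, then $\OPT\setminus E=\emptyset$ and there is nothing to prove. Otherwise, let $E_k=\{o_1,\dots,o_k\}$. The greedy choice and submodularity give, for every $o\in\OPT\setminus E$ and every $k\le m$,
\[
f(o\mid E)\le f(o\mid E_{k-1})\le f(o_k\mid E_{k-1}).
\]
Averaging over $k$ yields
\[
f(o\mid E)\le \frac1m\sum_{k=1}^m f(o_k\mid E_{k-1})=\frac{f(E)-f(\emptyset)}{m}\le \frac{f(E)}{m}.
\]
This is at most $\eps^2 f(E)$ up to rounding of $\eps^{-2}$. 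The remaining step is to compare $f(E)$ with $f(\OPT)$, which fails for non-monotone $f$. The fix is to run the greedy with prefix-maximal marginals: the marginals $f(o_k\mid E_{k-1})$ are nonincreasing in $k$ (by greedy choice and submodularity), so as long as they stay nonnegative, $f(E)\le f(E_{k^*})$ for the last index $k^*$ with positive marginal. Moreover $f(E_{k^*})\le f(\OPT)$ up to the nonpositive tail: each further element has nonpositive marginal, and by submodularity the remainder $\OPT\setminus E_{k^*}$ cannot decrease the value below what the sum of marginals allows. If the marginals become negative before step $m$, then every remaining $o$ has $f(o\mid E)\le 0\le\eps^2 f(\OPT)$ trivially. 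In either case Property~2 follows.

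\textbf{Property~3.} Every $u\in B_{ig}$ satisfies $w(u)\ge\eps(B-w(E))$, and $w(\OPT\setminus E)\le B-w(E)$, so $|B_{ig}|\le 1/\eps$. Apply submodularity with base $E\subseteq\OPT\setminus B_{ig}$, removing the elements of $B_{ig}$ one at a time:
\[
f(\OPT)-f(\OPT\setminus B_{ig})\le\sum_{o\in B_{ig}} f(o\mid \OPT\setminus B_{ig})\le \sum_{o\in B_{ig}} f(o\mid E)\le \frac1\eps\cdot\eps^2 f(\OPT)=\eps f(\OPT),
\]
where the middle inequality uses $E\subseteq \OPT\setminus B_{ig}$ and the last uses Property~2. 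This gives $f(\OPT\setminus B_{ig})\ge(1-\eps)f(\OPT)$.

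The main obstacle is the step $f(E)\lesssim f(\OPT)$ in Property~2 under non-monotonicity. I would handle it by stopping the greedy as soon as the best marginal is nonpositive, since then Property~2 holds trivially with $f(o\mid E)\le 0$. Before stopping, all marginals are positive, so $f(E)$ is nondecreasing along the sequence. I would then need $f(E)\le f(\OPT)$; if a direct argument is unavailable, I would bound $f(E)\le f(\OPT)$ using the feasibility of $E$ (as a subset of $\OPT$) together with the optimality of $\OPT$.
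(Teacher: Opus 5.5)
Your proof is correct, but for Property~2 it follows a different route from the paper.

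\begin{itemize}
\item \textbf{The paper's construction.} It builds $E$ by a single threshold pass over $\OPT$, adding $o$ whenever $f(o\mid E)\ge\eps^2 f(\OPT)$. Property~2 then follows from the threshold plus submodularity. The size bound comes from counting: $f(E)\ge f(\emptyset)+|E|\,\eps^2 f(\OPT)$, compared against $f(E)\le f(\OPT)$.
\item \textbf{Your construction.} You fix $|E|=m$ and take the greedy prefix. You bound $f(o\mid E)$ by the minimum, hence the average, of the greedy marginals, which is $(f(E)-f(\emptyset))/m$.
\end{itemize}

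Both arguments rest on the same fact, $f(E)\le f(\OPT)$, and this is \emph{not} an obstacle. Since $E\subseteq\OPT$, we have $w(E)\le w(\OPT)\le B$, so $E$ is feasible, and optimality of $\OPT$ gives the bound with no monotonicity needed. Your closing sentence says exactly this, and it is all that is required.

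The detour before it should be dropped. That means stopping the greedy at nonpositive marginals, and the ``nonpositive tail'' comparison of $f(E_{k^*})$ with $f(\OPT)$. It is unnecessary, and as written it is not a valid argument. ``Minimum $\le$ average'' holds regardless of the signs of the marginals, and $f(\emptyset)\ge 0$ handles the last step.

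As for trade-offs:
\begin{itemize}
\item The threshold construction delivers the constant $\eps^2$ exactly, with no integrality assumption, and gets $|E|\le\eps^{-2}$ from a one-line counting argument.
\item Your construction fixes $|E|$ in advance and never references the value $f(\OPT)$. However, it only yields $f(\OPT)/\lfloor\eps^{-2}\rfloor$, so you need $\eps^{-2}\in\mathbb{Z}$. The paper effectively assumes this anyway by taking $1/\eps$ integral.
\end{itemize}

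Your Property~3 argument is the same as the paper's. Both silently use $B-w(E)>0$ to get $|B_{ig}|\le 1/\eps$.
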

\begin{proof}
We consider a subset $E$ constructed as follows. Initially, $E$ is empty. We then iterate over the elements of the optimal solution $\mathrm{OPT}$. For each element $o \in \mathrm{OPT}$, if

\[ 
f(o \mid E) \ge \varepsilon^{2} f(\mathrm{OPT}),
 \]
 then we pick the item $o$ into $E$.
We first observe that if $|E| > \varepsilon^{-2}$, then by the construction rule of $E$ we would have
\[ 
f(E) > |E| \cdot \varepsilon^{2} f(\OPT) \ge \varepsilon^{-2} \cdot \varepsilon^{2} f(\OPT) = f(\OPT).
 \]
This is impossible, since $f(E) \le f(OPT)$ denotes the optimal value. Therefore, we must have $|E| \le \varepsilon^{-2}$.

Next, consider the second property. Since $f$ is submodular, the marginal contribution of any element can only decrease as the set $E$ grows. Consequently, once every element of $\OPT$ has been examined exactly once, submodularity guarantees that its marginal contribution in all subsequent steps cannot increase, and hence no remaining element can satisfy the inequality above. Therefore, for all $o \in \OPT \setminus E$, we have
\[
f(o \mid E) \le \varepsilon^{2} f(\OPT).
\]

Finally, consider the third property. Let $B_{ig} = \{b_1, b_2, \ldots, b_t\}$ denote the subset of elements in $\OPT$ with weights at least $\varepsilon (B - w(E))$. We analyze the process of removing these elements from $\OPT$ one by one. For each $b_i \in B_{ig}$, the loss incurred is bounded by
\[ 
f(b_i \mid \OPT \setminus B_{ig} \cup \{b_{i+1}, \ldots, b_t\}) \le f(b_i \mid \OPT \setminus B_{ig}) \le f(b_i \mid E) \le \varepsilon^{2} f(\OPT),
 \]
where the last inequality follows from the property that no element outside $E$ has marginal contribution larger than $\varepsilon^{2} f(\OPT)$. Since there are at most $\varepsilon^{-1}$ such elements, the total loss is bounded by $\varepsilon f(\OPT)$.
\end{proof}

Moreover, we prove that if we assume such an element $E$ has already been selected and further reduce the remaining capacity to a $\alpha$ fraction of the original, then the value of the optimal solution decreases by at most a factor of $(\alpha-\varepsilon^{2})$.

\begin{lemma}
\label{lemma:enumurate_2}
Given the problem of maximzing submodular function $f: 2^{\mathcal N} \to \mathbb{R}^+ $ under a knapsack constraint with capacity $B$ and a weight function 
$ w: \mathcal N \to \mathbb{R}^+ $.  If for any $u \in \N$, we have $f(u \mid \emptyset) \le \eps^2f(\OPT)$ and $w(u) \le \eps B$, then we have 
\[ 
\max_{w(S)\le \alpha B} f(S) \ge (\alpha - \varepsilon^{2}) f(\mathrm{OPT}).
 \]
\end{lemma}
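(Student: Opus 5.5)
We will prove the bound by exhibiting a feasible subset of $\OPT$ of weight at most $\alpha B$. The subset comes from a random partition of $\OPT$, and we then use an averaging argument based on Lemma~\ref{randomset}. The natural idea is to split $\OPT$ into pieces of roughly equal weight and keep the union of an $\alpha$-fraction of them. Non-monotonicity and the lack of a clean partition make a direct averaging awkward. We therefore use a greedy ordering together with an interval (cyclic window) argument, which converts weight slack directly into a probability bound.

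\textbf{Construction.} Order the elements of $\OPT$ arbitrarily as $o_1,\dots,o_k$ and lay them on a circle of circumference $w(\OPT)\le B$. Each $o_j$ occupies an arc of length $w(o_j)\le \eps B$. Choose a uniformly random starting point $\theta$ on the circle. Let $S_\theta$ be the set of elements whose arcs lie entirely inside the window $[\theta,\theta+(\alpha-\eps)B)$ of length $(\alpha-\eps)B$. Then $w(S_\theta)\le(\alpha-\eps)B\le\alpha B$, so every $S_\theta$ is feasible and $\max_{w(S)\le\alpha B}f(S)\ge \mathbb{E}_\theta[f(S_\theta)]$.

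\textbf{Marginals.} An element $o$ lies in $S_\theta$ exactly when $\theta$ falls in an arc of length $(\alpha-\eps)B-w(o)$. Hence
\[
\Pr[o\in S_\theta]=\frac{\max\{0,(\alpha-\eps)B-w(o)\}}{w(\OPT)}\ge \alpha-2\eps
\]
whenever $w(\OPT)\le B$. If $w(\OPT)<(\alpha-\eps)B$, we simply take $S=\OPT$.

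\textbf{Lower bound via Lemma~\ref{randomset}.} Here the marginals are only bounded below, not equal to a fixed vector, and $f$ is non-monotone. We handle this with the chain decomposition used in the proof of Lemma~\ref{randomset}, restricted to $\OPT$ in the order $o_1,\dots,o_k$:
\[
\mathbb{E}[f(S_\theta)]\ge f(\emptyset)+\sum_j \Pr[o_j\in S_\theta]\, f(o_j\mid \{o_1,\dots,o_{j-1}\}).
\]
Negative marginal terms are the obstacle, since a larger probability would make them worse. Two facts control them. First, each marginal is at most $f(o_j\mid\emptyset)\le\eps^2 f(\OPT)$. Second, all probabilities lie in $[\alpha-2\eps,\alpha]$, up to the wrap-around effects. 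Replacing every probability by exactly $\alpha$ then changes the sum by at most the spread of the probabilities times the total absolute marginal mass.

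Bounding that mass cleanly is the step we expect to be hardest. Our first attempt is to reorder $\OPT$ greedily, always appending the element of largest marginal. In that order the marginals are nonincreasing. The positive ones sum to at least $f(\OPT)-f(\emptyset)$, and each is at most $\eps^2 f(\OPT)$. If absolute mass remains problematic, we fall back on a partition variant instead. Split $\OPT$ into $m\approx 1/\eps$ consecutive blocks, each of weight at most $2\eps B$, and pick a random contiguous run of $\lfloor \alpha m\rfloor$ blocks. Every element is then covered with probability exactly $\lfloor\alpha m\rfloor/m$. The chain bound, combined with $f(\OPT)-f(\emptyset)\le f(\OPT)$, gives $(\alpha-O(\eps))f(\OPT)$. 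The sharper $\eps^2$ loss comes from the per-element bound $f(u\mid\emptyset)\le\eps^2 f(\OPT)$, applied to the fractional boundary element, which is the only element whose inclusion probability deviates from exactly $\alpha$.
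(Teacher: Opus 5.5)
\textbf{There is a genuine gap.} Even with every step tightened, your argument proves only $\max_{w(S)\le\alpha B} f(S)\ge(\alpha-O(\eps))f(\OPT)$, not the stated $(\alpha-\eps^2)f(\OPT)$.

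\textbf{Where the $\Theta(\eps)$ loss comes from.} The loss is built into your construction. By shrinking the window to $(\alpha-\eps)B$, you pay for the weight granularity $w(u)\le\eps B$ in the inclusion probability of \emph{every} element, so you only get $\Pr[o\in S_\theta]\ge\alpha-2\eps$. Your closing claim, that the $\eps^2$ comes from one ``fractional boundary element'' that is ``the only element whose inclusion probability deviates from $\alpha$'', is false for both of your constructions:
\begin{itemize}
\item in the circular window, every probability is $((\alpha-\eps)B-w(o))/w(\OPT)$;
\item in the block variant, every probability is $\lfloor\alpha m\rfloor/m$.
\end{itemize}
Each is off from $\alpha$ by $\Theta(\eps)$.

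In fact, no argument that works purely through lower bounds on inclusion probabilities of feasible sets can reach $\alpha-\eps^2$. Take $\OPT$ to be $1/\eps$ elements of weight $\eps B$, with $\alpha/\eps$ having fractional part close to $1$. Every feasible set has at most $\lfloor\alpha/\eps\rfloor$ elements, so some element is included with probability about $\alpha-\eps$. The $\eps^2$ must therefore be charged to \emph{value}, via $f(u\mid\emptyset)\le\eps^2 f(\OPT)$, not to probability.

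The block fallback also has a feasibility problem: $\lfloor\alpha m\rfloor$ blocks of weight up to $2\eps B$ each can weigh about $2\alpha B$.

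\textbf{The step you called hardest is not an obstacle.} Order the elements by decreasing inclusion probability, as Lemma~\ref{randomset} does. The chain bound then becomes the Lov\'{a}sz extension, and non-negativity alone gives $\mathbb{E}[f(S)]\ge\bigl(\min_{o\in\OPT}\Pr[o\in S]\bigr)f(\OPT)$. No control of negative marginals is needed.

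\textbf{How to repair your route.} A correct randomized proof is close to what you wrote.
\begin{itemize}
\item Assume $\alpha B<w(\OPT)$; otherwise take $\OPT$.
\item Use a window of full length $\alpha B$, and let $S'_\theta$ be the elements whose arcs \emph{start} in the window. Each element is then included with probability exactly $\alpha B/w(\OPT)\ge\alpha$, so $\mathbb{E}[f(S'_\theta)]\ge\alpha f(\OPT)$.
\item Every arc of $S'_\theta$ except possibly the last lies inside the window. Removing that one element gives a set of weight at most $\alpha B$.
\item By submodularity, this removal costs at most $f(u\mid\emptyset)\le\eps^2 f(\OPT)$.
\end{itemize}

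\textbf{The paper's route.} The paper argues deterministically. It peels elements off $\OPT$ in order of smallest density $f(u\mid O_{i-1}-u)/w(u)$, which keeps $f(O_i)/w(O_i)$ nondecreasing. It stops at the first $O_t$ with $w(O_t)\le\alpha B$, and charges the single overshooting removal at most $f(o_t\mid\emptyset)\le\eps^2 f(\OPT)$. Both this and the repaired randomized argument rely only on the value bound; neither needs the hypothesis $w(u)\le\eps B$.
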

\begin{proof}
We consider the following iterative procedure. 
Let $O_0 = \mathrm{OPT}$. 
At the $i$-th round, we select the element $o_i \in O_{i-1}$ with the lowest current density, i.e.,
\[
o_i = \arg\min_{u \in O_{i-1}} \frac{f(u \mid O_{i-1} - u)}{w(u)},
\] 
and remove it to obtain $O_i = O_{i-1} \setminus \{o_i\}$, continuing until $w(O_t) \le \alpha B$ for some $t$. Since $o_i$ has the lowest density, it follows that
\[
f(o_i \mid  O_{i-1} - o_i)  \le f(O_{i-1}) \cdot \frac{w(o_i)}{w(O_{i-1})}.
\]
Otherwise, summing over all $u \in O_{i-1}$, we would obtain
\[
f(O_{i-1}) \ge \sum_{u \in O_{i-1}}f(u\mid O_{i-1} - u) > f(O_{i-1})\sum_{u \in O_{i-1}}\frac{w(u)}{w(O_{i-1})} = f(O_{i-1}),
\]
which is a contradiction. Rearranging terms, we then have
\[
\frac{f(O_{i})}{w(O_{i})} \ge \frac{f(O_{i - 1})}{w(O_{i - 1})}.
\]
By applying this inequality iteratively, we obtain
\[
\frac{f(O_{t - 1})}{w(O_{t-1})} \ge \frac{f(O_{t - 2})}{w(O_{t-2})} \ge \cdots \ge \frac{f(\mathrm{OPT})}{w(\mathrm{OPT})}.
\]
It follows that
\begin{align*}
f(O_t) &\ge f(O_{t - 1}) - f(o_t \mid O_{t}) \ge f(\mathrm{OPT}) \frac{w(O_{t - 1} )}{w(\mathrm{OPT})} - f(o_t \mid O_{t}) \\
    &\ge \alpha f(\mathrm{OPT}) - f(o_t \mid O_{t}) \ge \alpha f(\mathrm{OPT}) - f(o_t \mid \emptyset) \ge (\alpha - \varepsilon^2)\cdot f(\mathrm{OPT}).
\end{align*}
\end{proof}




Consequently, our algorithm first enumerates all subsets $E_i$ of size at most $\varepsilon^{-2}$. 
This guarantees that the desired set $E$ will be included among the enumerated candidates. 
For each such set $E_i$, we solve the corresponding derived optimization problem defined above using an appropriate optimization algorithm. 
If this algorithm returns an approximate fractional solution for the derived problem, then, when applied to the correct set $E$, it yields a fractional solution that achieves the desired approximation guarantee.

\subsection{Deterministic Measured Continuous Greedy}
We now describe our optimization component, which relies on a novel split algorithm namely, \textsc{KnapsackSplit} subroutine. Different from the matroid split algorithm, we replace the marginal gain $f(u\mid T_j)$ with the marginal density $f(u\mid T_j)/w(u)$. 
Throughout the remainder of this section, we set $\ell = 1/\varepsilon$ and, without loss of generality, assume that $1/\varepsilon$ is an integer. 
For the \textsc{KnapsackSplit} algorithm, we establish the following result:

\begin{lemma} 
\label{lemma:knapsack_split}
Given a submodular function $f: 2^{\mathcal N} \to \mathbb{R}^+ $, a knapsack constraint with capacity $B$ and a weight function 
$ w: \mathcal N \to \mathbb{R}^+ $, then for any feasible set $Q$, if $f(u \mid \emptyset) \le \eps^2f(Q) $ for all $ u \in \NN$,
 the \textsc{KnapsackSplit} Algorithm produces disjoint sets $T_1, T_2, \ldots, T_{\ell} \subseteq \mathcal{N}$ such that their union $T=\cup_{j=1}^{\ell} T_j$ that satisfying $\cup_{j = 1}^{\ell}T_j \le B$, and then we have

\[ 
\sum_{j=1}^{\ell} f\left(T_j \mid \varnothing\right) \geq \max\left\{\frac{1}{\ell}\sum_{j = 1}^{\ell}\left( f(Q\cup T_j) - f(T_j)\right) - \eps^2f(OPT), 0 \right\}.
 \]
\end{lemma}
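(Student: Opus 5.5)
The plan is to adapt the charging argument of Lemma~\ref{lemmaspliteq}, replacing marginal gains by marginal densities and the matroid exchange bijection by a weight-averaging argument. I will assume \textsc{KnapsackSplit} behaves as follows: at each step, among all pairs $(u,j)$ with $u\notin T$ and $w(T)+w(u)\le B$, it picks the one maximizing $f(u\mid T_j)/w(u)$, and it adds $u$ to $T_j$ only if that marginal is positive. It stops when no element fits or no positive marginal remains.

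\textbf{Nonnegativity and trivial cases.} Every added element has positive marginal with respect to the part it joins. Telescoping inside each $T_j$ then gives $\sum_j f(T_j\mid\varnothing)\ge 0$, which covers the $0$ branch of the $\max$. Feasibility $w(T)\le B$ holds by construction. Suppose the algorithm stops because no fitting element has positive marginal into any $T_j$. Then every $o\in Q\setminus T$ that still fits has $f(o\mid T_j)\le 0$ for all $j$. The first term of the $\max$ is then bounded via $f(Q\cup T_j)-f(T_j)\le\sum_{o\in Q\setminus T_j} f(o\mid T_j)$, up to the contributions discussed in the third paragraph.

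\textbf{Main case: a budget obstruction.} Let $i^*$ be the first step at which some $o^*\in Q\setminus T$ no longer fits, i.e.\ $w(T^{i^*})+w(o^*)>B$, where $T^{i^*}$ is the union of the parts after step $i^*$. Since $Q$ is feasible, the weight collected satisfies $W:=w(T^{i^*})> B-w(o^*)\ge w(Q\setminus\{o^*\})$.

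Consider any step $i\le i^*$. Every $o\in Q\setminus\{o^*\}$ not yet chosen still fits, so greedy gives
\[
\rho_i:=\frac{f(u_i\mid T^{i-1}_{j_i})}{w(u_i)} \ge \frac{1}{\ell}\sum_{j=1}^{\ell}\frac{f(o\mid T_j^{i-1})}{w(o)} \ge \frac{1}{\ell}\sum_{j=1}^{\ell}\frac{f(o\mid T_j)}{w(o)},
\]
where the second inequality is submodularity, using the final sets $T_j$. Taking the $w(o)$-weighted average over $o\in Q\setminus\{o^*\}$ yields
\[
\rho_i\ge \frac{1}{\ell\, w(Q\setminus o^*)}\sum_j\sum_{o\in Q\setminus o^*} f(o\mid T_j).
\]
Summing $\rho_i w(u_i)$ over $i\le i^*$ and using $W\ge w(Q\setminus o^*)$ gives
\[
\sum_j f(T_j\mid\varnothing)\ \ge\ \frac1\ell\sum_j\sum_{o\in Q\setminus o^*} f(o\mid T_j).
\]
Here the later steps only add nonnegative gains, and the telescoping identity from the first paragraph was used. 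Finally, submodularity gives $f(Q\cup T_j)-f(T_j)\le \sum_{o\in Q\setminus T_j}f(o\mid T_j)$. The single missing term satisfies $\frac1\ell\sum_j f(o^*\mid T_j)\le f(o^*\mid\varnothing)\le\eps^2 f(Q)\le \eps^2 f(\OPT)$, by the small-element hypothesis. This produces exactly the $\eps^2 f(\OPT)$ loss in the statement.

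\textbf{Expected obstacle.} The delicate point is the elements $o\in Q$ that the algorithm itself selects into some part $T_{j'}$. Their terms $f(o\mid T_j)$ for $j\ne j'$ still appear on the right-hand side, but after $o$ is picked they are no longer available as competitors in the greedy comparison. I would first handle this as the matroid proof does, with the sets $T_j^i$ from Lemma~\ref{lemmaspliteq}. Concretely, I would compare only against $o$ not yet chosen at step $i$, and charge a chosen $o\in Q$ through its own selection step, where its density is at least the competing average. If that bookkeeping fails, I would instead run the weighted-average argument only over $Q\setminus T$. In that case I need to verify that the weight consumed by $Q\cap T$ is counted in $W$ on both sides, so that the inequality $W\ge w(Q\setminus T\setminus o^*)+w(Q\cap T)$ still closes the argument.
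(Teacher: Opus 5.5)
Your route differs from the paper's, and the first-obstruction device is sound. However, the main case as written has a genuine gap, exactly where you suspected. The $w(o)$-weighted average over all of $Q\setminus\{o^*\}$ uses the greedy comparison $\rho_i\ge\frac1\ell\sum_j f(o\mid T^{i-1}_j)/w(o)$ also for elements $o\in Q$ selected before step $i$. For those elements the comparison is unavailable. For instance, take $f$ modular and let $o$ be the densest element, picked at step $1$: the terms $f(o\mid T^{i-1}_j)/w(o)$ for the other parts stay at $o$'s density while $\rho_i$ decreases. So the displayed bound $\sum_j f(T_j\mid\varnothing)\ge\frac1\ell\sum_j\sum_{o\in Q\setminus o^*}f(o\mid T_j)$ is not proved.

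Neither remedy closes this on its own. Remedy (2) averages only over $Q\setminus T$ but still sums $\rho_i w(u_i)$ over all steps $i\le i^*$. That turns the surplus weight $w(Q\cap T)$ into multiples of the average density of $Q\setminus T$, which says nothing about the terms $\frac1\ell\sum_{j\ne j'}f(o\mid T_j)$, $o\in Q\cap T_{j'}$, still on the right-hand side. Remedy (1) gives the right bound for those $o$. But if their selection steps also serve as comparison steps for other elements, those steps are counted twice.

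What is needed is both remedies at once, with the steps partitioned:
\begin{itemize}
\item[(a)] Each $o\in Q\cap T$, selected at step $i$ into $T_{j_i}$, pays for itself. The greedy also maximizes over parts, so $f(o\mid T^{i-1}_{j_i})=\max_j f(o\mid T^{i-1}_j)\ge\frac1\ell\sum_j f(o\mid T_j)$. This uses submodularity for $j\ne j_i$, and $f(o\mid T^{i-1}_{j_i})>0=f(o\mid T_{j_i})$ for $j=j_i$.
\item[(b)] Only the steps $i\le i^*$ that add elements outside $Q$ pay for the fixed set $Q\setminus T\setminus\{o^*\}$. Each such $o$ is a genuine candidate at every one of these steps: it is never selected, and it fits by minimality of $i^*$. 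There is enough weight because $w(T^{i^*}\setminus Q)=W-w(T^{i^*}\cap Q)>w(Q\setminus\{o^*\})-w(Q\cap T)=w(Q\setminus T\setminus\{o^*\})$. This is the correct use of your inequality $W\ge w(Q\setminus T\setminus o^*)+w(Q\cap T)$.
\end{itemize}
All other steps contribute $\ge 0$. The no-obstruction case is (a) plus $f(o\mid T_j)\le 0$ for $o\in Q\setminus T$. The single lost term is $\frac1\ell\sum_j f(o^*\mid T_j)\le f(o^*\mid\varnothing)\le\eps^2 f(Q)$.

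For comparison, the paper places $Q$ and $T$ (in insertion order) side by side on $[0,B]$. It pairs them fractionally through the overlaps $w_{\alpha,\beta}$ and telescopes along the prefixes $Q_{\alpha-1}$, losing only the partially covered $q_\gamma$. It asserts the greedy comparison for every overlapping pair without checking that $q_\alpha$ is still a candidate at step $\beta$. Two things can fail: $q_\alpha$ may already lie in $T^{(\beta-1)}$ (your obstacle), and overlap only gives $w(T^{(\beta-1)})<\sum_{k\le\alpha}w(q_k)$, not $w(T^{(\beta-1)})+w(q_\alpha)\le B$. Your first-obstruction argument handles the second point correctly. Your explicit positive-marginal stopping rule is also one the paper's proof relies on, in its step ``otherwise this element could be added'', even though its pseudocode adds the argmax regardless of sign.
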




This result is analogous to the  proved for the \textsc{Split} algorithm in \cite{DBLP:conf/stoc/BuchbinderF25}, except for an additional $\varepsilon^{2} f(Q)$ term. 
The proof strategy also differs from that of \textsc{Split}. 
In the original analysis, the argument relies on the basis exchange property of matroids, which establishes a bijection between the selected set $T$ and any feasible set $Q$. 
Under a knapsack constraint, such a bijection no longer exists. 
Instead, we order the elements into a sequence and partition them into several segments so that they can be paired accordingly. 
As a result, at most one element $q \in Q$ with $f(q \mid T) \ge 0$ may remain unpaired. 
As shown above, the marginal contribution of such an element is at most $\varepsilon^{2} f(Q)$. 
We further show that this additional term is negligible in the overall analysis of the optimization algorithm.

\begin{algorithm}[H]
    
\caption{\textsc{KnapsackSplit} $(f, w, B, \ell)$}\label{alg:Split}
Initialize: $T_1 \gets \emptyset, T_2\gets\emptyset,\ldots, T_{\ell}\gets\emptyset$. \\
Use $T$ to denote the union $\cup_{j = 1}^{\ell} T_j$.\\
\While{$w(T) \le B$ }{
    Let $\mathcal{N}' \gets \{u \in N \backslash T | w(T \cup \{u\}) \le B \}$. \\
    Let $(u,j) \in arg\max_{(u,j)\in \mathcal{N'}\times[\ell]} \frac{f(u\mid T_j)}{w(u)}.$\\
    $T_j \gets T_j \cup \{u\}.$ \\
}
\Return $ (T_1, \cdots, T_{\ell}).$
\end{algorithm}

     \begin{algorithm}[H]
        \caption{\textsc{KnapsackDMCG($g$, $w$, $B$, $\eps$)}}
        \label{alg:deterministicMCG}
        Let $\delta \gets \eps^3$. \\
        Let $\mathbf{y}(0) \gets 0$. \\
        \For{i = 1 \KwTo $1/\delta$}{
            Define $g:2^{\mathcal{N}} \rightarrow \mathbb{R}_{\ge 0}$ to be $g(S) = F(\mathbf{e}_{S}\lor \mathbf{y}(t)).$\\
            $(T_1, \ldots, T_{1/\eps}) \gets $\textsc{Split}$(g,w,B,1/\eps)$ \\
            $\mathbf{y}(t+\delta) \gets \mathbf{y}(t) \oplus(\delta \cdot \sum_{j = 1}^{1/\epsilon}\mathbf{e}_{{T_j}})$.
        }
    \Return $\mathbf{y}(1)$.
    \end{algorithm}
\begin{proof}We first aim to establish a correspondence between the elements in $Q$ and those in 
$T = \bigcup_i T_i$. 
We arbitrarily order the elements in $Q$ as $q_1, q_2, \ldots, q_x$, 
and denote the elements of $T$ according to their order of added as 
$t_1, t_2, \ldots, t_y$, 
where $h_i$ indicates that the element $t_i$ was added to the set $T_{h_i}$. 
We additionally denote by $T_j^{(i)}$ the set of elements in $T_j$ at the moment when the element $t_i$ is added.

We represent each element as an interval on the real line $[0,B]$, with length equal to its weight. 
We first place the elements of $Q$ consecutively according to the fixed order: 
the element $q_1$ corresponds to the interval $[0, w(q_1)]$, 
$q_2$ corresponds to the interval $[w(q_1), w(q_1)+w(q_2)]$, 
and in general, $q_i$ corresponds to the interval
\[
I_i^{Q} = \Bigl[\sum_{k=1}^{i-1} w(q_k),\ \sum_{k=1}^{i} w(q_k)\Bigr].
\]

Similarly, we place the elements of $T=\bigcup_i T_i$ on the same interval $[0,B]$ according to their order of insertion. 
Specifically, the element $t_1$ corresponds to the interval $[0, w(t_1)]$, 
the element $t_2$ corresponds to the interval $[w(t_1), w(t_1)+w(t_2)]$, 
and in general, the element $t_i$ corresponds to the interval
\[
I_k^{T}  = \Bigl[\sum_{k =1}^{i-1} w(t_\ell),\ \sum_{k = 1}^{i} w(t_\ell)\Bigr].
\]


Based on the interval representations constructed above, we consider the pairwise intersections between intervals associated with elements in $Q$ and those associated with elements in $T$. 
Specifically, each point in $[0,B]$ lies in exactly one interval corresponding to an element of $Q$ and in at most one interval corresponding to an element of $T$, which naturally induces a pairing between elements of $Q$ and elements of $T$ through their overlapping intervals. 
For $\alpha$ indexing elements of $Q$ and $\beta$ indexing elements of $T$, we denote by $w_{\alpha,\beta}$ the length of the intersection between the corresponding intervals,
\[
w_{\alpha,\beta}
\;=\;
\bigl|\, I_\alpha^{Q} \cap I_\beta^{T} \,\bigr|.
\]

\begin{center}
\begin{tikzpicture}

\def\rectH{0.4} 
\def\Ytop{0} 

\draw[fill=blue!30] (0,\Ytop) rectangle (1.5,\Ytop+\rectH); \node at (0.75,\Ytop+0.2) {$q_1$};
\draw[fill=blue!30] (1.5,\Ytop) rectangle (3.0,\Ytop+\rectH); \node at (2.25,\Ytop+0.2) {$q_2$};
\draw[fill=blue!30] (3.0,\Ytop) rectangle (4.5,\Ytop+\rectH); \node at (3.75,\Ytop+0.2) {$q_3$};
\draw[fill=blue!30] (4.5,\Ytop) rectangle (6.5,\Ytop+\rectH); \node at (5.5,\Ytop+0.2) {$\dots$};
\draw[fill=blue!30] (6.5,\Ytop) rectangle (8.0,\Ytop+\rectH); \node at (7.25,\Ytop+0.2) {$q_x$};

\def\rectHbot{0.4} 
\pgfmathsetmacro{\Y}{\Ytop - 1}

\draw[fill=red!30] (0,\Y) rectangle (1.2,\Y+\rectHbot); \node at (0.6,\Y+0.2) {$t_1$};
\draw[fill=red!30] (1.2,\Y) rectangle (2.4,\Y+\rectHbot); \node at (1.8,\Y+0.2) {$t_2$};
\draw[fill=red!30] (2.4,\Y) rectangle (3.6,\Y+\rectHbot); \node at (3.0,\Y+0.2) {$t_3$};
\draw[fill=red!30] (3.6,\Y) rectangle (4.8,\Y+\rectHbot); \node at (4.2,\Y+0.2) {$\dots$};
\draw[fill=red!30] (4.8,\Y) rectangle (6.0,\Y+\rectHbot); \node at (5.4,\Y+0.2) {$t_y$};

\draw[dashed, fill=white, fill opacity=0] (8.0,\Ytop) rectangle (9.0,\Ytop+\rectH);
\draw[dashed, fill=white, fill opacity=0] (6.0,\Y) rectangle (9.0,\Y+\rectHbot);

\def\dashline{1.2, 1.5, 2.4, 3.0, 3.6, 4.5, 4.8, 6.0, 6.5, 8.0}
\foreach \i in \dashline {
    \draw[dashed] (\i,\Ytop + \rectH) -- (\i,\Y);
}
\draw[dashed] (9.0,\Ytop + \rectH) -- (9.0,\Y);

\draw[<->] (0, \Ytop+\rectH+0.3) -- (9.0, \Ytop+\rectH+0.3) node[midway, above] {$B$};

\end{tikzpicture}

\end{center}

For each $\alpha \in \{1,2,\ldots,x\}$ and $\beta \in \{1,2,\ldots,\mathbf{y}\}$, we have
\[
\frac{f(t_\beta \mid T_{h_\beta}^{(\beta-1)})}{w(t_\beta)}
\ge
\frac{1}{\ell} \sum_{j = 1}^{\ell}
\frac{f(q_\alpha \mid T_j^{(\beta-1)})}{w(q_\alpha)}
\ge
\frac{1}{\ell} \sum_{j = 1}^{\ell}
\frac{f(q_\alpha \mid T_j \cup Q_{\alpha - 1})}{w(q_\alpha)}.
\]
Multiplying both sides by $w_{\alpha,\beta}$ and summing over all
$\alpha \in \{1,\dots,x\}$ and $\beta \in \{1,\dots,\mathbf{y}\}$, we obtain
\[
\sum_{\beta=1}^{y} \sum_{\alpha=1}^{x}
\frac{w_{\alpha,\beta}}{w(t_\beta)}
f(t_\beta \mid T_{h_\beta}^{(\beta-1)})
\ge
\frac{1}{\ell} \sum_{j=1}^{\ell}
\sum_{\alpha=1}^{x} \sum_{\beta=1}^{y}
\frac{w_{\alpha,\beta}}{w(q_\alpha)}
f(q_\alpha \mid T_j \cup Q_{\alpha-1}).
\]
On the left-hand side, by construction we have
$\sum_{\alpha=1}^{x} w_{\alpha,\beta} \le w(t_\beta)$ for each $\beta$.
Thus, the left-hand side is lower bounded by
\[
\sum_{\beta=1}^{y} f(t_\beta \mid T_{h_\beta}^{(\beta-1)}) = \sum_{j = 1}^{\ell} f(T_j).
\]
On the right-hand side, the term $\sum_{\beta=1}^{y} w_{\alpha,\beta}$ admits a simple characterization depending on the position of $\alpha$ relative to $\gamma$. 
For all $\alpha < \gamma$, the interval $I_\alpha^{Q}$ is fully covered by the union of intervals corresponding to $T$, and hence $\sum_{\beta=1}^{y} w_{\alpha,\beta} = w(q_\alpha)$. 
For $\alpha = \gamma$, the interval $I_\alpha^{Q}$ is only partially covered, which implies $\sum_{\beta=1}^{y} w_{\alpha,\beta} \le w(q_\alpha)$. 
For $\alpha > \gamma$, the interval $I_\alpha^{Q}$ has empty intersection with the intervals corresponding to $T$ and lies entirely to their right, implying $w(q_\alpha) + w(T) \le B$. Thus,
\[
f(o_{\alpha} \mid T_i) \le 0, \quad \forall i = 1,2,\ldots,\ell,
\]
otherwise this element could be added to $T_i$. Summing and scaling, we obtain:
\[ 
\sum_{\alpha = \gamma + 1}^x  \sum_{\ell = 1}^{\ell}f(q_\alpha\mid T_j\cup Q_{\alpha- 1}) \le \sum_{\alpha = \gamma + 1}^x  \sum_{\ell = 1}^{\ell}f(q_\alpha\mid T_j) \le 0
 \]
Consequently, we have:

\begin{align*}
&\frac{1}{\ell} \sum_{j=1}^{\ell}
\sum_{\alpha=1}^{x} \sum_{\beta=1}^{y}
\frac{w_{\alpha,\beta}}{w(q_\alpha)}
f(q_\alpha \mid T_j \cup Q_{\alpha-1}) \\
= &\frac{1}{\ell}\sum_{\alpha=1}^{\gamma-1}
\sum_{j=1}^{\ell}
\sum_{\beta=1}^{y} \frac{w_{\alpha,\beta}}{w(q_\alpha)}
f(q_\alpha \mid T_j \cup Q_{\alpha-1}) 
+ \frac{1}{\ell}\sum_{j=1}^{\ell} \sum_{\beta=1}^{y} \frac{w_{\gamma,\beta}}{w(q_\gamma)}
f(q_\gamma \mid T_j \cup Q_{\gamma-1}) \\
=  & \frac{1}{\ell}\sum_{\alpha = 1}^{\gamma}\sum_{j=1}^{\ell}f(q_\alpha \mid T_j \cup Q_{\alpha-1})  - \left(1 - \sum_{\beta=1}^{y} \frac{w_{\gamma,\beta}}{w(q_\gamma)}\right)\frac{1}{\ell}\sum_{j=1}^{\ell}f(q_\gamma \mid T_j \cup Q_{\gamma-1}) \\
\ge & \frac{1}{\ell}\sum_{\alpha = 1}^{x}\sum_{j=1}^{\ell}f(q_\alpha \mid T_j \cup Q_{\alpha-1})  - \left(1 - \sum_{\beta=1}^{y} \frac{w_{\gamma,\beta}}{w(q_\gamma)}\right)\frac{1}{\ell}\sum_{j=1}^{\ell}f(q_\gamma \mid T_j \cup Q_{\gamma-1}) \\
\ge & \frac{1}{\ell}\sum_{\alpha = 1}^{x}\sum_{j=1}^{\ell}f(q_\alpha \mid T_j \cup Q_{\alpha-1})  - \left(1 - \sum_{\beta=1}^{y} \frac{w_{\gamma,\beta}}{w(q_\gamma)}\right)\frac{1}{\ell}\sum_{j=1}^{\ell}f(q_\gamma \mid \emptyset) \\
= & \frac{1}{\ell}\sum_{j=1}^{\ell}\left(f(Q \cup T_j) - f(T_j)\right)  - \eps^2 f(Q)
\end{align*}

\end{proof}

With the per-iteration progress guarantee in place, we can prove that the algorithm returns a continuous solution with a guaranteed approximation ratio.

\begin{theorem}
Given a submodular function $f: 2^{\mathcal N} \to \mathbb{R}^+ $, a knapsack constraint with capacity $B$ and a weight function 
$ w: \mathcal N \to \mathbb{R}^+ $, the $\textsc{KnapsackDMCG}$ algorithm returns a fractional solution $\mathbf{y} \in [0,1]^{2^\N} $ with $\sum_{u\in\N}\Mar_u(\mathbf{y}) w(u) \le B$, and 

\begin{equation}
\label{eq:knapsack_split_approx}
    F(\mathbf{y}(1)) \ge  (\frac{1}{e} - O(\eps)) f(\OPT)
\end{equation} 
where $F$ is the extended multilinear extension of $f$.
    Besides, for all $i \in [0 , 1/\delta]$, we have $\operatorname{frac}(\mathbf{y}(i\delta)) \le 1 / \eps \cdot i \le \varepsilon^{-4}$.

\end{theorem}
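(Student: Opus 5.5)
The proof has three parts: support size, feasibility, and value. The first two are routine and mirror the matroid section; the value bound is a measured-continuous-greedy recursion.

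\emph{Support size.} Each iteration of \textsc{KnapsackDMCG} modifies at most $\ell=1/\eps$ coordinates of $\mathbf{y}$, namely those indexed by $T_1,\dots,T_{1/\eps}$. After $i$ iterations, $\ffrac(\mathbf{y}(i\delta))\le \supp(\mathbf{y}(i\delta))\le i/\eps\le \eps^{-4}$, exactly as in Lemma~\ref{suppproof}.

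\emph{Feasibility.} The sets $T_j$ returned by \textsc{KnapsackSplit} are disjoint and satisfy $w(\bigcup_j T_j)\le B$. Hence $\Mar(\delta\sum_j \mathbf{e}_{T_j})=\delta\mathbf{1}_{\bigcup_j T_j}$ has weight at most $\delta B$. By Observation~\ref{maroplus}, $\Mar(\mathbf{y}(1))=\bigoplus_i \Mar(\mathbf{x}^i)\le \sum_i \Mar(\mathbf{x}^i)$ coordinatewise. Summing over the $1/\delta$ iterations gives $\sum_u \Mar_u(\mathbf{y}(1))w(u)\le B$, as in Lemma~\ref{marinP}.

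\emph{Value: per-step progress.} Fix $t=i\delta$ and let $g(S)=F(\mathbf{e}_S\vee\mathbf{y}(t))$. By Observation~\ref{observation 8}, $g$ is nonnegative submodular. We apply Lemma~\ref{lemma:knapsack_split} with $Q=\OPT$. In the reduced instance produced by the enumeration (Lemma~\ref{lemma:enumerate_basic}), singleton marginals are small, so the additive error is $\eps^2 f(\OPT)$. This gives
\[
\sum_j\bigl(g(T_j)-g(\emptyset)\bigr)\ge \frac1\ell\sum_j\bigl(g(\OPT\cup T_j)-g(T_j)\bigr)-\eps^2 f(\OPT).
\]
To lower bound the average $\frac1\ell\sum_j g(\OPT\cup T_j)$, I would repeat the Lovász-extension argument from Lemma~\ref{lemmaspliteq}, applied to $g$ with the $T_j$ disjoint. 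This yields $\frac1\ell\sum_j g(\OPT\cup T_j)\ge (1-\eps)g(\OPT)$. Rearranging as in Corollary~\ref{splitprop} gives
\[
\sum_j\bigl(g(T_j)-g(\emptyset)\bigr)\ge (1-2\eps)g(\OPT)-(1-\eps)g(\emptyset)-\eps^2 f(\OPT).
\]
Next I expand $F(\mathbf{y}(t+\delta))$ over the outcomes of $\reali(\mathbf{x})$, keeping only the outcomes $\emptyset$ and a single $T_j$, as in Lemma~\ref{lemmaofSPLIT}. This gives
\[
F(\mathbf{y}(t+\delta))-F(\mathbf{y}(t))\ge \delta\Bigl[(1-3\eps)F(\mathbf{e}_{\OPT}\vee\mathbf{y}(t))-F(\mathbf{y}(t))-\eps f(\OPT)\Bigr].
\]
The $\eps^2$ error term, multiplied by $\delta\cdot 1/\delta$ over all iterations, contributes only $O(\eps)f(\OPT)$ in total.

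\emph{Value: closing the recursion.} By Observation~\ref{obs:eme}, $F(\mathbf{e}_{\OPT}\vee\mathbf{y}(t))\ge(1-\|\Mar(\mathbf{y}(t))\|_\infty)f(\OPT)$. By Lemmas~\ref{suppproof} and~\ref{upperboundofmar}, $\|\Mar(\mathbf{y}(t))\|_\infty\le 1-e^{-t}+\delta$. Substituting yields
\[
F(\mathbf{y}(t+\delta))\ge(1-\delta)F(\mathbf{y}(t))+\delta(e^{-t}-O(\eps))f(\OPT).
\]
I then compare with the auxiliary sequence $h((i+1)\delta)=(1-\delta)h(i\delta)+\delta e^{-i\delta}f(\OPT)$, using induction in the style of Lemma~\ref{fandg}. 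This sequence satisfies $h(1)\ge(1/e-O(\delta))f(\OPT)$, which gives $F(\mathbf{y}(1))\ge(1/e-O(\eps))f(\OPT)$.

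\emph{Main obstacle.} The hardest step is the split guarantee for $g$ in the form above. Lemma~\ref{lemma:knapsack_split} requires $g(u\mid\emptyset)\le\eps^2 g(\OPT)$, whereas enumeration only bounds marginals relative to $f(\OPT)$. I expect to resolve this by always expressing the error in units of $f(\OPT)$, using $g(u\mid\emptyset)\le f(u\mid\emptyset)$, which follows from submodularity since $g(u\mid\emptyset)$ is an expectation of $f(u\mid R)$ over random sets $R$. A second point to check is that the dummy-element argument from the matroid case is unavailable here. So I must ensure $g(T_j)\ge g(\emptyset)$ directly: the greedy in \textsc{KnapsackSplit} adds only elements of maximum density, and I would restrict it to elements of nonnegative density so that $g(T_j)\ge g(\emptyset)$ holds.
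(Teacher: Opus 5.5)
Your proposal is correct and follows essentially the same route as the paper's proof: the support bound by counting modified coordinates, feasibility via $\Mar$ of the $\oplus$-sum, and for the value the \textsc{KnapsackSplit} guarantee combined with the Lov\'asz-extension averaging, the one-step expansion from the matroid case, the bound $F(\mathbf{e}_{\OPT}\vee\mathbf{y}(t))\ge(1-\|\Mar(\mathbf{y}(t))\|_\infty)f(\OPT)$, and the $(1-\delta)$-recurrence. Your explicit treatment of the small-marginal hypothesis (which the statement omits but the paper uses implicitly through the split lemma) and of $g(T_j)\ge g(\emptyset)$ fills in details the paper's sketch skips, though restricting \textsc{KnapsackSplit} to nonnegative-density elements is not actually needed: even without $\sum_j g(T_j\mid\emptyset)\ge 0$, the coefficient of $F(\mathbf{y}(t))$ in the one-step bound stays at least $1-\delta$ when $\delta=\eps^3$.
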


\begin{proof}

Firstly, we briefly explain why \eqref{eq:knapsack_split_approx} holds, as this part of the proof is identical to the argument in Section~3 (the matroid case) and can be directly derived from~\cite{DBLP:conf/stoc/BuchbinderF25}. By submodularity and the choice $\ell = 1/\varepsilon$, we obtain
\[
\sum_{j=1}^{1/\varepsilon} g(T_j \mid \varnothing)
\;\ge\;
(1-O(\varepsilon))\,g(\OPT)
-
\varepsilon \sum_{j=1}^{1/\varepsilon} g(T_j).
\]
Plugging this inequality into the iterative process yields
\[
\sum_{j=1}^{1/\varepsilon}
\left(
F(\mathbf e_{T_j}\vee \mathbf{y}((i-1)\delta))
-
F(\mathbf{y}((i-1)\delta))
\right)
\;\ge\;
(1-O(\varepsilon))\cdot
F(\mathbf e_{\OPT}\vee \mathbf{y}((i-1)\delta))
-
(1-\varepsilon)\cdot
F(\mathbf{y}((i-1)\delta)).
\]
Expanding the update rule of the algorithm gives
\[
\frac{1}{\delta}
\left(
F(\mathbf{y}(i\delta)) - F(\mathbf{y}((i-1)\delta))
\right)
\;\ge\;
(1-O(\varepsilon))\cdot
F(\mathbf e_{\OPT}\vee \mathbf{y}((i-1)\delta))
-
F(\mathbf{y}((i-1)\delta)).
\]
Finally, using $ \| \operatorname{Mar}(\mathbf{y}(i\delta)) \| \le  1 - (1 - \delta)^i$  together with Observation~\ref{obs:eme} we can acquire
\[
F(\mathbf{e}_{\OPT} \vee \mathbf{y}((i-1)\delta)) \ge (1-\delta)^{i-1}\cdot f(\OPT),
\]
and
solving this recurrence over the interval $[0,1]$ yields
\[
F(\mathbf{y}(1))
\;\ge\;
\left(\frac{1}{e}-O(\varepsilon)\right)f(\OPT),
\]
which establishes \eqref{eq:knapsack_split_approx}.

Besides,it is easy to see that in each iteration, when the algorithm updates $\mathbf{y}(t)$ to $\mathbf{y}(t+\delta)$, it increases at most $1/\varepsilon$ coordinates. 
Since the algorithm performs at most $1/\varepsilon^{3}$ iterations in total, we have
$\operatorname{frac}(\mathbf{y}(i\delta)) \le (1/\varepsilon)\cdot i$. Now it suffices to show that
\[
\sum_{u \in \N} \Mar_u(\mathbf{y})\, w(u) \le B .
\]
Let $T_j^{(i)}$ denote the set $T_i$ produced in the $j$-th iteration of the algorithm. 
The final vector $\mathbf{y}$ can be written as
\[
\mathbf{y} = \bigoplus_{j = 1}^{1/\delta} \Bigl( \delta \cdot \sum_{i = 1}^{1/\varepsilon} \mathbf{e}_{T_j^{(i)}} \Bigr).
\]
For each fixed $j$, the \textsc{KnapsackSplit} algorithm guarantees that the total weight of the selected sets does not exceed the capacity, implying
\[
\Mar\Bigl(\delta \cdot \sum_{i = 1}^{1/\varepsilon} \mathbf{e}_{T_j^{(i)}}\Bigr)
= \delta \sum_{i = 1}^{1/\varepsilon} w(T_j^{(i)}) \le \delta B .
\]
By the additivity of the operator $\Mar(\cdot)$ over the direct sum, we obtain
\[
\Mar(\mathbf{y})
= \bigoplus_{j = 1}^{1/\delta}
\Mar\Bigl(\delta \cdot \sum_{i = 1}^{1/\varepsilon} \mathbf{e}_{T_j^{(i)}}\Bigr)
\le \sum_{j = 1}^{1/\delta} \delta B
= B .
\]
\end{proof}

 \begin{algorithm}[H]
        \caption{\textsc{Relax}($\mathbf{y}, u$)}
        \label{alg:relax}
        Update $\mathbf{y}_{u} \gets \operatorname{Mar}_u(\mathbf{y})$\\
        \For{every non-empty set $S \subseteq \mathcal{N} - u$}{
            $\mathbf{y}_S \gets 1 - (1 - \mathbf{y}_{S+u})(1 - \mathbf{y}_S).$\\
            $\mathbf{y}_{S+u} \gets 0.$
        }
    \Return $\mathbf{y}$.
    \end{algorithm}

\begin{algorithm}
\caption{\textsc{Rounding}($f, w, \mathbf{y}$)}

\While{$R \neq \NN$}{
   $S \gets R \cap \{u \in \mathcal N' \mid \mathbf{y}_{\{u\}} \in (0,1)\}$.\\
    \While{$|S| < 2$}{
        Choose $u \in \mathcal N' \setminus R$ .\\
        $\mathbf{y} \gets \textsc{Relax}(\mathbf{y}, u)$.\\
        $R \gets R \cup \{u\}$.\\
        Update $S \gets R \cap \{u \in \mathcal N' \mid \mathbf{y}_{\{u\}} \in (0,1)\}$.
    }

    Pick distinct $u, v \in S$.\\
    Define
    $g(t) \coloneqq
 F\!\left(\mathbf{y} + t\!\left(\frac{\mathbf e_{\{u\}}}{w(u)} - \frac{\mathbf e_{\{v\}}}{w(v)}\right)\right)$.\\

    Let
    $t_{\max} = \min\{ (1 - \mathbf{y}_{\{u\}}) w(u),\; \mathbf{y}_{\{v\}} w(v) \}$ and
    $t_{\min} = \max\{ -\mathbf{y}_{\{u\}} w(u),\; (\mathbf{y}_{\{v\}} - 1) w(v) \}$.\\

    Let $t^\star \in \arg\max_{t \in \{t_{\min}, t_{\max}\}} g(t)$.\\
    $\mathbf{y} \gets \mathbf{y} + t^\star\!\left(\frac{\mathbf e_{\{u\}}}{w(u)} - \frac{\mathbf e_{\{v\}}}{w(v)}\right)$.
}
\Return $\arg\max\Bigl\{\, f(R),\; f\!\bigl(R \setminus \{\,u \in \mathcal N \mid \mathbf{y}_{\{u\}} \in (0,1)\,\}\bigr) \Bigr\}$.

\end{algorithm}

\subsection{Rounding}
Finally, we apply a rounding algorithm to the fractional solution $\mathbf{y}$ obtained by running the optimization algorithm on the function $f(\cdot \cup E)$.

Our rounding algorithm is inspired by the Pipage Rounding paradigm and employs the \textsc{Relax} algorithm of \cite{DBLP:conf/stoc/BuchbinderF25} as a key subroutine. 
The \textsc{Relax} algorithm transforms a vector $\mathbf{x} \in [0,1]^{2^{n}}$ by aggregating all components corresponding to sets that contain a given element $u$ into the coordinate $\mathbf{e}_u$. 
This transformation preserves the marginal vector $\mathrm{Mar}(\mathbf{x})$, increases the support size by at most one, and does not decrease the objective value.

\begin{lemma}[Lemma 5.1 in \cite{DBLP:conf/stoc/BuchbinderF25}]
    Let $\mathbf{x} \in [0, 1]^{2^\mathcal{N}}$, $u \in \mathcal{N}$ and $\mathbf{z} = \textsc{Relax}(\mathbf{x}, u) \in [0, 1]^{2^\mathcal{N}}$. Computing $\mathbf{z}$ requires $O(\mathrm{frac}(\mathbf{x}))$ time, and the new vector satisfies:

\begin{itemize}
    \item $\mathrm{frac}(\mathbf{z}) \le 1 + \mathrm{frac}(\mathbf{x})$.
    \item $\mathrm{Mar}(\mathbf{x}) = \mathrm{Mar}(\mathbf{z})$, and $x_S = 0$ for all sets $S$ that contain $u$, but are not the singleton set $\{u\}$.
    \item $F(\mathbf{z}) \ge F(\mathbf{x})$.
\end{itemize}
\end{lemma}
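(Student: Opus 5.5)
The plan is to verify the three bullet points by a direct coupling between the random sets $\reali(\mathbf{x})$ and $\reali(\mathbf{z})$. The running-time claim will come from implementing \textsc{Relax} so that it touches only the support of $\mathbf{x}$.

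\textbf{Marginals and zero pattern.} The assignment $\mathbf{y}_u \gets \operatorname{Mar}_u(\mathbf{y})$ is read, as intended, as setting the singleton coordinate $z_{\{u\}} = \operatorname{Mar}_u(\mathbf{x})$. This value is computed from the original $\mathbf{x}$, before the loop modifies anything. For every non-empty $S \subseteq \mathcal{N}-u$, the loop sets $1-z_S = (1-x_S)(1-x_{S+u})$ and $z_{S+u}=0$.
\begin{itemize}
\item Take $v\neq u$. The sets containing $v$ are grouped in pairs $\{S, S+u\}$ with $v\in S\subseteq \mathcal{N}-u$. Each pair contributes the factor $(1-x_S)(1-x_{S+u})=1-z_S$ to $1-\operatorname{Mar}_v$. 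Hence $\operatorname{Mar}_v(\mathbf{z})=\operatorname{Mar}_v(\mathbf{x})$.
\item For $u$ itself, the only non-zero coordinate of $\mathbf{z}$ on a set containing $u$ is $\{u\}$, with value $\operatorname{Mar}_u(\mathbf{x})$. Hence $\operatorname{Mar}_u(\mathbf{z})=\operatorname{Mar}_u(\mathbf{x})$.
\end{itemize}
This argument also gives the zero pattern in the second bullet. For the support bound, each pair $\{S,S+u\}$ contributes at most one non-zero coordinate $z_S$ of $\mathbf{z}$. The pairs in question are exactly those where at least one member lies in the support of $\mathbf{x}$. The only new coordinate is $\{u\}$, so $\mathrm{frac}(\mathbf{z})\le 1+\mathrm{frac}(\mathbf{x})$. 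Restricting the loop to pairs that meet the support gives the $O(\mathrm{frac}(\mathbf{x}))$ time bound.

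\textbf{Value does not decrease.} I will work on the product space of independent coins $\{\xi_S\}$, where $\xi_S=1$ with probability $x_S$.
\begin{itemize}
\item Let $A=\reali(\mathbf{x})\setminus\{u\}=\bigcup_{S:\xi_S=1}(S\setminus\{u\})$, and let $V$ be the indicator of $u\in\reali(\mathbf{x})$. Then $F(\mathbf{x})=\mathbb{E}[f(A)]+\mathbb{E}[V\cdot f(u\mid A)]$.
\item For $\mathbf{z}$: the event ``$\xi_S=1$ or $\xi_{S+u}=1$'' has probability exactly $z_S$, independently across pairs. So $\reali(\mathbf{z})\setminus\{u\}$ has the same law as $A$. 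The element $u$ enters $\reali(\mathbf{z})$ through $\{u\}$ only, independently of $A$, with probability $\Pr[V=1]$. Therefore $F(\mathbf{z})=\mathbb{E}[f(A)]+\Pr[V=1]\cdot\mathbb{E}[f(u\mid A)]$.
\end{itemize}

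It remains to show $\mathbb{E}[V f(u\mid A)]\le \mathbb{E}[V]\,\mathbb{E}[f(u\mid A)]$. Both $V$ and $A$ are coordinatewise non-decreasing functions of the coins. By submodularity, $f(u\mid A)$ is non-increasing in $A$, hence a non-increasing function of the coins. The Harris--FKG inequality on the product measure then says an increasing and a decreasing function are negatively correlated, which is exactly the needed inequality.

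The main obstacle is this correlation step. The key point is that $V$ and $A$ are driven by shared coins $\xi_{S+u}$, and monotonicity must be checked carefully for those shared coins. The identification of the law of $\reali(\mathbf{z})\setminus\{u\}$ is routine by comparison.
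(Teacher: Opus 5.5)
The paper does not prove this lemma. It quotes it from Buchbinder--Feldman and uses it as a black box, so there is no in-paper argument to match against.

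\textbf{Your argument is correct.} The three load-bearing steps all hold:
\begin{itemize}
\item The decomposition $f(\reali(\mathbf{x}))=f(A)+V\cdot f(u\mid A)$ with $A=\reali(\mathbf{x})\setminus\{u\}$ is exact.
\item The law of the pair $\bigl(\reali(\mathbf{z})\setminus\{u\},\ \mathbf{1}[u\in\reali(\mathbf{z})]\bigr)$ is the law of $A$ times an independent Bernoulli$(\Mar_u(\mathbf{x}))$, because the pairs $\{S,S+u\}$ use disjoint coins.
\item The Harris step has no hidden difficulty. Harris only needs each function to be monotone in every coin, shared or not. $V$ is non-decreasing in all coins, and $f(u\mid A)$ is non-increasing in all coins, since $u\notin A$ always and submodularity applies. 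The shared coins $\xi_{S+u}$ are exactly what makes $V$ and $A$ positively dependent; the inequality says that decoupling them can only raise the value.
\end{itemize}

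\textbf{A more local alternative.} This route is closer to the EME calculus of Observation~\ref{obs:eme} and splits one coordinate at a time.
\begin{itemize}
\item Replace $x_{S+u}=p$ by independent coins for $S$ and $\{u\}$, each with probability $p$.
\item Merge them into $x_S$ and $x_{\{u\}}$ by probabilistic sum; this does not change the law of the union.
\item The value changes by $p(1-p)\bigl[h(S)+h(\{u\})-h(S+u)-h(\varnothing)\bigr]\ge 0$, where $h(X)=\mathbb{E}[f(R'\cup X)]$ and $R'$ is the union of all remaining coins.
\end{itemize}
This is the same second-difference fact that gives $\partial^2F/\partial y_S\,\partial y_T\le 0$ for disjoint $S,T$. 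After all splits, the $\{u\}$ coordinate equals $\Mar_u(\mathbf{x})$ and one obtains exactly $\mathbf{z}$. That route needs only two-point submodularity and no correlation inequality. Yours handles all pairs at once and makes transparent why $\Mar$ is preserved while the value can only increase.

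\textbf{A small precision point: support versus frac.} Your counting and timing arguments are phrased for supports. The paper distinguishes $\mathrm{frac}\le\supp$, so $\mathrm{frac}$ presumably counts coordinates in $(0,1)$.
\begin{itemize}
\item For the count: $z_S\in(0,1)$ forces at least one of $x_S,x_{S+u}$ into $(0,1)$. So the same pairing gives $\mathrm{frac}(\mathbf{z})\le 1+\mathrm{frac}(\mathbf{x})$.
\item For timing: as written, your argument gives $O(\supp(\mathbf{x}))$. Reaching $O(\mathrm{frac}(\mathbf{x}))$ is a matter of representation. For example, keep all coordinates equal to $1$ as a single deterministic set; this changes neither $F$ nor $\Mar$.
\end{itemize}
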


Our rounding procedure iteratively reduces the number of fractional elements. 
At each step, we consider the elements whose corresponding coordinates in $\mathbf{y}$ are fractional after applying \textsc{Relax}. If fewer than two such elements remain, we \textsc{Relax} additional elements so that exactly two fractional elements are present, denoted by $u$ and $v$. 
We then examine the extended multilinear extension along the direction
\[
g(t) = F\Bigl(\mathbf{y} + t\Bigl(\frac{\mathbf{e}_{\{u\}}}{w(u)} - \frac{\mathbf{e}_{\{v\}}}{w(v)}\Bigr)\Bigr).
\]
We can see that $g$ is convex.

\begin{lemma}[Convexity along exchange directions]
\label{lem:exchange_convexity}
Let $f : 2^{\mathcal N} \to \mathbb{R}$ be a submodular function and let
$F$ denote its multilinear extension.
For any $\mathbf{y} \in [0,1]^{\mathcal N}$ and any two distinct elements
$u,v \in \mathcal N$, the function
\[
g(t)
=
F\Bigl(
\mathbf{y} + t\Bigl(
\frac{\mathbf e_{\{u\}}}{w(u)}
-
\frac{\mathbf e_{\{v\}}}{w(v)}
\Bigr)
\Bigr)
\]
is convex in $t$ over its feasible interval.
\end{lemma}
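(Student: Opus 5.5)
The plan is to compute the second derivative of $g$ and show it is non-negative. Although the lemma speaks of the multilinear extension, it is applied to the extended multilinear extension $F$ on $[0,1]^{2^{\mathcal N}}$ with the singleton coordinates $\mathbf e_{\{u\}},\mathbf e_{\{v\}}$. I will therefore argue for the EME, which contains the standard multilinear extension as the special case where $\mathbf{y}$ is supported on singletons. Write $\mathbf{d} = \mathbf e_{\{u\}}/w(u) - \mathbf e_{\{v\}}/w(v)$. Since $F$ is a polynomial that is affine in each single coordinate $y_S$ (Observation~\ref{obs:eme}, $\partial^2F/\partial y_S^2=0$), $g$ is a polynomial of degree at most two in $t$. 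By the chain rule,
\[
g''(t)=\frac{1}{w(u)^2}\frac{\partial^2F}{\partial y_{\{u\}}^2}+\frac{1}{w(v)^2}\frac{\partial^2F}{\partial y_{\{v\}}^2}-\frac{2}{w(u)w(v)}\frac{\partial^2F}{\partial y_{\{u\}}\partial y_{\{v\}}},
\]
evaluated at $\mathbf{y}+t\mathbf d$.

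The first two terms vanish by the second item of Observation~\ref{obs:eme}. Since $u\neq v$, the singletons $\{u\}$ and $\{v\}$ are disjoint, so the third item of Observation~\ref{obs:eme} gives $\partial^2F/\partial y_{\{u\}}\partial y_{\{v\}}\le 0$ at every point of $[0,1]^{2^{\mathcal N}}$. Weights are non-negative, and we may assume they are strictly positive for the relevant $u,v$; otherwise the direction is undefined and zero-weight elements can be added for free beforehand. Hence $g''(t)=-\frac{2}{w(u)w(v)}\,\partial^2F/\partial y_{\{u\}}\partial y_{\{v\}}\ge 0$ on the feasible interval $[t_{\min},t_{\max}]$. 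This interval is exactly where $\mathbf{y}+t\mathbf d\in[0,1]^{2^{\mathcal N}}$, so the observation applies there, and $g$ is convex.

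The only point needing care, and what I view as the main obstacle, is justifying the cross-derivative sign directly if one does not want to rely on Observation~\ref{obs:eme}. For that I would use the representation $F(\mathbf y)=\mathbb E[f(\reali(\mathbf y))]$. Conditioning on all coordinates other than $y_{\{u\}},y_{\{v\}}$ gives $\reali(\mathbf y)=R'\cup\reali(\mathbf y,\{u\})\cup\reali(\mathbf y,\{v\})$ with $R'$ independent of these two coordinates. Then $\partial^2F/\partial y_{\{u\}}\partial y_{\{v\}}=\mathbb E[f(R'+u+v)-f(R'+u)-f(R'+v)+f(R')]$. Each bracket is $\le 0$ by submodularity, and is $0$ if $u$ or $v$ already lies in $R'$. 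This also confirms the claim in the standard multilinear setting. As a consequence, the maximum of $g$ on the interval is attained at an endpoint $t_{\min}$ or $t_{\max}$, which is what the rounding step uses.
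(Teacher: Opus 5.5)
Your proposal is correct and follows the paper's own argument. Both compute $g''$ by the chain rule, drop the pure second derivatives because $F$ is affine in each coordinate, and use $\partial^2F/\partial y_{\{u\}}\partial y_{\{v\}}\le 0$ for the disjoint singletons $\{u\},\{v\}$. You cite the correct item of Observation~\ref{obs:eme}, whereas the paper's text points to Observation~\ref{observation 8}, and your conditioning argument for the sign of the cross derivative is a self-contained check rather than a different route.
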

\begin{proof}
It's easy to know that 
\[
\begin{aligned}
g''(t)
&=
\frac{1}{w(u)^2}
\left.
\frac{\partial^2 F(z)}{\partial z_{\{u\}}^2}
\right|_{z=\mathbf{y}+t\left(\frac{\mathbf e_{\{u\}}}{w(u)}-\frac{\mathbf e_{\{v\}}}{w(v)}\right)}+
\frac{1}{w(v)^2}
\left.
\frac{\partial^2 F(z)}{\partial z_{\{v\}}^2}
\right|_{z=\mathbf{y}+t\left(\frac{\mathbf e_{\{u\}}}{w(u)}-\frac{\mathbf e_{\{v\}}}{w(v)}\right)}
\\[6pt]
&\quad-
\frac{2}{w(u)w(v)}
\left.
\frac{\partial^2 F(z)}{\partial z_{\{u\}} \partial z_{\{v\}}}
\right|_{z=\mathbf{y}+t\left(\frac{\mathbf e_{\{u\}}}{w(u)}-\frac{\mathbf e_{\{v\}}}{w(v)}\right)} .
\end{aligned}
\]
By Observation \ref{observation 8} we have 
\[
g''(t)
=
-\frac{2}{w(u)w(v)}
\left.
\frac{\partial^2 F(z)}{\partial z_{\{u\}}\partial z_{\{v\}}}
\right|_{z=\mathbf{y}+t\left(\frac{\mathbf e_{\{u\}}}{w(u)}-\frac{\mathbf e_{\{v\}}}{w(v)}\right)} \ge 0.
\]
\end{proof}





Our algorithm then updates $\mathbf{y}$ to the vector corresponding to the endpoint of $g$ that achieves the larger function value. This procedure is repeated iteratively until all elements have been processed by \textsc{Relax}, leaving at most one fractional coordinate. At this point, the algorithm compares the two candidate sets: one consisting of all elements rounded to 1, and the other consisting of these elements together with the remaining fractional element, and selects the set with the larger function value. The rounding algorithm comes with the following performance guarantee:

\begin{theorem}
\label{thm:knapsack_rounding}
Given a submodular function $f : 2^{\mathcal N} \to \mathbb{R}^+$ , a knapsack constraint with $w(u)\le \eps B$ for all $u \in \mathcal{N}$ and a fractional solution $\mathbf{y}$ satisfying
$
\sum_{u \in \mathcal N} \mathrm{Mar}_u(\mathbf{y})\, w(u) \le B,
$ our \textsc{Rounding} algorithm returns a discrete solution $S \subseteq \mathcal N$ such that
\[
f(S) \ge F(\mathbf{y})
\quad \text{and} \quad
w(S) \le (1+\varepsilon)B,
\]
and the algorithm uses $O(n\cdot 2^{\operatorname{frac}(\mathbf{y})}) $ queries.
\end{theorem}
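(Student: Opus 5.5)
We argue by tracking three invariants through the \textsc{Rounding} loop: (i) $F(\mathbf{y})$ never decreases, (ii) the weighted marginal $\sum_{u}\Mar_u(\mathbf{y})w(u)$ never increases, and (iii) $\ffrac(\mathbf{y})$ stays bounded by $\ffrac(\mathbf{y}^{\text{in}})+O(1)$, so that every evaluation of $F$ costs $O(2^{\ffrac(\mathbf{y})})$ value queries.

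\textbf{Relax steps.} Each call to \textsc{Relax} preserves $\Mar$ and does not decrease $F$, by the Relax lemma (Lemma 5.1 of \cite{DBLP:conf/stoc/BuchbinderF25}). After $u$ is relaxed, no non-singleton set in the support contains $u$. Hence for $u\in R$ we have $\Mar_u(\mathbf{y})=\mathbf{y}_{\{u\}}$. Since $R$ only grows and relaxing later elements never reintroduces $u$ into a non-singleton set, this identity persists. Each Relax increases $\ffrac$ by at most one. However, each pipage step makes one of $\mathbf{y}_{\{u\}},\mathbf{y}_{\{v\}}$ integral, and the inner loop only relaxes while fewer than two fractional singletons exist in $R$. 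So the number of fractional singleton coordinates in $R$ is always at most two. The non-singleton support only shrinks, since Relax merges those coordinates into existing sets or zeroes them. Therefore $\ffrac(\mathbf{y})\le\ffrac(\mathbf{y}^{\text{in}})+2$ throughout.

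\textbf{Pipage steps.} Moving along $\mathbf{e}_{\{u\}}/w(u)-\mathbf{e}_{\{v\}}/w(v)$ with $u,v\in R$ changes $\Mar_u$ by $t/w(u)$ and $\Mar_v$ by $-t/w(v)$, so $\sum_u \Mar_u w(u)$ is unchanged. At $t_{\max}$ or $t_{\min}$ one of the two coordinates becomes integral. By Lemma~\ref{lem:exchange_convexity}, $g$ is convex on $[t_{\min},t_{\max}]$, and $0$ lies in this interval. Thus $\max\{g(t_{\min}),g(t_{\max})\}\ge g(0)$, so $F$ does not decrease.

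Convexity needs $\partial^2F/\partial y_{\{u\}}\partial y_{\{v\}}\le 0$ and $\partial^2F/\partial y_{\{u\}}^2=0$. These hold by Observation~\ref{obs:eme}, since $\{u\}\cap\{v\}=\emptyset$. We should cite Observation~\ref{obs:eme} here rather than Observation~\ref{observation 8}.

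\textbf{Termination.} Each pipage step integralizes one coordinate, and each inner iteration adds a new element to $R$. So there are at most $2n$ iterations in total, each using $O(2^{\ffrac})$ queries. This gives the $O(n\cdot 2^{\ffrac(\mathbf{y})})$ bound.

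\textbf{End state.} When $R=\NN$, every element has been relaxed, so $\mathbf{y}$ is supported on singletons. At most one singleton $\{z\}$ is fractional, with value $p\in(0,1)$; otherwise another pipage step would occur. Then $\reali(\mathbf{y})$ equals $A\cup\{z\}$ with probability $p$ and $A$ otherwise, where $A$ is the set of integral ones. Hence
\[
F(\mathbf{y})=pf(A\cup\{z\})+(1-p)f(A)\le\max\{f(A\cup\{z\}),f(A)\}.
\]
The returned set $R$ minus fractional elements is exactly $A$. The other candidate should be $A\cup\{z\}$; reading $f(R)$ as $A$ together with the fractional element if it has nonzero weight, the output satisfies $f(S)\ge F(\mathbf{y})$. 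If $p=0$ or there is no fractional element, then $S=A$ and the bound is immediate.

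\textbf{Weight.} By invariant (ii), $w(A)\le\sum_u\Mar_u(\mathbf{y})w(u)\le B$. Adding $z$ costs at most $w(z)\le\eps B$, so $w(S)\le(1+\eps)B$.

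\textbf{Main obstacle.} The delicate step is the bookkeeping that $\Mar_u(\mathbf{y})=\mathbf{y}_{\{u\}}$ for all relaxed $u$ after later Relax operations and pipage moves. This identity is what makes the pipage direction exactly preserve the weighted marginal. I would prove it from the second bullet of the Relax lemma together with the fact that pipage touches only singleton coordinates of relaxed elements.
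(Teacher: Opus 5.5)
Your proposal is correct and follows the paper's own argument. $F$ does not decrease under \textsc{Relax}, and it does not decrease under the pipage move by convexity along $\mathbf{e}_{\{u\}}/w(u)-\mathbf{e}_{\{v\}}/w(v)$; you are right that Observation~\ref{obs:eme}, not Observation~\ref{observation 8}, is the correct citation there. The weighted marginal is preserved, $\ffrac(\mathbf{y})$ grows by at most two, and the final choice between the integral set and that set plus the single leftover fractional element (of weight at most $\eps B$) gives both bounds. Your invariant $\Mar_u(\mathbf{y})=\mathbf{y}_{\{u\}}$ for relaxed $u$ makes explicit what the paper leaves implicit.

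One small bookkeeping fix is needed. \textsc{Relax}$(\mathbf{y},u)$ can move a fractional coordinate $\mathbf{y}_{\{u,w\}}$ onto a previously zero singleton $\mathbf{y}_{\{w\}}$ of an unrelaxed $w$, so ``the non-singleton support only shrinks'' does not by itself bound the count. Instead, charge each fractional $\mathbf{y}_S$ with $\emptyset\neq S\subseteq\NN\setminus R$ to a distinct fractional input coordinate $T$ with $T\setminus R=S$. Alternatively, use the paper's potential argument: $\ffrac(\mathbf{y})-|S|$ never increases while $|S|\le 2$.
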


\begin{proof}
We first analyze the behavior of the algorithm when $\mathbf{y}$ is updated, specifically during lines 9–12.
 By convexity, one of the two endpoints of this interval achieves a value at least as large as the current one, allowing us to round either $u$ or $v$ to an integral value without decreasing the objective. Moreover, moving along this direction preserves the weighted sum
\[
\sum_{u \in \mathcal N} \mathrm{Mar}_u(\mathbf{y})\cdot w(u),
\] 
Therefore, the knapsack capacity is maintained.

During the execution, applying \textsc{Relax} may increase $\operatorname{frac}(\mathbf{y})$ by 1, which also increases $|S|$ by 1. Each step along a convex exchange direction decreases both $\operatorname{frac}(\mathbf{y})$ and $|S|$ by 1. Since $|S| \le 2$ is always maintained, it follows that $\operatorname{frac}(\mathbf{y})$ can increase by at most 2 relative to its initial value.

At the final iteration, it is possible that only a single element remains fractional. In this case, the current objective is a convex combination of the values obtained by either selecting or discarding this element. Hence, we can choose the better integral option, ensuring
\[
f(S) \ge F(\mathbf{y}).
\]

Finally, since each rounding step preserves the weighted sum and only one element may remain fractional at the end, we have
\[
w(S) \le (1+\varepsilon)B.
\]

The total number of queries is $O(n \cdot 2^{\operatorname{frac}(\mathbf{y})})$, as each evaluation of $F$ requires $2^{\operatorname{frac}(\mathbf{y})}$ queries and there are at most $n$ elements to process.
\end{proof}
This final decision may cause the knapsack capacity to be exceeded; however, the violation is bounded by at most $\varepsilon B$. 
Since the rounding procedure is executed with an initial capacity of $(1-\varepsilon)B$, such a bounded capacity violation can be safely tolerated.


\subsection{Proof of the Main Result}
In light of the above descriptions, we are now ready to combine all the preceding results to prove Theorem~\ref{knapsack_main}.

\begin{proof}[Proof of Theorem ~\ref{knapsack_main}]

Algorithm~\ref{Knapsack:whole}  first enumerates all sets \(E_i\) with \(|E_i| \le \varepsilon^{-2}\), which ensures that the set constructed in Lemma~2, denoted by \(E\), is included among the enumerated candidates. At this point, our algorithm then applies an optimization algorithm to the resulting problem instance with objective function \(g(S) = f(S \cup E_i)\) and knapsack capacity \((1 - \eps)(B - w(E))\).
Then we obtain a vector $\mathbf{y}$ for $G(\cdot)$, where $G$ is the extended multilinear extension of $g$. Due to the property of $E$, this guarantees that
\[
G(\mathbf{y}) \ge \left(\frac{1}{e} - O(\eps)\right) g(\OPT) 
\ge \left(\frac{1}{e} - \eps\right) f(\OPT),
\]
and
\[
\sum_{u \in \mathcal N} \mathrm{Mar}_u(\mathbf{y})\cdot w(u) 
\le (1 - \eps)(B - w(E)).
\]

We then apply a rounding procedure, which returns two discrete solutions $R$ and $R + u$ such that
\[
\max\{g(R), g(R+u)\} \ge g(\mathbf{y})
\]
and
\[
w(R+u) \le B - w(E).
\]
We select the one with the larger value and denote it by $S$.

Finally, our algorithm outputs the set $S_i \cup E_i$ with the largest function value among all candidates, which in particular includes the set $S \cup E$ discussed above. Since $w(S \cup E) \le B$ and 
\[
f(S \cup E) = g(S) \ge g(\mathbf{y}) \ge \left(\frac{1}{e} - \eps\right) f(\OPT),
\]
both the feasibility and the approximation ratio are satisfied.

For the query complexity, the enumeration step first incurs a multiplicative overhead of $2^{O(\eps^{-2})}$.

In the optimization phase, we always maintain $\operatorname{frac}(\mathbf{y}) \le \eps^{-4}$. The process runs for $\eps^{-3}$ rounds, and in each round the algorithm evaluates the function for $n^2$ candidate directions. Since querying the value of $g(\mathbf{y})$ requires $2^{\operatorname{frac}(\mathbf{y})}$ value queries to $g$, and $\operatorname{frac}(\mathbf{y}) \le \eps^{-4}$ throughout the algorithm, each such evaluation costs at most $2^{\eps^{-4}}$ queries. Therefore, the total number of queries in this phase is $O( n^2 \cdot  2^{\eps^{-4}})$. Then, the rounding algorithm takes a fractional solution with $\operatorname{frac}(\mathbf{y}) \le \eps^{-4}$ as input.Each evaluation requires $2^{\operatorname{frac}(\mathbf{y})} \le 2^{\eps^{-4}}$ queries, resulting in $O(n \cdot 2^{\eps^{-4}})$ queries in total.
Combining all parts, the overall number of value queries is
\[
n^{O(\eps^{-2})} \cdot \left(O( n^2 \cdot 2^{\eps^{-4}}) + O(n \cdot 2^{\eps^{-4}})\right) = O_{\eps}(n^{\eps^{-2}}).
\]
\end{proof}

\section{Conclusion}

In this work, based on the optimization--then--rounding paradigm over the extended multilinear extension, we design deterministic algorithms for maximizing non-monotone submodular functions under matroid and knapsack constraints. Our algorithms achieve improved approximation ratios compared with the best previously known deterministic results.

For the future, the extended multilinear extension framework still has broader potential for further study, although many related questions remain challenging. One possible direction is to derandomize the aided continuous greedy algorithm for knapsack constraints, which would lead to a deterministic algorithm with an approximation ratio of $0.385$. Another direction is to extend the constraint to a multi-dimensional knapsack. We have verified that in the optimization phase, the multiplicative updates technique can be applied within the extended multilinear extension framework. Nevertheless, a deterministic rounding algorithm tailored for the multi-dimensional knapsack constraint is still required. Such a rounding procedure must simultaneously maintain feasibility for all knapsack constraints while carefully controlling the number of fractional coordinates in the vector.

\section*{Acknowledgments}
This work was supported in part by the National Natural Science Foundation of China Grants Nos. 62325210, 12501450, 62272441.  


\bibliographystyle{abbrv}
\bibliography{deterministic}



\end{document}